\numberwithin{equation}{section}
\numberwithin{figure}{section}
\theoremstyle{plain}
\newtheorem{thm}{\protect\theoremname}[section]
\theoremstyle{definition}
\newtheorem{defn}[thm]{\protect\definitionname}
\theoremstyle{remark}
\newtheorem{claim}[thm]{\protect\claimname}
\theoremstyle{plain}
\newtheorem{lem}[thm]{\protect\lemmaname}
\theoremstyle{remark}
\theoremstyle{plain}
\newtheorem{corollary}[thm]{\protect\corollaryname}
\theoremstyle{plain}
\newtheorem{proposition}[thm]{\protect\propositionname}
\theoremstyle{remark}
\providecommand{\claimname}{Claim}
\providecommand{\definitionname}{Definition}
\providecommand{\lemmaname}{Lemma}
\providecommand{\remarkname}{Remark}
\providecommand{\theoremname}{Theorem}
\providecommand{\corollaryname}{Corollary}
\providecommand{\propositionname}{Proposition}
\providecommand{\examplename}{Example}
\newcommand{\reals}{\mathbb{R}}
\newcommand{\RL}{\mathbb{R}^L}
\newcommand{\E}{\mathbb{E}}
\newcommand{\F}{F}
\newcommand{\DFT}{\operatorname{DFT}}
\newcommand{\SNR}{\operatorname{SNR}}
\DeclareMathOperator*{\argmin}{argmin}
\DeclareMathOperator*{\tr}{tr}
\DeclareMathOperator*{\Cov}{Cov}
\newcommand{\N}{{\mathcal{N}}}
\newcommand{\Z}{{\mathbb{Z}}}
\newcommand{\MSE}{\mathrm{MSE}}
\newcommand{\kk}{{\mathbf{k}}}
\newcommand{\OO}{{O}}
\newcommand{\ones}{{\mathbf{1}}}
\newcommand{\modminus}{-}
\newcommand{\modplus}{+}
\newcommand{\hatM}{\widehat{\hspace{-1pt}M}\phantom{\hspace{-11pt}M}}
\newcommand{\LL}{{\mathcal{L}}}
\newcommand{\Y}{\mathbf{Y}}
\newcommand{\X}{\mathbf{X}}
\def\aa{\mathbf{a}}
\newcommand{\error}{G}
\DeclareMathOperator*{\diag}{diag}
\DeclareMathOperator*{\rank}{rank}
\xpatchcmd{\algorithmic}{1.2em}{1.5em}{}{}
\begin{document} 

\title{Multireference Alignment is Easier with an Aperiodic Translation Distribution}

\author[1,2]{Emmanuel Abbe}
\author[1]{Tamir Bendory}
\author[1]{William Leeb}
\author[1]{Jo\~ao M. Pereira}
\author[1]{Nir Sharon}
\author[1,3]{Amit Singer\thanks{
	EA was partly supported by the Bell Labs Prize, the NSF CAREER Award CCF--1552131, ARO grant W911NF--16--1--0051, NSF Center for the Science of Information CCF--0939370, and the Google Faculty Research Award. TB, WL, JP, NS, and AS were partially supported by Award Number R01GM090200 from the NIGMS, the Simons Foundation Investigator Award and Simons Collaboration on Algorithms and Geometry, the Moore Foundation Data-Driven Discovery Investigator Award, AFOSR FA9550-17-1-0291 and NSF BIGDATA Award IIS-1837992. WL is now at School of Mathematics, University of Minnesota, Twin Cities, MN, USA. NS is now at School of Mathematical Sciences, Tel Aviv University, Tel Aviv, Israel.
}}
 
 \affil[1]{The Program in Applied and Computational Mathematics,
    Princeton University, Princeton, NJ, USA}
\affil[2]{Electrical Engineering Department, Princeton University, Princeton, NJ, USA}
\affil[3]{Department of Mathematics, Princeton University, Princeton, NJ, USA}

\maketitle

\begin{abstract}
In the multireference alignment model, a signal is observed by the action of a random circular translation and the addition of Gaussian noise. The goal is to recover the signal's orbit by accessing multiple independent observations. Of particular interest is the sample complexity, i.e., the number of observations/samples needed in terms of the signal-to-noise ratio (the signal energy divided by the noise variance) in order to drive the mean-square error (MSE) to zero.
Previous work showed that if the translations are drawn from the uniform distribution, then, in the low SNR regime, the sample complexity of the problem scales as $\omega(1/\SNR^3)$. In this work, using a generalization of the Chapman--Robbins bound for orbits and expansions of the $\chi^2$ divergence at low SNR, we show that in the same regime the sample complexity for any aperiodic translation distribution scales as $\omega(1/\SNR^2)$. This rate is achieved by a simple spectral algorithm. We propose two additional algorithms based on non-convex optimization and expectation-maximization. We also draw a connection between the multireference alignment problem and the spiked covariance model.
\end{abstract}

\begin{IEEEkeywords} 
multireference alignment, spectral algorithm, method of moments, spiked covariance model, non-convex optimization, expectation-maximization, cryo--EM
\end{IEEEkeywords}


\section{Introduction}
The problem of multireference alignment (MRA) arises in a variety of engineering and scientific applications, among them structural biology~\cite{diamond1992multiple,theobald2012optimal,park2011stochastic,park2014assembly,scheres2005maximum}, radar~\cite{zwart2003fast,gil2005using}, robotics~\cite{rosen2016certifiably} and image processing~\cite{dryden1998statistical,foroosh2002extension,robinson2009optimal}. In these applications, one aims to estimate a signal from its translated or rotated noisy copies. The problem also serves as a simplified model for more general problems like single-particle reconstruction by cryo--electron microscopy (cryo--EM), in which a three-dimensional density is recovered from two-dimensional projections taken at unknown viewing directions~\cite{bartesaghi20152,sirohi20163,singer2018mathematics}.

In this paper, we focus on the one-dimensional discrete MRA problem on a circle. In this model, we acquire $N$ measurements from the model
\begin{equation}\label{eq:mra}
Y_j = R_{S_j}x + \sigma G_j,\quad j= 1,\dots,N,
\end{equation}
where the $G_j$ are i.i.d and drawn from $\N(0,I_L)$, i.e. $G_j\in \RL$ and its entries are i.i.d standard Gaussian variables. The operator $R_{s}$ translates a signal $x\in\RL$ circularly by $s$ elements, namely, $(R_{s}x)[i]=x[i-s]$, where all indices should be considered as modulo $L$. The translations $S_j$ are i.i.d. and drawn from some unknown distribution $\rho$ on $\Z_L$. Figure~\ref{fig:intro} illustrates the MRA problem in different noise levels.

Previous approaches for estimating $x$ from~\eqref{eq:mra} can be broadly classified into two main categories. The first approach is based on estimating the translations $S_j$, aligning all observations, and averaging them to suppress the noise. However, alignment is too erroneous in low signal--to--noise ratio ($\SNR$)~\cite{aguerrebere2016fundamental,bendory2018toward}, defined here as $\SNR := \|x\|^2 / \sigma^2$. Note that while the translations $S_j$ are unknown, their estimation is not the primary goal of the problem. The translations are referred to as \emph{nuisance variables}.

An alternative approach aims at estimating the signal $x$ directly. Existing methods bypass the need to estimate the translations by employing expectation-maximization (EM) methods or by using features that are invariant under translation~\cite{bendory2017bispectrum}. Section~\ref{sec:related_work} is devoted to a detailed discussion on existing results and algorithms for MRA. In this paper, we take a different route by trying to estimate both the signal and the distribution of translations $\rho$ simultaneously. When $\rho$ is aperiodic, it turns out this is an easier problem than ignoring the fact that $\rho$ is not uniform and estimating $x$ alone.

In this paper we focus on the regime where both the number of observations and the variance of the noise are diverging. More specifically, our goal is to determine the sample complexity of \eqref{eq:mra}, which we define to be the minimal number of measurements, as a function of the SNR, required such that there is a sequence of estimators $\{\hat X_N\}$ of $x$ with mean square error (MSE) converging to $0$ as $N$ diverges. We define the MSE as
\begin{equation} \label{eq:mse_def}
\MSE=\frac1{\|x\|_2^2}\E\left[\min_{s\in\mathbb{Z}_L}\|R_s\widehat X - x\|^2_2\right],
\end{equation}
where the expectation is taken over the estimator $\widehat X$, which is a function of the random observations $Y_j$ with distribution determined by \eqref{eq:mra}. Allowing for a cyclic shift in \eqref{eq:mse_def} is intrinsic to the problem: if we apply a shift $R_s$ to $x$, and its inverse $R_{-s}$ to the right of $\rho$, we will produce exactly the same samples, thus there is no estimator $\hat X$ that is able to distinguish the observations that originate from $x$ and the ones from $R_s x$.

In~\cite{bandeira2017optimal}, it was proven that when $\rho$ is the uniform distribution, then in the low SNR regime, the sample complexity for estimating signals with non-vanishing discrete Fourier transform (DFT) is $\omega(1/\SNR^3)$. In this work, we show that if the translation distribution $\rho$ is aperiodic, meaning there is no $1 \le \ell \le L-1$ where $\rho[k+\ell] = \rho[k]$ for all $0 \le k \le L-1$, the sample complexity for estimating these signals is $\omega(1/\SNR^2)$. This rate is optimal and can be provably achieved by a spectral algorithm based on the first two moments of the data. The main result of this paper is stated as follows:

\vspace{0.3cm}
\textbf{Main Result (informal)}: \emph{Consider the model~\eqref{eq:mra} and suppose that $x\in\RL$ has a non-vanishing $\DFT$. When $\rho$ is aperiodic, the sample complexity of the MRA problem is $\omega(1/\SNR^2)$. This sample complexity is achieved by a spectral algorithm, detailed in Algorithms \ref{alg:spec00} and \ref{alg:SpectralAlg}, based on the first two moments of the data. Conversely, the sample complexity for any periodic distribution, in particular the uniform distribution, scales like $\omega(1/\SNR^3)$.
}

\vspace{0.3cm}

The proposed framework is based on a reliable estimation of the first two moments of the data. Hence, it requires only one pass over the measurements, low storage resources and is computationally efficient. To estimate the signal from the estimated moments, we propose, in addition to the aforementioned spectral algorithm, a non-convex least-squares (LS) algorithm. While the problem is non-convex, it empirically converges to the underlying signal, in the absence of noise, from a random initialization. As an alternative to the method of moments, we also examine an expectation-maximization (EM) algorithm.

The outline of the paper is as follows. Section~\ref{sec:related_work} provides a detailed discussion of existing results and algorithms for MRA.
In Section~\ref{sec:information_limit} we prove that the sample complexity is lower bounded by $\omega(1/\SNR^2)$. We also show that the sample complexity of any periodic distribution of translations with a period of less than $L/2$ scales as $\omega(1/\SNR^3)$. This is an extension of the results of~\cite{bandeira2017optimal}, which considered the uniform distribution case. 
In Section~\ref{sec:spectral_algorithm} we show that if the distribution is aperiodic, or is periodic with period $L/2$, then any signal with non-vanishing $\DFT$ can be estimated from its first and second moments, achieving the optimal estimation rate.
Section~\ref{sec:spike_model} draws the connections between the MRA model and the well-studied spiked covariance model~\cite{johnstone2001distribution, paul2007asymptotics, gavish-donoho-2017, benaych2012singular,dobriban2017optimal}.
Section~\ref{sec:algorithms} discusses and analyzes alternative algorithmic methods based on LS and EM.
Section~\ref{sec:numerics} examines the performance of the proposed algorithms by numerical simulations. 
Section~\ref{sec:conlusion} concludes the paper and proposes potential future extensions.

\begin{figure}     
    \centering    
    \includegraphics[scale=0.535]{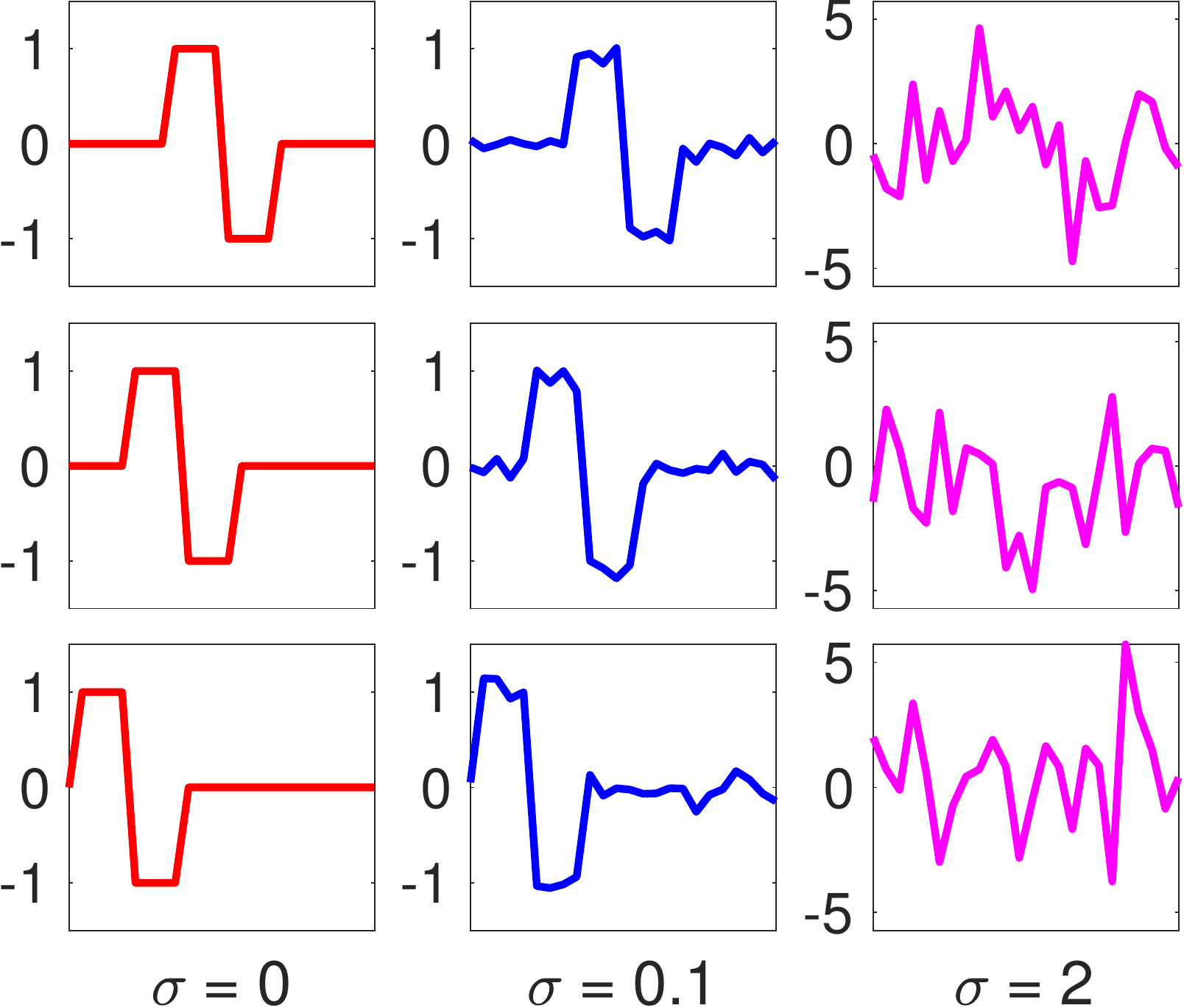}
    \protect    
    \caption{\label{fig:intro} The figures illustrates the MRA 
        measurements according to~\eqref{eq:mra}. The left column presents three measurements with different translations in the absence of noise. In this case, because the solution is defined up to translation, each measurement is a solution. The middle and right columns show measurements with the same translations and low and high noise levels, respectively (note the different scales of the y-axis).}
\end{figure}

Throughout the paper we use the following notation. We will use capital letters for random variables, and lower case letter for instances of this random variables. An estimator of a signal $z\in\RL$ is denoted by $\widehat{Z}$. We assume throughout that all signals are defined cyclically; that is, all indices should be considered modulo $L$. The indices range from $0$ to $L-1$. The DFT of $z$ is defined by $(\F z)[k] = \sum_{i=0}^{L-1}z[i] e^{-2\pi \iota k i /L}$, where $\iota=\sqrt{-1}$. We use $C_z$ for a circulant matrix whose first column is $z$, namely, $C_z[i,j] = z[i-j]$. A diagonal matrix whose diagonal is $z$ is denoted by $D_z$. We reserve $\E, \ast$ and $\odot$ for expectation, convolution and entry-wise product, respectively. The $L$--simplex is denoted by $\Delta^L$. That is to say, $z\in\Delta^L$ implies that $z[i]\geq 0$ for all $i$ and $\sum_{i=0}^{L-1}z[i]=1$.

\section{Related work} \label{sec:related_work}

\subsection{Multireference alignment via synchronization} \label{sec:alignment}

Given the translations $s_j$, the MRA problem~\eqref{eq:mra} is easy. One trivial unbiased estimator of $x$ is given by aligning all measurements and then averaging to suppress the noise, namely,
\begin{equation} \label{eq:knonw_translations}
    \widehat{X} = \frac{1}{N}\sum_{j=1}^NR_{s_j}^{-1}Y_j.
\end{equation} 
The variance of this estimator is $\sigma^2/N$ and therefore the number of measurements $N$ needs to scale like $\sigma^2$ to retain a constant estimation error. In other words, the sample complexity grows like $\omega(1/\SNR)$. One can replace~\eqref{eq:knonw_translations} with other estimators, such as James-Stein shrinkage~\cite{james1961estimation,efron1973stein,efron1975data}, which might improve the numerical performance with finite number of samples, but would not change the asymptotic sample complexity.
In practice, we do not have access to the underlying translations. However, if one can obtain a reliable estimation of the unknown translations ${\hat{s}_j}$, then one can estimate $x$ by the sample mean as in~\eqref{eq:knonw_translations} at sample complexity $\omega(1/\SNR)$. This motivates the design of synchronization methods that aim to estimate the translations $s_j$ from the data $y_j$.

A na\"ive approach for synchronization could be to fix one observation as a template, say $Y_1$, and estimate the relative translation of each $Y_j$, with respect to $Y_1$, by the peak of their cross-correlation:
\begin{equation*}
\hat{S}_j = \arg\max_s \sum_{i=0}^{L-1}Y_1[i]Y_j[i+s].
\end{equation*}
This approach may work in the high $\SNR$ regimes, but fails as the noise level increases (see for instance Figure I.1 in~\cite{bendory2017bispectrum}). 
Many alternative synchronization methods were proposed in the literature. For instance, the angular synchronization method aims at aligning all pairwise observations simultaneously~\cite{singer2011angular,boumal2016nonconvex,perry2018message,chen2018projected,bandeira2014tightness,zhong2018near}. Other methods propose to align through different semidefinite programs (SDPs)~\cite{bandeira2015non,bandeira2014multireference,chen2014near,bandeira2016low}. However, alignment is impossible at the low SNR regime, no matter how many measurements are acquired \cite{bendory2018toward}. For instance, for the continuous counterpart of~\eqref{eq:mra}, it has been shown that the Cr\'amer--Rao lower bound is proportional to $\sigma^2$ and does not depend on $N$. This bound holds even if the sought signal is known~\cite{aguerrebere2016fundamental}.

\subsection{Multireference alignment in low SNR}
\label{subsec:MRA_low_SNR}

This section reviews recent works on MRA in the low SNR regime, in which methods based on alignment fail.
The key idea is to estimate the signal directly, without estimating the translations beforehand. As will be emphasized throughout, previous works did not consider the translation distribution $\rho$, and either assumed or enforced it to be uniform. 

In~\cite{bandeira2017optimal}, it was shown that if the translations are uniformly distributed, namely, $S\sim \text{Uniform}[0,1,\dots,L-1]$, then the number of measurements needs to scale like $\omega(1/\SNR^3)$ for the estimator to converge in $L^2$ to the true signal. A follow-up paper~\cite{perry2017sample} showed that this rate can be achieved by a tensor decomposition algorithm.
The analysis of the uniform distribution is of particular interest since, no matter what $\rho$ is, one can always enforce it to be uniform. This can be done simply by reshuffling all measurements by $z_j=R_{S^\prime_j}y_j$, where $S^\prime_j$ are drawn from the uniform distribution. The new set of measurements $z_j$ obeys the MRA model~\eqref{eq:mra} with uniform translation distribution. However, as will be shown, this is in general a bad strategy, since the uniform distribution has a sample complexity scaling as $\omega(1/\SNR^3)$.

From the algorithmic point--of--view, a recent paper~\cite{bendory2017bispectrum} proposes a method that completely overcomes the need to estimate the translations. The core idea is to estimate features of the underlying signal that are invariant under cyclic translation. Particularly, it was proposed to estimate the mean, power spectrum and bispectrum of the signal from the moments of the data. Since these invariant features are polynomials in the signal with degree at most three, they can be estimated at sample complexity growing like $\omega(1/\SNR^3)$. Using these invariant features, one can recover the signal as $N \to \infty$ using a variety of algorithms~\cite{bendory2017bispectrum}. In~\cite{boumal2018heterogeneous}, it was shown that a similar technique can be used to estimate several signals simultaneously from heterogeneous samples (see also~\cite[Section~5]{perry2017sample}). Since the invariant feature technique requires only one pass over the data, it can be performed in a streaming mode, can be parallelized, requires low storage resources of $\mathcal{O}(L^2)$, and has low computational load. The framework proposed in this paper is also based on estimating moments of the data and therefore enjoys the same advantages; however, since we only require second-order moments, we bring the sample complexity down to $\omega(1/\SNR^2)$. 

Another approach for MRA is to apply an EM algorithm~\cite{dempster1977maximum}. EM is an iterative algorithm that aims to find the marginalized maximum likelihood estimator and is used ubiquitously in many statistical models. 
For the MRA model~\eqref{eq:mra}, and under the assumption that the translations are drawn from the uniform distribution, this algorithm takes a simple form and consists of two steps at each iteration~\cite{bendory2017bispectrum}. Given a current estimation $x_{k-1}$, the first step (called the E-step) computes a set of weights which can be understood as the translation distribution of each measurement $y_j$, if $x_{k-1}$ was the underlying signal. These weights are computed by 
\begin{equation*}
w_k^{\ell,j} = C_k^{j} e^{-\frac{1}{2\sigma^2}\| R_\ell x_{k-1}-y_j \|_2^2 },
\end{equation*}  
where $C_k^{j}$ is a normalization factor so that $\sum_{\ell} w_k^{\ell,j} = 1$. Then, the signal estimation is updated by marginalizing over the distributions and averaging (called the M-step):  
\begin{equation} \label{eq:standard_EM}
x_k = \frac{1}{N} \sum_{j=1}^{N}\sum_{\ell=0}^{L-1} w_k^{\ell,j}R_\ell^{-1}y_j.
\end{equation}
The EM algorithm enjoys an excellent numerical performance; however, its computational load and storage requirements are heavy since it passes through all the data at each iteration. In Section~\ref{sec:NU_EM}, we modify the standard EM algorithm to take the distribution into account.

\section{Information theoretic lower bound} \label{sec:information_limit}

In this section, we provide lower bounds for the MSE of an estimator of the signal in terms of the $\SNR$ and the number of observations $N$. In particular, we show that under mild conditions on the signal the MSE is bounded away from zero if $N=\Omega(1/\SNR^2)$. As described in Section~\ref{sec:spectral_algorithm}, the $\MSE$ of Algorithm~\ref{alg:SpectralAlg} converges to $0$ if the number of measurements grows like $\omega(1/\SNR^2)$. In addition, if the distribution is periodic, the MSE is bounded away from zero if $N=\OO(1/\SNR^3)$. The framework proposed in~\cite{bendory2017bispectrum} and described in Section~\ref{sec:related_work} achieves this sample complexity for any distribution.

Recall that we can estimate the signal only up to cyclic translation. 
We define the best alignment of $\widehat X$ with $x$ by
\begin{equation}\label{eq:Rxdef}
\phi_{x}(\widehat{X}) = \argmin_{z\in \{ R_s\widehat{X}\}_{s\in\Z_L}} \|z-x\| .
\end{equation}
Accordingly, we write \eqref{eq:mse_def} as
\begin{align}
\textrm{MSE}&=\frac{1}{\|x\|^2}\E\left[\|\phi_{x}(\widehat X)-x\|^2_2\right]. \label{eq:MSE+Rx}
\end{align}
Since we are interested in estimators that converge to a cyclic shift of $x$ in $L^2$ as $N$ diverges, we only consider estimators which are consistent, i.e., $\phi_{x}(\widehat X)\rightarrow x$ as $N \to \infty$. However the information lower bounds presented in this paper can be adapted to biased estimators (see Theorem~\ref{thm:ChapmanRobbins}). We now present the main results of this section as follows:
\begin{thm}\label{thm:corsigma} 
	Assume that $x$ is not a constant vector. 
	If $\widehat X$ is a consistent estimator of $x$, then
	\begin{equation}\label{eq:corsigma4}
	\MSE \ge \frac{1}{8N}\frac1{\SNR^2}-\OO\left(\frac1{N\SNR^{1.5}}\right).
	\end{equation}
	Moreover, if $\rho$ is periodic, with a period $\ell<\frac L2$, then
	\begin{equation}\label{eq:corsigma6}
	\MSE \ge \frac{1}{54N}\frac{L-2\ell}{2\ell}\frac1{\SNR^3}-\OO\left(\frac1{N\SNR^{2.5}}\right).
	\end{equation}
		
\end{thm}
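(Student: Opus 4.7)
The plan is to combine an orbit version of the Chapman--Robbins inequality (Theorem~\ref{thm:ChapmanRobbins}) with a low-SNR expansion of the $\chi^2$ divergence. Schematically, that inequality provides, for any alternative parameter pair $(x',\rho')$ with $x'$ outside the orbit of $x$,
$$\MSE \cdot \|x\|^2 \;\gtrsim\; \frac{\|\phi_x(x')-x\|^2}{N\,\chi^2(P_{x',\rho'}\,\|\,P_{x,\rho})},$$
so the job is to exhibit a perturbation of $(x,\rho)$ that maximises the right-hand side in the appropriate $\sigma$-scaling.

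For inequality~\eqref{eq:corsigma4} I would take $x' = x + t h$ and $\rho' = \rho + tk$ for small $t>0$ and vectors $h,k$ with $\sum_s k[s] = 0$ so that $\rho'$ remains a probability vector. Expanding the Gaussian mixture $p_{x,\rho}(y) = (2\pi\sigma^2)^{-L/2}\sum_s \rho[s]\exp(-\|y-R_s x\|^2/(2\sigma^2))$ in powers of $1/\sigma$ gives an asymptotic expansion of $\chi^2(P_{x',\rho'}\,\|\,P_{x,\rho})$ whose leading $1/\sigma^2$ term is proportional to $\|m_1(x',\rho') - m_1(x,\rho)\|^2$, where $m_1(x,\rho) = \rho * x$ is the mean of $Y$. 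The key step is to pick $(h,k)$ satisfying the linear equation $\rho * h + k * x = 0$, which kills this leading term; since $x$ is non-constant this linear system admits solutions with $h$ not parallel to the orbit tangent $\partial_s R_s x|_{s=0}$, so $x+th$ lies outside the orbit of $x$ for small $t$. The leading non-vanishing term of the $\chi^2$ expansion then drops to $O(t^2\|x\|^2/\sigma^4)$, coming from second-moment mismatches and $t^2$-order first-moment contributions, while $\|\phi_x(x+th)-x\|^2 = \Theta(t^2)$. Substituting into Chapman--Robbins yields $\MSE \ge c\,\sigma^4/(N\|x\|^4) = c/(N\SNR^2)$, and tracking sub-leading terms in the expansion produces the $O(1/(N\SNR^{1.5}))$ correction.

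For the periodic case~\eqref{eq:corsigma6} with period $\ell < L/2$, I would push the construction one order further by enforcing that \emph{both} the first moment $m_1$ and the second moment $M_2(x,\rho) := \E[YY^T]$ of the two models agree to first order in $t$. Periodicity of $\rho$ (equivalently, $\hat\rho$ is supported on the sublattice of multiples of $L/\ell$) enlarges the kernel of the linearised ``match $m_1$ and $M_2$'' map, and the strict inequality $\ell < L/2$ leaves at least $L-2\ell$ directions in $x$-space that are simultaneously invisible to $m_1$ and $M_2$ yet transverse to the orbit of $x$---this counting is what produces the prefactor $(L-2\ell)/(2\ell)$. With the first two moment contributions suppressed, the leading $\chi^2$ term drops to $O(t^2 \|x\|^4/\sigma^6)$, and Chapman--Robbins delivers $\MSE \ge c(L-2\ell)/(2\ell\, N\SNR^3)$, with the stated remainder again coming from higher-order contributions.

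The main obstacle is the moment-killing construction in the periodic case: simultaneously annihilating the first- and second-moment variations, keeping $\rho + tk$ in the simplex, and ensuring that $x+th$ is not in the orbit of $x$. I expect this to require a careful Fourier-side analysis exploiting that $\hat\rho$ lives on a sublattice of $\Z_L$, using the non-vanishing of $\hat x$ to parametrise the moment-invisible directions, and the condition $\ell < L/2$ precisely to guarantee that at least one such direction is transverse to the orbit---at $\ell = L/2$ every such direction turns out to be tangent and the bound degenerates, consistent with the prefactor $(L-2\ell)/(2\ell)$ vanishing there.
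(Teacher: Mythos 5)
Your plan follows essentially the same route as the paper: the orbit Chapman--Robbins bound combined with a low-SNR expansion of the $\chi^2$ divergence in terms of moment-tensor mismatches, with a joint infinitesimal perturbation of $(x,\rho)$ chosen to annihilate the first moment (and, in the periodic case, also the second), exactly as in Corollary~\ref{cor:CramerRao} and Lemmas~\ref{lem:falldominoesfisher} and \ref{lem:fourierautocorr}. The paper's explicit choices are $z=x-\frac{\ones^Tx}{L}\ones$ and $\theta=\frac1L\ones-\rho$ for \eqref{eq:corsigma4}, and for \eqref{eq:corsigma6} a family of $\lceil\frac{b-2}{2}\rceil$ orthogonal Fourier-localized directions $z_i$ supported on frequencies $\equiv\pm i \pmod{b}$ with $\theta_i=0$ and $b=L/\ell$, whose count yields the prefactor $\frac{L-2\ell}{2\ell}$ --- precisely the construction you identify as the remaining obstacle.
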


Equation \eqref{eq:corsigma4} implies that if the number of measurements $N$ is $\OO(1/\SNR^2)$ and $\rho$ is aperiodic, the MSE is bounded away from $0$, thus the sample complexity is lower bounded by $\omega(1/\SNR^2)$. Similarly, when $\rho$ is periodic with a period $\ell<\frac L2$, \eqref{eq:corsigma6} implies that the sample complexity is lower bounded by $\omega(1/\SNR^3)$. 

Note that previous work~\cite{bandeira2017optimal} derived the sample complexity for the uniform distribution of translations. Theorem~\ref{thm:corsigma} extends it to any distribution. In addition, we extend~\cite{bandeira2017optimal} by providing the constant that multiplies $1/SNR^3$, for the uniform distribution case.

In the rest of this section, we develop the main tools required to prove Theorem~\ref{thm:corsigma}. Specifically, we start by introducing an auxiliary notation and definitions.
Then, in Section~\ref{sec:ChapmanRobbins}, we use an adaptation of the Chapman-Robbins lower bound~\cite{ChapRobb}, which is a generalization of the Cram\'er-Rao bound~\cite{CramerRao}, to derive a lower bound on the MSE in terms of the $\chi^2$ divergence. Then, in Section~\ref{sec:Fisherautocorr}, we express the $\chi^2$ divergence in terms of the Taylor expansion of the posterior probability density and the moment tensors . Finally in section \ref{sec:lowerboundMRA} we combine the results from Section~\ref{sec:ChapmanRobbins} and Section~\ref{sec:Fisherautocorr} to obtain a general lower bound for MRA, which we apply for the case when $\rho$ is aperiodic and periodic, respectively. The final details of the proof of Theorem~\ref{thm:corsigma} are given in Appendix~\ref{proof:corsigma}. 

\subsection{Notation and definitions}

Let $Y^{N}\in \reals^{L\times N}$ be the collection of all measurements as columns in a matrix. Let us denote by $f^{N}_{x,\rho}$ the probability density of the posterior distribution of $Y^{N}$, 
\begin{equation}\label{eq:parindepence}
f^{N}_{x,\rho}(y^N)=\prod_{j=1}^N f_{x,\rho}(y_j),
\end{equation} 
and the expectation of a function $g$ of the measurements under the measure $f^{N}_{x,\rho}$ by
\begin{equation*}
\E_{x,\rho}\left[g\left(Y^{N}\right)\right]:=\int_{\reals^{L\times N}} g\left(y^{N}\right) f^{N}_{x,\rho}\left(y^{N}\right)dy^{N}.
\end{equation*}
For ease of notation, we write $\E\left[g\left(Y^N\right)\right]$ when the signal and distribution are implicit. The bias-variance trade-off of the MSE is given by
\begin{equation}\label{eq:BiasVar}
\MSE=\frac{\tr(\Cov[\phi_x(\widehat X)])}{\|x\|^2}+
\frac{\|\E[\phi_{x}(\widehat X)]-x\|^2}{\|x\|^2},
\end{equation}
with
\begin{equation}\label{eq:Cov}
\Cov[\phi_x(\widehat X)]=\E\left[\phi_x(\widehat X)\phi_x(\widehat X)^T\right]-\E[\phi_{x}(\widehat X)]\E[\phi_{x}(\widehat X)]^T.
\end{equation}
For two symmetric matrices $A$ and $B$, we write $A\succeq B$ if the matrix $A-B$ is positive semidefinite (PSD).

We conclude this part with two definitions. First, we define the moment tensors. For a vector $x\in \RL$, we denote by $x^{\otimes d}$ the $L^{d}$ dimensional tensor where the entry indexed by $\kk=(k_1,\dots,k_d)\in \Z^d_L$ is given by $\prod_{j=1}^d x[k_j]$. The space of $d$-dimensional tensors forms a vector space, with sum and multiplication defined entry-wise. This vector-space has inner product and norm defined by $\left<A,B\right>=\sum_{\kk\in\Z_L^d}A[\kk]B[\kk]$ and $\|A\|^2=\left<A,A\right>$, respectively.
\begin{defn}\label{defn:autocorr}
	The $n$-th order moment of $x$ over $\rho$, is the tensor of order $n$ and dimension $L^{n}$, defined by
	\begin{equation*}\label{eq:autocorrelation}
	M^n_{x,\rho}:=\E\left[(R_S x)^{\otimes n}\right],
	\end{equation*}
	where $S\sim \rho$. 
\end{defn}

We will explore this notion in more detail in section \ref{sec:moments}, in particular we give explicit formulas for the moments when $n=1$ \eqref{eq:first_moment} and $n=2$ \eqref{eq:SecondMomentExpectation}.


Our last definition is of the $\chi^2$ divergence, which gives a measure of how ``far'' two probability distributions are.
\begin{defn}
	The $\chi^2$ divergence between two probability densities $f_A$ and $f_B$, with $f_A$ absolutely continuous with respect to $f_B$, is defined by
	\begin{equation*}
	\chi^2(f_A||f_B):=
	\E\left[\left(\frac{f_A(B)}{f_B(B)}-1\right)^2\right],
	\end{equation*}
	where $B\sim f_B$.
\end{defn}
The following lemma relates the $\chi^2$ divergence between $(x,\rho)$ and $(\tilde x,\tilde \rho)$ for one and $N$ observations.

\begin{lem}\label{lem:nchi2}
\begin{equation}\label{eq:nchi2}
\chi^2(f^N_{\tilde x,\tilde \rho}||f^N_{x,\rho})=(\chi^2(f_{\tilde x,\tilde \rho}||f_{x,\rho})+1)^N-1. 
\end{equation}
\begin{proof}
	See Appendix \ref{proof:eqnchi2}.
\end{proof}
\end{lem}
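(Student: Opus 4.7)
The plan is to reduce the claim to the well-known tensorization identity for $\chi^2$ that holds for any pair of product measures on product spaces, and then exploit the independence structure already recorded in \eqref{eq:parindepence}. The convenient algebraic identity to build on is
\begin{equation*}
\chi^2(f_A\|f_B)+1 \;=\; \int \frac{f_A(y)^2}{f_B(y)}\,dy,
\end{equation*}
which I would derive first by expanding the square in the definition of $\chi^2(f_A\|f_B)=\E_{B\sim f_B}[(f_A(B)/f_B(B)-1)^2]$, distributing over the three terms $f_A^2/f_B^2$, $-2f_A/f_B$, $1$, and integrating against $f_B$; the cross term contributes $-2$ and the constant contributes $+1$, leaving exactly $\int f_A^2/f_B - 1$.

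Next I would apply this identity to the pair $(f^N_{\tilde x,\tilde\rho},f^N_{x,\rho})$ on $\reals^{L\times N}$. Using \eqref{eq:parindepence}, both the numerator $f^N_{\tilde x,\tilde\rho}(y^N)^2=\prod_{j=1}^N f_{\tilde x,\tilde\rho}(y_j)^2$ and the denominator $f^N_{x,\rho}(y^N)=\prod_{j=1}^N f_{x,\rho}(y_j)$ factor over the $N$ columns of $y^N$, so the integrand splits as $\prod_{j=1}^N f_{\tilde x,\tilde\rho}(y_j)^2/f_{x,\rho}(y_j)$. By Fubini/Tonelli (the integrand is nonnegative) the integral over $\reals^{L\times N}$ equals the product of $N$ identical single-observation integrals, giving
\begin{equation*}
\chi^2(f^N_{\tilde x,\tilde\rho}\|f^N_{x,\rho})+1
\;=\; \left(\int \frac{f_{\tilde x,\tilde\rho}(y)^2}{f_{x,\rho}(y)}\,dy\right)^{\!N}
\;=\; \bigl(\chi^2(f_{\tilde x,\tilde\rho}\|f_{x,\rho})+1\bigr)^N,
\end{equation*}
which is \eqref{eq:nchi2} after subtracting $1$.

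I don't expect any serious obstacle: the proof is purely algebraic once the "+1 form" of $\chi^2$ is in hand, and the only non-trivial ingredient is the i.i.d.\ factorization of $f^N_{x,\rho}$ which has already been asserted in \eqref{eq:parindepence}. The one place that deserves a sentence of care is justifying the interchange of product and integral; since all densities are nonnegative this is immediate from Tonelli's theorem, and absolute continuity of $f_{\tilde x,\tilde\rho}$ with respect to $f_{x,\rho}$ (needed for $\chi^2$ to be finite at all) transfers automatically to the product densities, so no additional hypothesis is required.
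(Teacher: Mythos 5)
Your proposal is correct and follows essentially the same route as the paper's proof in Appendix \ref{proof:eqnchi2}: both expand the square to obtain the identity $\chi^2(f_A\|f_B)+1=\int f_A^2/f_B$, then use the product factorization \eqref{eq:parindepence} to split the $N$-fold integral into a product of identical single-observation integrals. The only cosmetic difference is that you state the ``$+1$ form'' as a standalone identity and explicitly invoke Tonelli, whereas the paper carries out the same computation inline.
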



\subsection{Chapman-Robbins lower bound for an orbit}\label{sec:ChapmanRobbins}
The classical Chapman-Robbins gives a lower bound on an error metric of the form $\E[\|\widehat X -x\|^2]$, i.e., it does not take into consideration a translation-invariant error metric as appears naturally in the MRA problem. 
Hence, we modify the Chapman-Robbins bound to accommodate error of the form~\eqref{eq:MSE+Rx}. 
We point out that $\Cov[\phi_x(\widehat X)]$ is related to the $\MSE$ by~\eqref{eq:BiasVar}.
\begin{thm}[Chapman-Robbins for orbits]\label{thm:ChapmanRobbins}
	For any $\tilde x\in\RL$ such that $\phi_x(\tilde x)\neq x$ and $\tilde \rho\in \Delta^L$, we have
	\begin{equation*}
	\Cov[\phi_x(\widehat X)]\succeq \frac{z z^T}
	{\chi^2(f^N_{\tilde x,\tilde \rho}||f^N_{x,\rho})},
	\end{equation*}
	where $z=\E_{\tilde x,\tilde \rho}[\phi_{x}(\widehat X)]-\E_{x,\rho}[\phi_{x}(\widehat X)]$. 
\end{thm}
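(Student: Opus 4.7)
The plan is to run the standard Chapman--Robbins proof, but applied to the orbit-projected estimator $\phi_x(\widehat X)$ rather than to $\widehat X$ itself. The key observation is that $\phi_x(\widehat X)$, although it depends on the unknown parameter $x$ through the choice of alignment, is still a legitimate (measurable) function of the observations $Y^N$ for each fixed value of $x$, and Chapman--Robbins only requires that we fix one such function and compare its expectation under two different laws.

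First I would define the likelihood ratio
\begin{equation*}
h(Y^N)=\frac{f^{N}_{\tilde x,\tilde\rho}(Y^N)}{f^{N}_{x,\rho}(Y^N)}-1,
\end{equation*}
which is well defined $f^N_{x,\rho}$-a.s.\ because under our Gaussian noise model these densities are strictly positive. Two identities will do almost all of the work: $\E_{x,\rho}[h(Y^N)]=0$ (since $f^N_{\tilde x,\tilde\rho}$ integrates to $1$), and $\E_{x,\rho}[h(Y^N)^2]=\chi^2(f^N_{\tilde x,\tilde\rho}\|f^N_{x,\rho})$ by the very definition of the $\chi^2$ divergence. Pick an arbitrary vector $v\in\RL$ and set $g(Y^N)=v^{T}\phi_x(\widehat X)$. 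A change-of-measure computation gives
\begin{equation*}
\E_{x,\rho}\!\left[g(Y^N)\,h(Y^N)\right]
=\E_{\tilde x,\tilde\rho}\!\left[g(Y^N)\right]-\E_{x,\rho}\!\left[g(Y^N)\right]
=v^{T}z .
\end{equation*}
Since $\E_{x,\rho}[h]=0$, the left-hand side equals the covariance $\mathrm{Cov}_{x,\rho}(g,h)$ under $f^N_{x,\rho}$.

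Next I would apply the Cauchy--Schwarz inequality for covariances:
\begin{equation*}
(v^{T}z)^2=\bigl(\mathrm{Cov}_{x,\rho}(g,h)\bigr)^{2}
\le \mathrm{Var}_{x,\rho}(g)\cdot\mathrm{Var}_{x,\rho}(h)
=v^{T}\mathrm{Cov}\!\left[\phi_x(\widehat X)\right]v\;\cdot\;\chi^{2}(f^{N}_{\tilde x,\tilde\rho}\|f^{N}_{x,\rho}),
\end{equation*}
where I used $\mathrm{Var}_{x,\rho}(h)=\E_{x,\rho}[h^2]=\chi^{2}(f^N_{\tilde x,\tilde\rho}\|f^N_{x,\rho})$. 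Rearranging yields
\begin{equation*}
v^{T}\mathrm{Cov}\!\left[\phi_x(\widehat X)\right]v
\ge \frac{v^{T}zz^{T}v}{\chi^{2}(f^{N}_{\tilde x,\tilde\rho}\|f^{N}_{x,\rho})},
\end{equation*}
and since $v$ was arbitrary, this is exactly the claimed PSD inequality.

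The only potential pitfall is technical rather than conceptual. One must check that the change-of-measure step above is justified, i.e.\ that $\phi_x(\widehat X)h$ is integrable under $f^N_{x,\rho}$; finiteness of $\chi^{2}$ together with finiteness of $\E_{x,\rho}\|\phi_x(\widehat X)\|^2$ (which, via the bias-variance identity \eqref{eq:BiasVar} and the hypothesis that $\widehat X$ is consistent, can be assumed without loss of generality, otherwise the bound is vacuous) ensures this by Cauchy--Schwarz. The hypothesis $\phi_x(\tilde x)\ne x$ is not used in the inequality itself; it guarantees that the bound is non-trivial by ensuring $z\ne0$ in the limiting regime where the estimators are nearly consistent at both $(x,\rho)$ and $(\tilde x,\tilde\rho)$. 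Unlike the classical Chapman--Robbins statement, no unbiasedness hypothesis is needed: $z$ is defined as the actual difference of expectations of $\phi_x(\widehat X)$, so the bound automatically accommodates bias and will be the form used in the lower-bound arguments of Sections \ref{sec:Fisherautocorr} and \ref{sec:lowerboundMRA}.
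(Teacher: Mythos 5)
Your proof is correct and follows essentially the same route as the paper: both define the centered likelihood ratio (your $h$ is the paper's $Z-1$), use the change-of-measure identity to express $v^{T}z$ as a covariance with $\phi_x(\widehat X)$, and conclude by Cauchy--Schwarz, quantifying over the test vector to get the PSD ordering. Your added remarks on integrability and on the role of the hypothesis $\phi_x(\tilde x)\neq x$ are sensible but do not change the argument.
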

\begin{proof}
	See Appendix \ref{proof:ChapmanRobbins}.
\end{proof}

\subsection{Fisher information and moment tensors}
\label{sec:Fisherautocorr}
In this subsection we give a characterization of the $\chi^2$ divergence, which appears in the Chapman-Robbins bound, in terms of the moment tensors. 

Instead of considering the posterior probability density of $Y^N$, we will consider its normalized version $\widetilde Y^N=Y^N/\sigma$. We then have
\begin{equation}\label{eq:tildey}
\widetilde Y_j=\gamma R_{S_j} x + \error_j,
\end{equation}
where $\gamma=1/\sigma$, $S_j\sim \rho$ and $\error_j\sim \N(0,I)$. While this change of variables does not change the $\chi^2$ divergence, we can now take the Taylor expansion of the probability density around $\gamma=0$, that is,
\begin{equation}\label{eq:probdenstaylor}
f_{x,\rho}(y;\gamma)=f_\error(y)\sum_{j=0}^\infty \alpha^j_{x,\rho}(y)\frac{\gamma^j}{j!},
\end{equation}
\smallskip where $f_\error(y)=f_{x,\rho}(y;0)$ is the probability density of $\error_j$ (since when $\gamma=0$, $\widetilde Y_j=\error_j$) and
\begin{equation}\label{eq:alpha}
\alpha^j_{x,\rho}(y):=\frac1{f_\error(y)}\frac{\partial^j f_{x,\rho}}{\partial \gamma^j}(y;0),
\end{equation}
thus $\alpha^0_{x,\rho}(y)=1$. We note $f_{x,\rho}(y;\gamma)$ is infinitely differentiable for all $y\in \RL$, thus $\alpha^j_{x,\rho}(y)$ is always well-defined. We now use ~\eqref{eq:probdenstaylor} to give an expression of the $\chi^2$ divergence in terms of the moment tensors.

\begin{lem}\label{lem:falldominoes} 
	The divergence $\chi^2(f_{\tilde x,\tilde \rho}||f_{x,\rho})$ can be expressed in terms of the data moments as:
	\begin{align}
	\nonumber \chi^2(f_{\tilde x,\tilde \rho}||f_{x,\rho})\hspace{-30pt}&\\
	&=\frac{\sigma^{-2d}}{(d!)^2}\E\left[\left(\alpha_{\tilde x,\tilde \rho}^d(\error)-\alpha_{x,\rho}^d(\error)\right)^2\right]+\OO(\sigma^{-2d-1}), \label{eq:falldominoes}\\
	&=\frac{\sigma^{-2d}}{d!}\|M^d_{\tilde x,\tilde \rho}-M^d_{x,\rho}\|^2+\OO(\sigma^{-2d-1}), \label{eq:chi2autocorr}
	\end{align}
	where $d=\inf\left\{n:\|M^n_{\tilde x,\tilde \rho}-M^n_{x,\rho}\|^2>0\right\}$.
\end{lem}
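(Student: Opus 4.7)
The plan is to compute the $\chi^2$ divergence by expanding both numerator and denominator of its integrand as power series in $\gamma = 1/\sigma$, using the Taylor expansion \eqref{eq:probdenstaylor}. Writing
\[
\chi^2(f_{\tilde x,\tilde\rho}\|f_{x,\rho}) = \int \frac{(f_{\tilde x,\tilde\rho}(y;\gamma) - f_{x,\rho}(y;\gamma))^2}{f_{x,\rho}(y;\gamma)}\, dy,
\]
substitution of \eqref{eq:probdenstaylor} turns the numerator into $f_\error(y)^2\big(\sum_{j\ge 1}(\alpha^j_{\tilde x,\tilde\rho}(y) - \alpha^j_{x,\rho}(y))\gamma^j/j!\big)^2$ and the denominator into $f_\error(y)(1+\OO(\gamma))$. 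If we can show that each $\alpha^j_{x,\rho}$ depends only on the tensor $M^j_{x,\rho}$, then the assumption that moments agree up to order $d-1$ makes the sum start at $j = d$, producing the $\sigma^{-2d}$ scaling of \eqref{eq:falldominoes}.

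To establish the moment/Hermite representation for $\alpha^j_{x,\rho}$ and to evaluate the leading coefficient, the key step is to observe that since $\widetilde Y = \gamma R_S x + \error$ with $\error\sim \N(0,I)$,
\[
\frac{f_{x,\rho}(y;\gamma)}{f_\error(y)} = \E_S\!\left[\exp\!\left(\gamma\langle y, R_S x\rangle - \tfrac{\gamma^2}{2}\|R_S x\|^2\right)\right],
\]
which is the generating function of the multivariate Hermite polynomials $H_j(y,z)$, each homogeneous of degree $j$ in $z$. Matching powers of $\gamma$ yields $\alpha^j_{x,\rho}(y) = \E_S[H_j(y, R_S x)]$, a linear functional of $M^j_{x,\rho}$; hence equality of $j$-th moments forces $\alpha^j_{\tilde x,\tilde\rho} = \alpha^j_{x,\rho}$ for $j < d$. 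The same generating-function computation gives the orthogonality relation $\E_\error[H_j(\error,z_1)H_k(\error,z_2)] = \delta_{jk}\, j!\,\langle z_1,z_2\rangle^j$, which combined with $\langle z_1,z_2\rangle^j = \langle z_1^{\otimes j}, z_2^{\otimes j}\rangle$ gives
\[
\E_\error\!\left[\big(\alpha^d_{\tilde x,\tilde\rho}(\error) - \alpha^d_{x,\rho}(\error)\big)^2\right] = d!\,\|M^d_{\tilde x,\tilde\rho} - M^d_{x,\rho}\|^2,
\]
which translates \eqref{eq:falldominoes} into \eqref{eq:chi2autocorr}.

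The main obstacle I anticipate is making the remainder estimate rigorous: although \eqref{eq:probdenstaylor} is a formal Taylor expansion, the $\chi^2$ divergence is a genuine integral and one must justify the $\OO(\sigma^{-2d-1})$ bound on the non-leading contribution, including the effect of replacing the $\gamma$-dependent factor $f_{x,\rho}(y;\gamma)^{-1}$ by $f_\error(y)^{-1}$. This should follow from polynomial bounds of the form $|\alpha^j_{x,\rho}(y)| \lesssim (1+\|y\|)^j\|x\|^j$ combined with the Gaussian tail of $f_\error$, reducing the argument to a dominated-convergence estimate. It is the one place beyond the formal power-series manipulation where some technical care is required.
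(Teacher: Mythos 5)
Your proposal is correct and follows essentially the same route as the paper: expand numerator and denominator of the $\chi^2$ integrand via \eqref{eq:probdenstaylor} so the leading term is $\gamma^{2d}$, then convert $\E\bigl[\alpha^d_{\tilde x,\tilde\rho}(\error)\alpha^d_{x,\rho}(\error)\bigr]$ into $d!\langle M^d_{\tilde x,\tilde\rho},M^d_{x,\rho}\rangle$ by a Gaussian completing-the-square computation. Your Hermite-polynomial orthogonality relation is exactly the paper's identity $\E\bigl[f_\error(\error-\tilde\gamma R_{\tilde S}\tilde x)f_\error(\error-\gamma R_S x)/f_\error(\error)^2\bigr]=\exp\bigl(\gamma\tilde\gamma\langle R_{\tilde S}\tilde x,R_S x\rangle\bigr)$ followed by $2d$-fold differentiation at zero, restated in generating-function language, and your remark that $\alpha^j$ is a linear functional of $M^j$ (hence the sum starts at $j=d$) cleanly justifies a step the paper only asserts.
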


\begin{proof}
	See Appendix \ref{proof:falldominoes}.
\end{proof}

Equation~\eqref{eq:falldominoes} is not specific to MRA: one can always obtain this expression as long we are considering the low $\SNR$ regime and the observations are independent of the signal in the limit of $\SNR$ tending to 0. The particularization to MRA happens in~\eqref{eq:chi2autocorr}, due to \eqref{eq:tildey} and \eqref{eq:alpha}. 

\subsection{General lower bound for the MRA problem}\label{sec:lowerboundMRA}

The following theorem is obtained from the results presented in the previous sections.
\begin{thm}\label{thm:metalowerbound}
	Consider the estimation problem given by equation \eqref{eq:mra}. For any signal $\tilde x\in \reals^L$ such that $\phi_x(\tilde x)\neq x$ and for any $\tilde \rho\in \Delta^L$, let $K^n_{\tilde x,\tilde \rho}=\frac1{n!}\|M^n_{\tilde x,\tilde \rho}- M^n_{x,\rho}\|^2$, 
	\linebreak $d_{\tilde x,\tilde \rho}=\inf\left\{n:K^n_{\tilde x,\tilde \rho}>0\right\}.$ and
	$$\bar d=\max_{(\tilde x,\tilde \rho):\, \phi_x(\tilde x)\neq x} d_{\tilde x,\tilde \rho}.$$ In other words, $\bar d$ is the smallest positive integer such that the moments $\{M^n_{x,\rho}\}_{n\le\bar d}$ define $x$ and $\rho$ unequivocally.
	Finally let
	$$\lambda^{m}_N=N/\sigma^{2m},\quad m\in \Z_+.$$
	We have
	\begin{equation}\label{eq:momentssanov}
	\MSE\ge \sup_{(\tilde x,\tilde \rho):\, d_{\tilde x,\tilde \rho}=\bar d}\left\{\frac{\|\phi_{x}(\tilde x)-x\|^2/\|x\|^2}
	{\exp\left(\lambda^{\bar d}_N K^{\bar d}_{\tilde x,\tilde \rho}\right)-1+\OO\left(\lambda^{\bar d}_N \sigma^{-1}\right)}\right\},
	\end{equation}
	thus the MSE is bounded away from zero if $\lambda^{\bar d}_N$ is bounded from above, or equivalently $N=\OO(1/\SNR^{\bar d})$.
\end{thm}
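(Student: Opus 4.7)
The plan is to chain three already-established ingredients: the orbit Chapman--Robbins inequality (Theorem~\ref{thm:ChapmanRobbins}), the tensorization identity for the $\chi^2$ divergence (Lemma~\ref{lem:nchi2}), and the moment-tensor expansion of the single-sample $\chi^2$ in the low-SNR limit (Lemma~\ref{lem:falldominoes}). The bias-variance decomposition~\eqref{eq:BiasVar} converts the resulting covariance bound into an MSE bound. All that is genuinely new here is the bookkeeping needed to bring these three pieces to the same asymptotic scale.

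Concretely, I would fix a competitor $(\tilde x,\tilde\rho)$ with $\phi_x(\tilde x)\neq x$ and $d_{\tilde x,\tilde\rho}=\bar d$. Applying Theorem~\ref{thm:ChapmanRobbins}, taking the trace, and discarding the non-negative bias term in~\eqref{eq:BiasVar} yields
\[
\MSE\;\ge\;\frac{\tr(\Cov[\phi_x(\widehat X)])}{\|x\|^2}\;\ge\;\frac{\|z\|^2/\|x\|^2}{\chi^2(f^N_{\tilde x,\tilde\rho}\|f^N_{x,\rho})},
\]
where $z=\E_{\tilde x,\tilde\rho}[\phi_x(\widehat X)]-\E_{x,\rho}[\phi_x(\widehat X)]$. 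Consistency under $(x,\rho)$ gives $\E_{x,\rho}[\phi_x(\widehat X)]\to x$, while consistency applied with $(\tilde x,\tilde\rho)$ as the true law gives $\phi_{\tilde x}(\widehat X)\to\tilde x$ and hence $\phi_x(\widehat X)\to\phi_x(\tilde x)$ by continuity of the alignment operator; together these yield $\|z\|^2\to\|\phi_x(\tilde x)-x\|^2$.

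For the denominator I would combine Lemmas~\ref{lem:nchi2} and~\ref{lem:falldominoes}. Writing $\chi^2_1:=\chi^2(f_{\tilde x,\tilde\rho}\|f_{x,\rho})=\sigma^{-2\bar d}K^{\bar d}_{\tilde x,\tilde\rho}+\OO(\sigma^{-2\bar d-1})$, we have $\chi^2(f^N_{\tilde x,\tilde\rho}\|f^N_{x,\rho})=(1+\chi^2_1)^N-1=\exp(N\log(1+\chi^2_1))-1$. Expanding $N\log(1+\chi^2_1)=N\chi^2_1+\OO(N\chi^4_1)=\lambda^{\bar d}_N K^{\bar d}_{\tilde x,\tilde\rho}+\OO(\lambda^{\bar d}_N\sigma^{-1})$, where the quartic remainder $\OO(N\chi^4_1)=\OO(\lambda^{\bar d}_N\sigma^{-2\bar d})$ is absorbed into $\OO(\lambda^{\bar d}_N\sigma^{-1})$ for $\bar d\ge 1$, and re-exponentiating gives
\[
\chi^2(f^N_{\tilde x,\tilde\rho}\|f^N_{x,\rho})\;=\;\exp\!\bigl(\lambda^{\bar d}_N K^{\bar d}_{\tilde x,\tilde\rho}\bigr)-1+\OO\!\bigl(\lambda^{\bar d}_N\sigma^{-1}\bigr),
\]
valid on bounded ranges of the exponent. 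Substituting into the previous display and taking the supremum over admissible $(\tilde x,\tilde\rho)$ recovers~\eqref{eq:momentssanov}. The ``bounded away from zero'' statement then follows immediately: picking any fixed competitor with $d_{\tilde x,\tilde\rho}=\bar d$ (which exists by the definition of $\bar d$) bounds the denominator while keeping the numerator a fixed positive constant.

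The main obstacle is the passage $\|z\|^2\to\|\phi_x(\tilde x)-x\|^2$ in step one. Consistency, as stated, only asserts $\phi_{\tilde x}(\widehat X)\to\tilde x$ under $f^N_{\tilde x,\tilde\rho}$, and one must transfer this to convergence of $\E_{\tilde x,\tilde\rho}[\phi_x(\widehat X)]$ to $\phi_x(\tilde x)$. This requires continuity of the alignment operator $\phi_x$ at $\tilde x$ (which fails only on a measure-zero set of configurations equidistant from two distinct shifts of $x$) and an interchange of limit and expectation, which in turn needs uniform integrability of the estimator. Both are mild regularity checks rather than deep obstacles, but they are the only technically delicate step; the rest reduces to asymptotic bookkeeping in $\sigma^{-1}$ and $N^{-1}$.
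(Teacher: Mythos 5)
Your proposal is correct and follows essentially the same route as the paper: chain Theorem~\ref{thm:ChapmanRobbins}, Lemma~\ref{lem:nchi2}, and Lemma~\ref{lem:falldominoes} through the bias--variance decomposition~\eqref{eq:BiasVar}, pass to the limit $\|z\|^2\to\|\phi_x(\tilde x)-x\|^2$ by consistency, and re-exponentiate the tensorized $\chi^2$; you are in fact more explicit than the paper about the delicate interchange of limit and expectation behind that convergence of $z$. The only ingredient of the paper's proof you omit is the preliminary observation that $\bar d\le L$ (recoverable from the order-$L$ moments via the elementary symmetric polynomials of the entries of $x$), which is needed so that the maximum defining $\bar d$ is well-defined.
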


\begin{proof}
	We first note that $\bar d\le L$. Given all the tensor moments up to order $L$, we can write the polynomial $\prod_{i=1}^L (\alpha-x[i])$ in terms of entries of the moments, and find all the unordered entries of $x$ by taking the roots of the polynomial. To find the right order one can look at the value of other entries of the moments. Since $\bar d\le L$, the maximum is well defined.
	
	 By Theorem \ref{thm:ChapmanRobbins}, Lemma \ref{lem:falldominoes}, equations \eqref{eq:Cov} and \eqref{eq:nchi2} we obtain
	\begin{equation}\label{eq:limitexplanation}
	\MSE\ge \frac{\|z\|^2/\|x\|^2}{\left(1+\sigma^{-2d} K^{d}_{\tilde x,\tilde \rho}+\OO\left(\sigma^{-2d-1}\right)\right)^N-1}.
	\end{equation}
	with $z=\E_{\tilde x,\tilde \rho}[\phi_{x}(\widehat X)]-\E_{x,\rho}[\phi_{x}(\widehat X)]$. Since $\hat X$ is consistent, $\|z\|^2\rightarrow \|\phi_{x}(\tilde x)-x\|^2$ as $N$ diverges. On the other hand we have
	\begin{multline*}
	\left(1+\sigma^{-2d} K^{d}_{\tilde x,\tilde \rho}+\OO(\sigma^{-2d-1})\right)^N=\\\exp\left(\lambda^{d}_N K^{d}_{\tilde x,\tilde \rho}\right)+\OO\left(\lambda^{d}_N \sigma^{-1}\right)
	\end{multline*}
	and \eqref{eq:momentssanov} now follows from taking the supremum over $\tilde x$ and $\tilde \rho$.
\end{proof}
From Theorem \ref{thm:metalowerbound} we can obtain \eqref{eq:corsigma4} by providing $\tilde x$ and $\tilde \rho$ which have $M^1_{\tilde x,\tilde \rho}=M^1_{x,\rho}$, this implies $\bar d\ge 2$ and the $\MSE$ is bounded away from 0 if $N=\OO(1/\SNR^2)$. Moreover, to obtain \eqref{eq:corsigma6} when $\rho$ is periodic we can provide $\tilde x$ and $\tilde \rho$ which have $M^d_{\tilde x,\tilde \rho}=M^d_{x,\rho}$ for $d=1,2$, similarly to Proposition~\ref{prop:counterexample}, this implies $\bar d\ge 3$ and the $\MSE$ is bounded away from 0 if $N=\OO(1/\SNR^3)$.

However, when $N=\omega(1/\SNR^{\bar d})$ the supremum in \eqref{eq:momentssanov} is going to be achieved in the limit $(\tilde x, \tilde \rho)\rightarrow (x,\rho)$. Thus, to prove Theorem \ref{thm:corsigma}, we use intermediate results which explore the limit $(\tilde x, \tilde \rho)\rightarrow (x,\rho)$, and thus provide tighter bounds. However, since considering the limit introduces some technical details, we leave its analysis to Appendix~\ref{sec:derivatives}. The final details of the proof of Theorem \ref{thm:corsigma} are presented in Appendix~\ref{proof:corsigma}.

As a final remark on the results in this section, Theorem \ref{thm:metalowerbound} is not particular to MRA and can be easily generalized to a broader class of problems, which consider the actions of different groups and include the problem of single-particle reconstruction using cryo-EM \cite{abbe2018estimation}. A recent paper \cite{bandeira2017optimal} obtains similar results for minimax lower bounds, and employs techniques from computational algebra to estimate $\bar d$, and consequently the sample complexity, for a variety of models, such as cryo-EM and heterogeneous MRA.

\section{Provable algorithm based on the first two moments} 
\label{sec:spectral_algorithm}

In this section, we provide a spectral algorithm to estimate the signal, up to cyclic translation, from the first and second moments of the data, provided that the translation distribution is aperiodic. We prove that this algorithm estimates the signal exactly with high probability in the limit of SNR tending to $0$ with a growing number of samples; we will describe the asymptotic model more precisely in Section \ref{sec:spectral_method}. Because the method relies on only second-order information, its sample complexity in this case only grows like $\omega(1/\SNR^2)$, compared to sample complexity growing as $\omega(1/\SNR^3)$ if the translation distribution is periodic (with period smaller than $L/2$; see Section~\ref{sec:nonunique}). As we proved in Section~\ref{sec:information_limit}, $\omega(1/\SNR^2)$ is indeed the sample complexity for aperiodic distributions.
 
\subsection{Moments of $R_S x$} \label{sec:moments}

Before describing the algorithm, we will review a few basic properties of the moments of the random vectors $R_S x$, defined in Definition \ref{defn:autocorr}, and conclude with a theoretical result about the sufficient information they hold. 

We will first consider the first moment of the translated signal, $M^1= \E [R_S x]$, where $S \sim \rho$. This is equal to the convolution of $x$ with $\rho$; that is,
\begin{align} \label{eq:first_moment}
M^1 = x \ast \rho = C_x\rho = C_\rho x,
\end{align}
where $C_x$ is the circulant matrix with $x$ as its first column (and similarly for $C_\rho$). In this case, the convolution theorem implies
\begin{equation} \label{eqn:first_moment_fourier}
\F M^1 = \F{x}\odot\F{\rho},
\end{equation}
where $\odot$ and $\F$ denote entry-wise product and Fourier transform, respectively. We can estimate the first moment from the noisy observations~\eqref{eq:mra} by 
\begin{align} \label{eq:first_moment_est}
\hatM^1 = \frac{1}{N}\sum_{i=1}^N Y_i.
\end{align}
Note that if $L$ and $\sigma$ are fixed, then $\hatM^1$ is a consistent estimator of $M^1$ as $N \to \infty$.

The second moment of $R_S x$ is defined as 
\[ M^2 = \E \left[ (R_S x)(R_S x)^T \right] , \] 
where $S \sim \rho$. It can be verified that 
\begin{equation} \label{eq:SecondMomentExpectation}
M^2 = C_x D_\rho C_x^T,
\end{equation} 
where $D_\rho$ is a diagonal matrix of $\rho$. The unbiased second moment of $R_S x$ is then estimated from the observations $Y_j$ by:
\begin{align} \label{eq:second_moment_est}
\hatM^2 &= \frac{1}{N} \sum_{i=1}^N Y_i Y_i^T - \sigma^2 I,
\end{align}
where $I$ denotes the $L\times L$ identity matrix. As with the first moment, when $L$ and $\sigma$ are fixed then $\hatM^2$ is a consistent estimator of $M^2$ as $N \to \infty$.

We conclude this section with the following result, showing conditions which guarantee that there exists only one pair of signal and distribution (up to translation) that exactly agrees with the second moment data. Recall that a distribution $\rho$ is periodic if and only if there exists a period $1\le \ell<L$ such that
\begin{equation*}
\rho[k] = \rho[k+\ell] , \quad k=0,\ldots,L-1 . 
\end{equation*}
If no period exists we simply call $\rho$ aperiodic distribution.

\begin{thm} \label{thm:uniqueness}
Assume that $\rho_1$ is an aperiodic distribution, and that $x_1$ is a signal with non-vanishing DFT. Let $x_2$ and $\rho_2$ be any other signal and distribution with the same first two moments as $x_1$ and $\rho_1$. Then $x_2$ and $\rho_2$ are equal to $x_1$ and $\rho_1$, respectively, up to a shift. More precisely, there is $s \in \{0,\dots,L-1\}$ with $x_2 = R_s x_1$ and $\rho_2 = R_{-s} \rho_1$.
\end{thm}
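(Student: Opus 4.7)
The plan is to pass to the Fourier domain, where the moment identities decouple into scalar equations, and then show that the only possible ambiguity is multiplication of $\F x$ by a character of $\Z_L$---equivalently, a cyclic shift.

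First, applying the DFT to~\eqref{eq:first_moment} and the two-dimensional DFT to~\eqref{eq:SecondMomentExpectation}, the hypothesis that $(x_1,\rho_1)$ and $(x_2,\rho_2)$ share the first two moments becomes
\begin{align*}
\F x_1[k]\,\F \rho_1[k] &= \F x_2[k]\,\F \rho_2[k], \\
\F \rho_1[k_1+k_2]\,\F x_1[k_1]\,\F x_1[k_2] &= \F \rho_2[k_1+k_2]\,\F x_2[k_1]\,\F x_2[k_2].
\end{align*}
Specializing the second identity to $k_2 = -k_1$ and using $\F \rho_i[0]=1$ together with $\F x_i[-k] = \overline{\F x_i[k]}$ (real signals) yields $|\F x_2[k]| = |\F x_1[k]|$ for every $k$. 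Since $\F x_1$ is non-vanishing by hypothesis, both Fourier transforms are non-vanishing and the ratio $u[k] := \F x_2[k]/\F x_1[k]$ is a well-defined unimodular function on $\Z_L$. The first identity then forces $\F \rho_2[k] = \F \rho_1[k]/u[k]$ on the support $T := \{k : \F \rho_1[k] \neq 0\}$ and $\F \rho_2[k] = 0$ off $T$. Plugging both expressions back into the second identity and cancelling the non-zero factor $\F \rho_1[k_1+k_2]\,\F x_1[k_1]\,\F x_1[k_2]$ whenever $k_1+k_2 \in T$ produces the partial multiplicativity
\[ u[k_1+k_2] = u[k_1]\,u[k_2] \quad \text{whenever } k_1 + k_2 \in T, \]
which together with $|u|\equiv 1$ immediately yields $u[0]=1$ (from $k_1=k_2=0$) and $u[-k]=1/u[k]$ (from $k_1=-k_2$).

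The crux is promoting this local identity to a character on all of $\Z_L$, and this is exactly where aperiodicity enters. By a standard Fourier argument, $\rho_1$ is aperiodic if and only if $T$ is not contained in any proper subgroup of $\Z_L$---equivalently, $T$ generates $\Z_L$. Fixing $c, c' \in T$ and $b \in \Z_L$, the local identity gives $u[c-b] = u[c]/u[b]$ and $u[c'-b] = u[c']/u[b]$, so the ratio $u[c-b]/u[c'-b] = u[c]/u[c']$ is independent of $b$. Substituting $b \mapsto c' - b$ and then evaluating at $b = 0$ identify this constant with $u[c-c']$, producing the global identity
\[ u[b + d] = u[b]\,u[d] \quad \text{for all } b \in \Z_L \ \text{and all } d \in T - T. \]
Since $0 \in T$ we have $T \subseteq T - T$, so $T - T$ also generates $\Z_L$; hence every $k \in \Z_L$ can be written as a finite sum $d_1 + \cdots + d_n$ with each $d_i \in T - T$, and iterating the global identity shows that $u$ is fully multiplicative on $\Z_L$. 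Therefore $u[k] = e^{2\pi\iota sk/L}$ for some $s \in \Z_L$, and reading this back through the DFT dictionary yields $x_2 = R_{-s} x_1$ and $\rho_2 = R_s \rho_1$, which is the conclusion of the theorem with the shift relabelled $s \mapsto -s$.

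The main obstacle is precisely the step from local to global multiplicativity of $u$: the partial identity only constrains $u$ when the sum lands in $T$, which in general need not be closed under addition. The pivoting trick that extracts the $b$-independent quantity $u[c - c']$ from two elements $c, c' \in T$, combined with the hypothesis that $T$ generates $\Z_L$, is what bridges this gap. Everything else---the moment identities, the construction of $u$, and the dictionary between multiplication by characters and cyclic shifts---is routine Fourier bookkeeping.
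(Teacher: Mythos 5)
Your proof is correct and follows essentially the same route as the paper's: pass to the Fourier domain, show that the ratio $u = \F x_2/\F x_1$ is unimodular and multiplicative along the support $T$ of $\F\rho_1$, and use aperiodicity (equivalently, that $T$ generates $\Z_L$, which is exactly the paper's GCD lemma that you invoke as the ``standard Fourier argument'') to force $u$ to be a character, i.e.\ a cyclic shift. The only cosmetic difference is in how generation is exploited: the paper propagates the relation via a B\'ezout identity $\sum_j a_j m_j = 1 \Mod{L}$ on the support indices to get $r[p+1]=\omega\, r[p]$ directly, whereas you first extend multiplicativity to $T-T$ by the pivoting trick and then iterate over the generated subgroup --- the same fact dressed differently.
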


The proof is given in Appendix~\ref{sec:proof_thm_uniqueness}. Next, we show a constructive method to recover $x$ and $\rho$ from their first two moments $M^1$ and $M^2$.
    
\subsection{Moment inversion when $\rho$ has a unique entry} 
\label{sec:moment-inversion}

The key observation driving the algorithm we will describe is that when $\rho$ has at least one distinct entry, and if $x$ has non-zero DFT, then $x$ can be recovered exactly from the first two moments $M^1$ and $M^2$.

We first note that the power spectrum of the signal, $P_x[k]:= \vert (\F x)[k]\vert^2$, is the Fourier transform of the signal's auto-correlation and thus can be derived directly from the second moment. 
Next, recall the factorization $M^2 = C_x D_\rho C_x^T$ from equation~\eqref{eq:SecondMomentExpectation}. The circulant matrix $C_x$ is diagonalized by the Fourier matrix $F$ as follows:
\[ C_x = F^{-1} D_{Fx} F, \]
thus we have
\begin{equation}\label{eq:powerspectrum}
F M^2 F^{-1} = \frac{1}{L} D_{Fx} C_{F \rho} D_{\overline{Fx}}.
\end{equation}
The $k$-th element of the diagonal of \eqref{eq:powerspectrum} is given by $$\frac1L(F \rho)[0]~\vert (\F x)[k]\vert^2=\frac1LP_x[k],$$
where $(F\rho)[0]=\sum_i \rho[i]=1$, since $\rho$ is a distribution. Consequently, we can obtain the power spectrum of $x$ from $M_2$ by
\begin{equation} \label{eq:psfrommoment}
	P_x=L\diag(F M^2 F^{-1}). 
\end{equation}
Now if we conjugate $M^2$ by the matrix $F^{-1} D_{1/|P_x|^{1/2}} F$, we obtain the matrix $\widetilde{M}^2 = C_{\tilde x} D_\rho C_{\tilde x}^T$, where $\tilde x$ is the vector with the normalized Fourier transform 
\begin{align} \label{eq:tilde_x}
    (F \tilde{x})[k] = \frac{(F x)[k]} {|(Fx)[k]|}.
\end{align}
Therefore, the matrix $C_{\tilde x}$ is both circulant and real orthonormal, i.e., $C_{\tilde x}^{-1} = C_{\tilde x}^T$. Consequently, the decomposition $\widetilde{M}^2 = C_{\tilde x} D_\rho C_{\tilde x}^T$ is an eigendecomposition of $\widetilde{M}^2$, and the eigenvectors are translations of $\tilde{x}$.

If $\rho$ has at least one distinct entry, then the associated eigenvector $v$ will be a translation of $\tilde{x}$, with arbitrary scaling; that is, $v = \alpha \cdot R_s \tilde{x}$ for some number $\alpha$ and shift $s$. Since the Fourier coefficients are still normalized, we multiply $\abs{P_x}^{1/2}$ and $\F v$ coordinate-wise to get
\[ \tilde{v} = \alpha \cdot \F^{-1} \left( \F(R_s \tilde{x}) \odot \abs{P_x}^{1/2} \right) 
= \alpha \cdot R_s x . \]
Letting $\text{Sum}(x)$ denote the sum of all elements in $x$, we have $\alpha = \text{Sum}( \tilde{v} ) / \text{Sum}(x)$. To uncover $\alpha$, note that the zeroth Fourier coefficient of $M^1 = x \ast \rho$ is $(F M^1)[0] = (Fx)[0] \cdot (F\rho)[0]$. But since $\rho$ is a probability vector, $(F \rho)[0] = 1$, and so $\text{Sum}(M^1) = (F M^1)[0] = (Fx)[0] = \text{Sum}(x)$. Consequently, $\alpha = \text{Sum}(\tilde{v} ) / \text{Sum}(M^1)$, and $R_s x = \tilde{v} / \alpha$.

Note that once we have determined $x$, we can also determine $\rho$ from $M^1 = x \ast \rho$ by deconvolution; indeed, since $M^1 = C_x \rho$, we have $\rho = C_x^{-1} M^1$. The algorithm is summarized in Algorithm~\ref{alg:spec00}.

\begin{algorithm}[ht]
    \caption{Exact recovery from the first two moments}
    \label{alg:spec00}
    \begin{algorithmic}[1]
        \REQUIRE Moments $M^1$ and $M^2$.
        \ENSURE The signal $x$ and distribution $\rho$.
        
        \COMMENT{ {\bf Normalize $Fx$}} \setcounter{ALC@line}{0}  \renewcommand{\theALC@line}{1.\arabic{ALC@line}}
        
        \STATE $P_x  \gets  L\diag(F M^2 F^{-1}) $ 
        \STATE $p  \gets  (P_x)^{-1/2} $ 
        \STATE $Q \gets \F^{-1} D_{p}\F$ 
        \STATE $\widetilde{M}^2 \gets Q M^2 Q^\ast$

        \COMMENT{ {\bf Extract eigenvector and rescale}} \setcounter{ALC@line}{0}  \renewcommand{\theALC@line}{2.\arabic{ALC@line}}

        \STATE $v \gets \operatorname{UniqEig}(\widetilde{M}^2)$ 
        \STATE $\tilde{v} \gets \F^{-1}\left(  (P_x)^{1/2} \odot \F v  \right)$ 
        \STATE $x \gets \left(  \operatorname{Sum}(M^1)/  \operatorname{Sum}(\tilde{v}) \right) \tilde{v}$
        \STATE $\rho \gets C_{x}^{-1}M^1 $
        \RETURN $x$ and $\rho$
    \end{algorithmic}
\end{algorithm}

We have proved the following result:
\begin{proposition}
Suppose $x$ has non-vanishing $\DFT$ and $\rho$ has at least one distinct entry. Let $M^1 = \E[R_S x]$ and $M^2 = \E[(R_S x)(R_S x)^T]$ be the first two moments. Then, Algorithm \ref{alg:spec00} returns the signal $x$ and the distribution $\rho$ exactly (up to cyclic translation).
\end{proposition}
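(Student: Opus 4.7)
The plan is to verify, line by line, that each step of Algorithm \ref{alg:spec00} produces the intended intermediate quantity, so that the algorithm outputs $x$ and $\rho$ exactly up to a common cyclic shift. I would organize the argument into three conceptual stages: (i) extract the power spectrum $P_x$ from $M^2$; (ii) conjugate $M^2$ into a real-orthogonal eigendecomposition whose eigenvectors are cyclic translates of a phase-only signal $\tilde x$; (iii) invert the normalization, fix the global scale, and deconvolve for $\rho$.

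For stage (i), I would compute $F M^2 F^{-1}$ directly from \eqref{eq:SecondMomentExpectation} and $C_x = F^{-1} D_{Fx} F$, read off the diagonal, and use $(F\rho)[0]=1$ to confirm \eqref{eq:psfrommoment}; this justifies line 1.1. Because $x$ has non-vanishing DFT, the diagonal operator $D_{|P_x|^{-1/2}}$ is well-defined, hence so is $Q = F^{-1} D_{|P_x|^{-1/2}} F$ from lines 1.2--1.3, and a short calculation yields $\widetilde M^2 = QM^2 Q^* = C_{\tilde x} D_\rho C_{\tilde x}^T$ with $\tilde x$ defined by \eqref{eq:tilde_x}. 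For stage (ii), the key observations are that $\tilde x$ is real (since $x$ is real, $|Fx|$ is conjugate-symmetric, hence so is $F\tilde x$) and that $C_{\tilde x}$ is orthogonal (since $|(F\tilde x)[k]|=1$ for every $k$). Therefore $\widetilde M^2 = C_{\tilde x} D_\rho C_{\tilde x}^T$ is a genuine orthogonal eigendecomposition whose eigenvalues are $\rho[0],\dots,\rho[L-1]$ and whose eigenvectors are the cyclic translates $R_s \tilde x$.

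I expect the main delicate point to be the eigenvector-selection step (line 2.1). Under the hypothesis that $\rho$ has a distinct entry $\rho[s_0]$, the corresponding eigenvalue is simple, so its eigenspace is one-dimensional and spanned by $R_{s_0}\tilde x$; hence $\operatorname{UniqEig}(\widetilde M^2)$ returns $v = \alpha R_{s_0}\tilde x$ for some nonzero real scalar $\alpha$. The shift $s_0$ here is precisely the intrinsic orbit ambiguity built into the problem and cannot be removed; any choice of distinct entry would yield a different $s_0$, but all are equivalent modulo the orbit.

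For stage (iii), I would apply the inverse normalization $F^{-1} D_{|P_x|^{1/2}} F$ to $v$ (line 2.2): since $|Fx|=|P_x|^{1/2}$ and Fourier multipliers commute with cyclic translation, this yields $\alpha R_{s_0} x$. To pin down $\alpha$, I would use that $(FM^1)[0] = (Fx)[0]\cdot(F\rho)[0] = \operatorname{Sum}(x)$, so $\operatorname{Sum}(M^1) = \operatorname{Sum}(R_{s_0}x)$ and the rescaling in line 2.3 returns $R_{s_0}x$ exactly. Finally, non-vanishing DFT of $x$ makes $C_x$ invertible, so line 2.4 recovers $R_{-s_0}\rho$ from $M^1 = C_x \rho$ by deconvolution, completing the recovery up to the common permissible cyclic shift.
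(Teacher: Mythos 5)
Your proposal is correct and follows essentially the same route as the paper's own derivation in Section~\ref{sec:moment-inversion}: read off $P_x$ from the diagonal of $F M^2 F^{-1}$ using $(\F\rho)[0]=1$, whiten to obtain the orthogonal eigendecomposition $\widetilde{M}^2=C_{\tilde x}D_\rho C_{\tilde x}^T$, use the distinct entry of $\rho$ to isolate a simple eigenvector $\alpha R_{s_0}\tilde x$, undo the normalization, fix the scale via $\operatorname{Sum}(M^1)=\operatorname{Sum}(x)$ (which is nonzero since $(\F x)[0]\neq 0$), and deconvolve for $\rho$. The only additions are small explicit checks the paper leaves implicit (realness of $\tilde x$, and that the recovered distribution is $R_{-s_0}\rho$), so nothing further is needed.
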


\subsection{Estimating $x$ in low SNR}
\label{sec:spectral_method}

Section \ref{sec:moment-inversion} shows that Algorithm \ref{alg:spec00} recovers $x$ exactly from the exact values of $M^1$ and $M^2$, as long as the DFT of $x$ is non-vanishing and $\rho$ has at least one distinct entry. In this section we show that under the same conditions, Algorithm \ref{alg:spec00} is stable under small perturbations of the moments. We also show that if $N=\omega(\sigma^4)$, or equivalently $N=\omega(1/\SNR^2)$, the MSE of the estimate given by Algorithm \ref{alg:spec00} converges to $0$ as $N$ diverges.

We first observe that whenever $\rho$ is aperiodic, we can modify the observations to assume that $\rho$ in fact has all distinct entries. Indeed, we generate a new set of measurements $z_j = R_{S'_j}y_j$, where $S'_j$ are drawn from a new, known distribution $\theta$. In this case, the translations are distributed according to $\rho \ast \theta$. The following lemma shows that by choosing $\theta$ as a random probability distribution on the simplex, we can ensure that all entries of $\rho \ast \theta$ are distinct with probability $1$. Note that if the DFT of $\theta$ is non-vanishing (which holds with probability 1 for random $\theta$), then one can recover fully $\rho$ from $\rho \ast \theta$.

\begin{lem} \label{prop:reshuffling}
Let $\rho$ be an aperiodic vector on the simplex and let $\theta$ be a random probability density function on the simplex. Then, all entries of $\rho\ast\theta$ are distinct with probability 1.
\begin{proof}
See Appendix~\ref{sec:proof_prop_reshuffling}.
\end{proof}
\end{lem}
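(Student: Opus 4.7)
The plan is to reduce the conclusion to a union bound over the $\binom{L}{2}$ pairs of entries of $\rho\ast\theta$, and to show that for each pair the ``collision'' event is supported on a proper affine subspace of $\Delta^L$, hence has measure zero provided $\theta$ is drawn from an absolutely continuous law on the simplex.

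Concretely, I would fix indices $i\neq j$ in $\Z_L$ and expand the convolution coordinate-wise: $(\rho\ast\theta)[i]-(\rho\ast\theta)[j]=\sum_{k=0}^{L-1}\bigl(\rho[i-k]-\rho[j-k]\bigr)\theta[k]=\langle v_{ij},\theta\rangle$, where $v_{ij}[k]:=\rho[i-k]-\rho[j-k]$. So the event that the $i$th and $j$th entries of $\rho\ast\theta$ coincide is exactly the event that $\theta$ lies in the hyperplane $H_{ij}=\{u\in\RL:\langle v_{ij},u\rangle=0\}$.

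The key step is to verify that $v_{ij}\neq 0$ whenever $i\neq j$. Observe that $v_{ij}=0$ would mean $\rho[i-k]=\rho[j-k]$ for every $k$, i.e.\ $\rho=R_{j-i}\rho$, which says $\rho$ has period $j-i\in\{1,\dots,L-1\}$ and contradicts aperiodicity. Hence $H_{ij}$ is a proper hyperplane of $\RL$, and $H_{ij}\cap\Delta^L$ is a proper affine subset of $\Delta^L$ of dimension at most $L-2$. Because $\theta$ is a random pdf on the simplex (interpreted, as is standard here, as being absolutely continuous with respect to the natural $(L-1)$-dimensional Lebesgue measure on $\Delta^L$, e.g.\ a Dirichlet law), we get $\Pr\bigl(\theta\in H_{ij}\bigr)=0$.

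Finally, a union bound over the finitely many unordered pairs $\{i,j\}\subset\Z_L$ with $i\neq j$ yields $\Pr\bigl(\exists\,i\neq j:(\rho\ast\theta)[i]=(\rho\ast\theta)[j]\bigr)=0$, completing the proof. No step looks technically difficult: the only substantive input is the equivalence between $v_{ij}\equiv 0$ and $\rho$ having period $j-i$, and the only modeling point to flag is the precise meaning of ``random probability density on the simplex'' (which should be stated as absolute continuity with respect to Lebesgue measure on $\Delta^L$); once that is in place, the argument is elementary measure theory.
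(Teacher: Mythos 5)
Your proposal is correct and follows essentially the same route as the paper: both reduce the collision of entries $i$ and $j$ to the event that $\theta$ lies in the hyperplane orthogonal to $(e_i-e_j)^T C_\rho$ (your $v_{ij}$), show this normal vector is nonzero precisely because aperiodicity rules out $\rho = R_{j-i}\rho$, and conclude by a measure-zero/union-bound argument. Your write-up is in fact slightly more careful than the paper's in making explicit the absolute-continuity assumption on the law of $\theta$, but the substance is identical.
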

Using this lemma, we will assume from now on that all entries of $\rho$ are distinct. The following corollary states that Algorithm \ref{alg:spec00} is stable to perturbations of the moments and power spectrum:

\begin{corollary}
	\label{cor:stability}
	Suppose $x$ has non-vanishing $\DFT$ and denote by $\hatM^1$ and $\hatM^2$ the sample moments defined by equations \eqref{eq:first_moment_est} and \eqref{eq:second_moment_est}. Suppose that $\| \hatM^1 - M^1\|_{\text{\emph{F}}} \le \varepsilon$ and $\| \hatM^2 - M^2 \|_{\text{\emph{F}}} \le \varepsilon$, for sufficiently small $\varepsilon > 0$. Then {Algorithm~\ref{alg:spec00}}, with input data $\hatM^1$ and $\hatM^2$, returns an estimate $\widehat{X}_{\text{Spectral}}$ of $x$ with error at most $C \varepsilon$, where $C$ is a finite and positive constant which depends only on $x$ and $\rho$.
	
\end{corollary}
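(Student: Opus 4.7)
The plan is to track the perturbation error $\varepsilon$ through each line of Algorithm~\ref{alg:spec00}, showing that every step is locally Lipschitz (as a map between normed spaces) in a neighborhood of the exact inputs, with Lipschitz constants depending only on $x$ and $\rho$. Hatted symbols refer to quantities computed from $\hatM^1, \hatM^2$, and unhatted letters to the same quantities computed from $M^1, M^2$.

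Steps 1.1--1.4 are routine. Step 1.1 is linear in $M^2$, giving $\|\widehat{P}_x - P_x\|_\infty \lesssim \varepsilon$. Step 1.2 applies the map $z \mapsto z^{-1/2}$ coordinate-wise, which is smooth at $P_x$ because $P_x[k] = |(\F x)[k]|^2 > 0$ by the non-vanishing $\DFT$ hypothesis; hence $\|\hat p - p\|_\infty \lesssim \varepsilon$. Step 1.3 is linear in $p$, and Step 1.4 is bilinear in $(Q, M^2)$ with both factors bounded uniformly near the true values, yielding $\|\widehat A - A\|_{\text{F}} \lesssim \varepsilon$, where $A := \widetilde{M}^2$.

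The crucial step is 2.1. By equation~\eqref{eq:SecondMomentExpectation} one has $A = C_{\tilde x} D_\rho C_{\tilde x}^T$, and since $|(\F\tilde x)[k]| = 1$ for every $k$, the real matrix $C_{\tilde x}$ is orthogonal; this factorization is therefore an eigendecomposition, with eigenvalues equal to the entries of $\rho$ and eigenvectors equal to the cyclic shifts of $\tilde x$. By Lemma~\ref{prop:reshuffling} one may assume all entries of $\rho$ are distinct, so the eigenvalue targeted by $\operatorname{UniqEig}$ has a positive spectral gap $\delta := \min_{i \ne j} |\rho[i]-\rho[j]|$. For $\varepsilon$ small enough that the perturbed eigenvalues do not collide, the Davis--Kahan $\sin\Theta$ theorem provides a shift $s \in \{0,\dots,L-1\}$ and a sign such that $\|\hat v \mp R_s \tilde x\| \lesssim \varepsilon / \delta$. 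This is the main obstacle: the constant in the final bound scales like $1/\delta$, and this is exactly the quantitative form of the reason the distinct-entry hypothesis on $\rho$ is needed.

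Finally, Steps 2.2--2.3 are benign. Coordinate-wise multiplication of $\F v$ by $P_x^{1/2}$ in the Fourier domain is Lipschitz since $P_x$ is bounded on the set in question, so $\|\hat{\tilde v} \mp R_s x\| \lesssim \varepsilon$. Step 2.3 divides by $\operatorname{Sum}(\tilde v)$, which is well-defined and close to $\operatorname{Sum}(x) = (\F x)[0] \ne 0$ (again by the non-vanishing $\DFT$ assumption), so division is locally Lipschitz; moreover the residual $\pm$ sign from Step 2.1 is cancelled in this step because both the numerator $\operatorname{Sum}(\hatM^1)$ and the denominator $\operatorname{Sum}(\hat{\tilde v})$ respond consistently to the sign. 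Combining all bounds, one obtains $\min_{s} \|\widehat{X}_{\text{Spectral}} - R_s x\| \le C\varepsilon$ with $C = C(x,\rho)$ finite and positive, as claimed.
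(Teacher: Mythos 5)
Your proposal is correct and follows essentially the same route as the paper's proof: propagate the $\varepsilon$-perturbation through the power-spectrum normalization using the non-vanishing DFT, apply a Davis--Kahan-type $\sin\Theta$ bound (the paper cites the Yu--Wang--Samworth variant) with the spectral gap coming from the distinct entries of $\rho$, and cancel the residual sign ambiguity in the final rescaling by $\operatorname{Sum}(\hatM^1)/\operatorname{Sum}(\tilde v_{est})$. The only differences are cosmetic (you use the minimum eigenvalue gap rather than the top gap, and you state the up-to-shift error metric explicitly).
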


\begin{proof}
	See Appendix~\ref{sec:proof_cor_stability}.
\end{proof}

The following theorem shows that if $N$ grows like $\omega(\sigma^4)$, the MSE of the estimator converges to $0$ as $N$ diverges.

\begin{thm}
	\label{thm:exp_decay}
	
	If $N=\omega(\sigma^4)$, the MSE of $\widehat{X}_{\text{Spectral}}$, defined in Corollary \ref{cor:stability}, converges to $0$ as $N$ diverges.

\end{thm}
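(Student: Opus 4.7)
The plan is to reduce the theorem to the deterministic stability statement in Corollary~\ref{cor:stability} by controlling how rapidly the empirical moments $\hatM^1$ and $\hatM^2$ concentrate around their population counterparts $M^1$ and $M^2$. Once we show that the moment errors vanish in probability at a suitable rate, stability translates this into convergence of $\widehat{X}_{\text{Spectral}}$ to a cyclic shift of $x$, and hence $\MSE \to 0$.

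The first technical step is to compute $\E[\|\hatM^k - M^k\|_{\text{F}}^2]$ for $k=1,2$. Since $\hatM^1 = \frac{1}{N}\sum_i Y_i$ is unbiased for $M^1$ and each coordinate has variance at most $\|x\|^2 + \sigma^2$, a direct calculation yields
\[
\E\bigl[\|\hatM^1 - M^1\|_{\text{F}}^2\bigr] = O\!\left(\frac{L\sigma^2 + \|x\|^2}{N}\right).
\]
For the second moment, expanding $Y_i[a]Y_i[b]$ into its signal-signal, signal-noise and noise-noise parts and taking variances shows that the binding contribution comes from $\sigma^2 G_i[a] G_i[b]$, whose variance is of order $\sigma^4$. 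Summing across coordinates,
\[
\E\bigl[\|\hatM^2 - M^2\|_{\text{F}}^2\bigr] = O\!\left(\frac{L^2 \sigma^4 + L\sigma^2 \|x\|^2 + \|x\|^4}{N}\right).
\]
Under the hypothesis $N = \omega(\sigma^4)$, with $L$ and $x$ fixed, both expected squared moment errors tend to $0$, and the second-moment bound dictates the rate.

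Next, I would combine this with the stability guarantee via Markov's inequality. Choosing a sequence $\varepsilon_N \to 0$ slowly enough that $\varepsilon_N^2 N / \sigma^4 \to \infty$, the event $\mathcal{A}_N := \{\|\hatM^k - M^k\|_{\text{F}} \le \varepsilon_N,\ k=1,2\}$ has $\Pr(\mathcal{A}_N) \to 1$. For $N$ large enough that $\varepsilon_N$ falls below the threshold of Corollary~\ref{cor:stability}, we obtain $\|\phi_x(\widehat{X}_{\text{Spectral}}) - x\|^2 \le C^2 \varepsilon_N^2$ on $\mathcal{A}_N$, so the contribution of $\mathcal{A}_N$ to the MSE vanishes.

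The main obstacle is bounding the MSE on the complement $\mathcal{A}_N^c$. On this rare event the division by $|\hat{P}_x|^{1/2}$ and the eigenvector extraction in Algorithm~\ref{alg:spec00} could, in principle, produce an estimator of arbitrarily large norm. I would handle this by augmenting the algorithm with a mild truncation --- for example, replacing the output by $\hatM^1$ whenever the minimum of $|\hat{P}_x|$ drops below a vanishing threshold or $\|\widehat{X}_{\text{Spectral}}\|$ exceeds a slowly growing bound. Such a truncation is inactive on $\mathcal{A}_N$ for large $N$, so the stability conclusion above is preserved, while on $\mathcal{A}_N^c$ it yields a deterministic bound on $\|\widehat{X}_{\text{Spectral}}\|^2$ that, combined with $\Pr(\mathcal{A}_N^c) \to 0$ via Cauchy--Schwarz, forces the $\mathcal{A}_N^c$ contribution to the MSE to vanish as well. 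Adding the two contributions gives $\MSE \to 0$.
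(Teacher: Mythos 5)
Your argument is sound and reaches the right conclusion, but it takes a genuinely different route from the paper's. The paper's proof notes that the residuals $\hatM^1-M^1$, $\hatM^2-M^2$ and $\widehat{P}_x-P_x$ are subexponential, invokes a Bernstein-type concentration inequality together with Corollary~\ref{cor:stability} to obtain an exponential tail bound $\mathbb{P}\bigl[\min_{s}\|R_s\widehat{X}_{\text{Spectral}}-x\|^2\ge t\bigr]\le C_1\exp\bigl(-\tfrac{N}{\sigma^4}\min\{t/C_2,\sqrt{t}/C_3\}\bigr)$, and integrates this tail over $t\in(0,\infty)$ to get $\MSE=\OO(\sigma^4/N)$. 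You instead bound $\E\|\hatM^k-M^k\|_{\text{F}}^2$ directly, pass to a high-probability event via Markov's inequality with $\varepsilon_N\to0$ and $\varepsilon_N^2 N/\sigma^4\to\infty$, and split the expectation over this event and its complement. Your route is more elementary (no subexponential machinery), but it only delivers $\MSE\to0$ with no explicit rate, whereas the tail-integration argument recovers the sharp $\OO(\sigma^4/N)$ rate matching the lower bound \eqref{eq:corsigma4}. Your explicit worry about the complement event --- that the normalization by $|\widehat{P}_x|^{-1/2}$ and the eigenvector extraction could blow up the output --- is legitimate, and the paper's own proof quietly sidesteps it by asserting the tail bound for all $t$ even though Corollary~\ref{cor:stability} only applies for sufficiently small perturbations; in this respect your treatment is more careful. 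The one caveat is that your truncation modifies Algorithm~\ref{alg:spec00}, so strictly you prove the theorem for a truncated variant of $\widehat{X}_{\text{Spectral}}$ rather than for the estimator as defined; you should either state this explicitly or supply a direct bound on the untruncated output's second moment on the bad event.
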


\begin{proof}
	See Appendix~\ref{sec:proof_spectral l2 convergence}.
\end{proof}

Algorithm \ref{alg:SpectralAlg} describes the entire pipeline for estimating $x$ from the noisy measurements~\eqref{eq:mra}, including randomly shifting the observations, estimating the moments, and using Algorithm \ref{alg:spec00} to estimate $x$ from the estimated moments.

\begin{algorithm}[ht]
    \caption{Estimating $x$ and $\rho$ from noisy data}
    \label{alg:SpectralAlg}
    \begin{algorithmic}[1]
        \REQUIRE  $y_j$, $j=1,\ldots,N$ of~\eqref{eq:mra} and noise variance $\sigma^2$.
        \ENSURE An estimated signal $\hat{x}$ and estimated distribution $\hat{\rho}$.
        
        \COMMENT{{\bf Reshuffling observations (optional)}} \setcounter{ALC@line}{0}  \renewcommand{\theALC@line}{1.\arabic{ALC@line}}
        \STATE draw a random distribution $\theta \in \Delta^L $
        \STATE for each $j=1,\dots,N$: $y_j \gets R_{S^\prime_j}y_j $ for $S^\prime_j \sim\theta$

        \COMMENT{ {\bf Moment estimation}} \setcounter{ALC@line}{0}  \renewcommand{\theALC@line}{2.\arabic{ALC@line}}
        \STATE $\hatM^1 \gets \frac{1}{N} \sum_{j=1}^N y_j$
        \STATE $\hatM^2 \gets \frac{1}{N} \sum_{j=1}^N y_j y_j^T -\sigma^2 I $ \label{line:ps1}
                
        \COMMENT{ {\bf Eigendecomposition and normalization}} \setcounter{ALC@line}{0}  \renewcommand{\theALC@line}{3.\arabic{ALC@line}}
        
        \STATE obtain $\hat{x}$ and $\hat{\rho}^\prime$ from Algorithm \ref{alg:spec00} with $\hatM^1$ and $\hatM^2$.
        \RETURN $\hat{x}$ and $\hat{\rho} = C_\theta^{-1}\hat{\rho}'.$
    \end{algorithmic}
\end{algorithm}

\subsection{Non-uniqueness for periodic $\rho$}
\label{sec:nonunique}

We have shown that the first and the second moments suffice to determine the signal if the distribution is aperiodic. In this section, we provide a complementary result, showing that if the distribution is periodic, then having the first two moments is not enough to uniquely determine a signal with non-vanishing $\DFT$. In particular, given a distribution $\rho$ with period $\ell$, a signal $x_2$ has the same first two moments as $x_1$ if it satisfies: 
\begin{equation} \label{eq:construction_example} 
(\F{x_2})[k] = \left\{
\begin{array}{ll}
(\F{x_1})[k],  & k=t\frac{L}{\ell}, \quad t=0,\ldots,\ell-1 , \\
-(\F{x_1})[k], & \mbox{  otherwise. }
\end{array}
\right.
\end{equation}
This construction is demonstrated in Figure~\ref{fig: example_construction}. 

\begin{figure*}[ht]
    \centering
    \begin{subfigure}[t]{0.22\textwidth}
        \includegraphics[scale=.2]{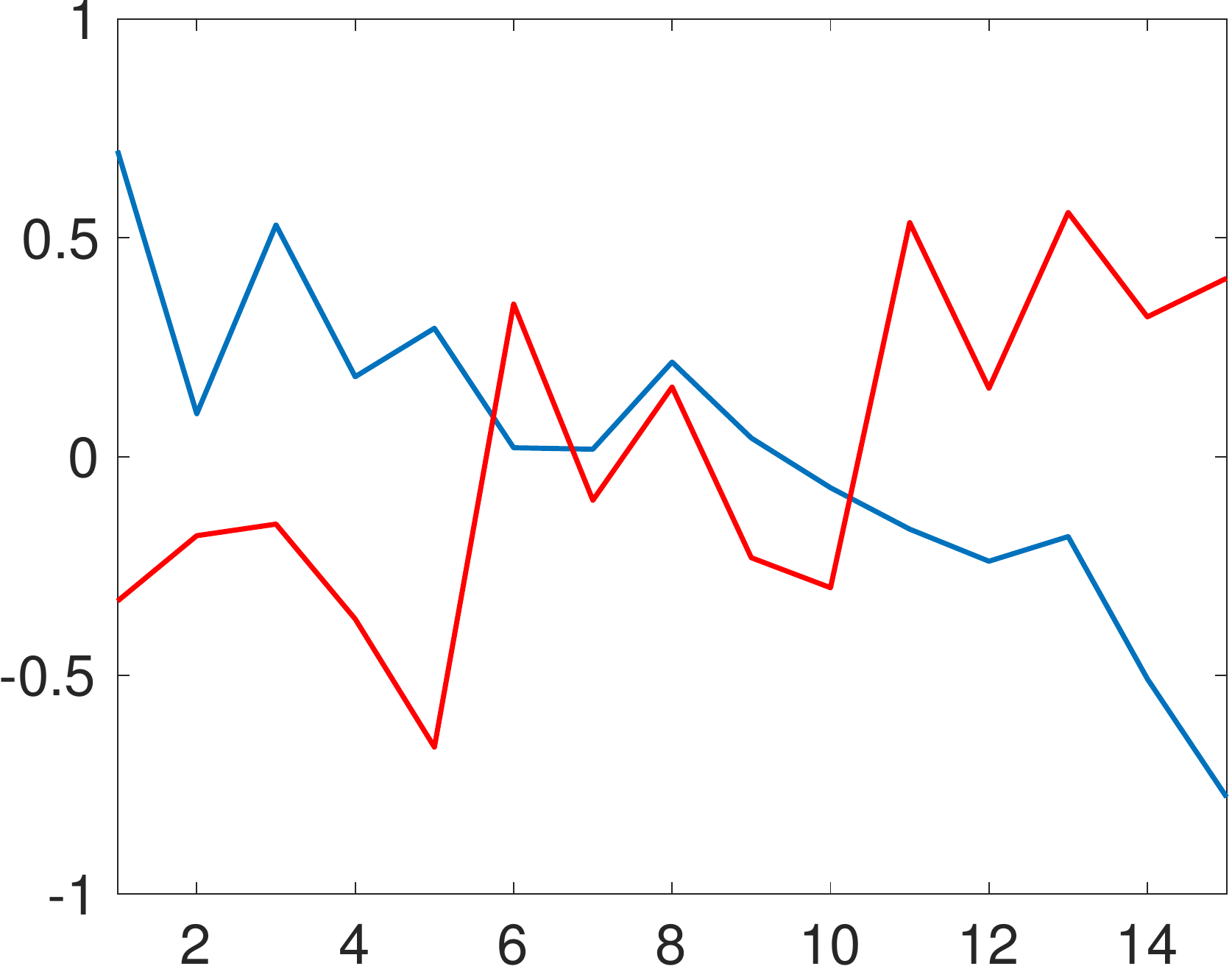}
        \caption{The two different real signals $x_1$ and $x_2$ of length $15$}    \label{fig:two_signals}
    \end{subfigure}   \quad
    \begin{subfigure}[t]{0.22\textwidth}
        \includegraphics[scale=0.2]{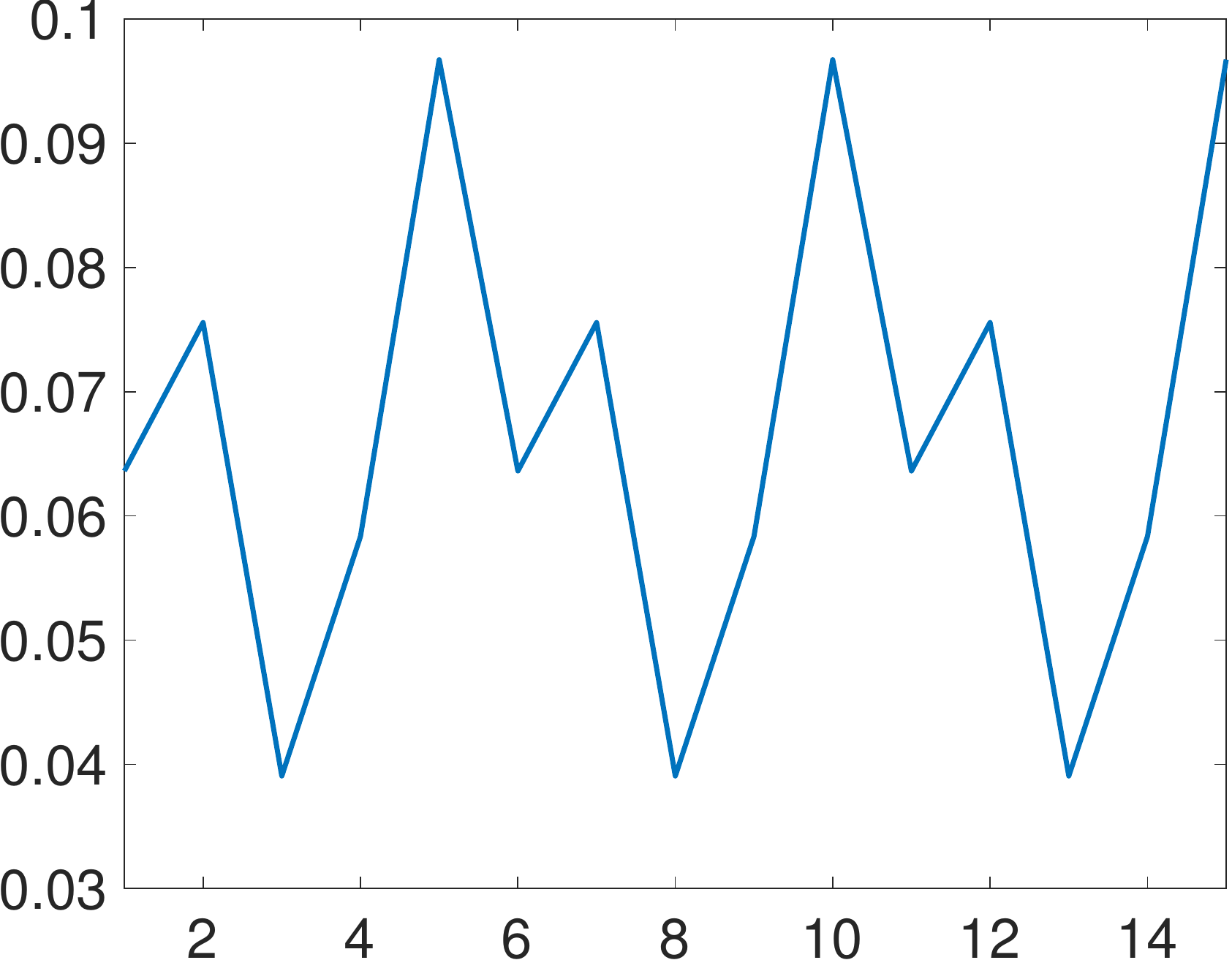}
        \caption{The 5-periodic distribution} \label{fig:periodic_rho}
    \end{subfigure}      \quad
    \begin{subfigure}[t]{0.22\textwidth}
        \includegraphics[scale=0.2]{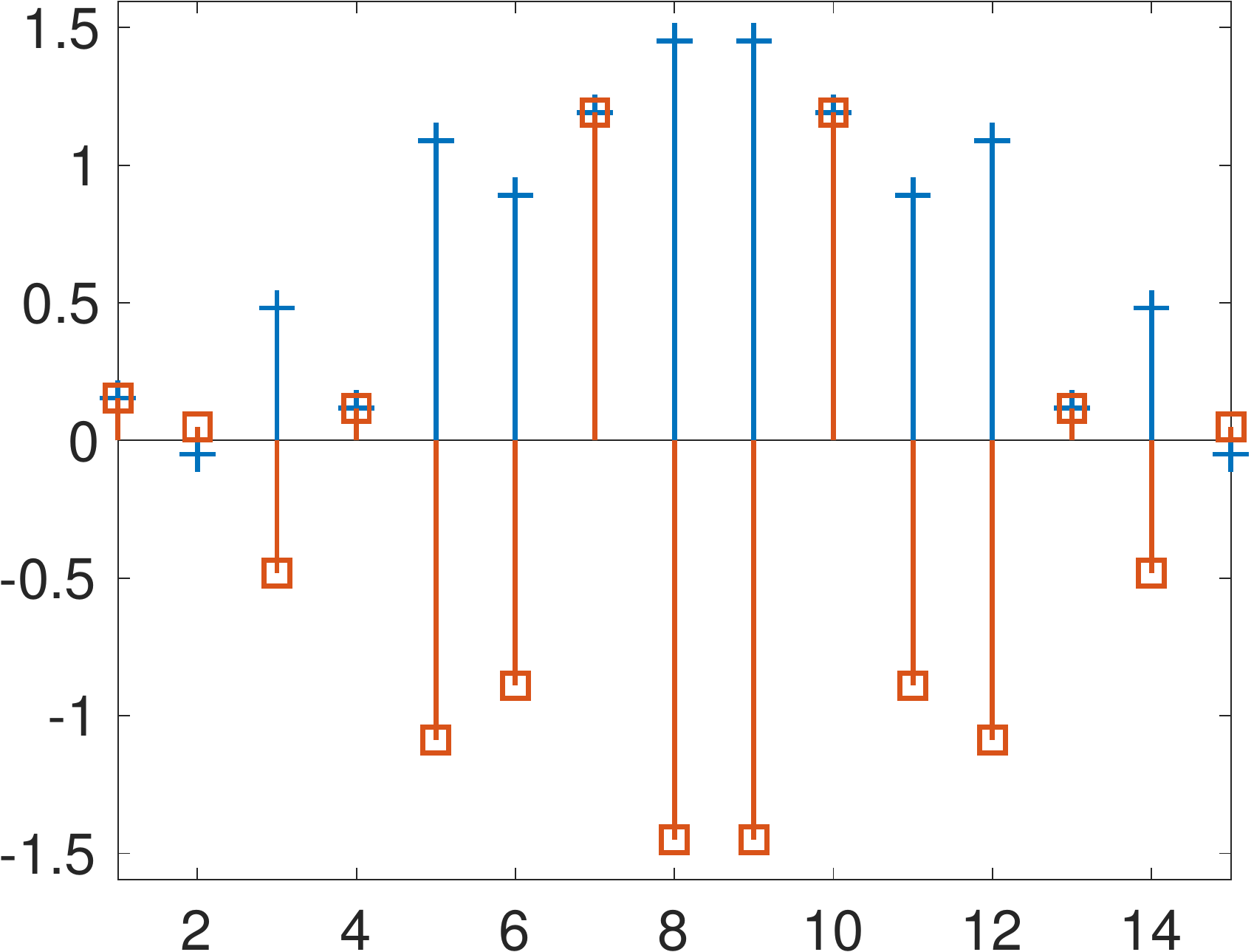}
        \caption{The real parts of $\F x_1$ and $\F x_2$}    \label{fig:real_part}
    \end{subfigure}  \quad
    \begin{subfigure}[t]{0.22\textwidth}
        \includegraphics[scale=0.2]{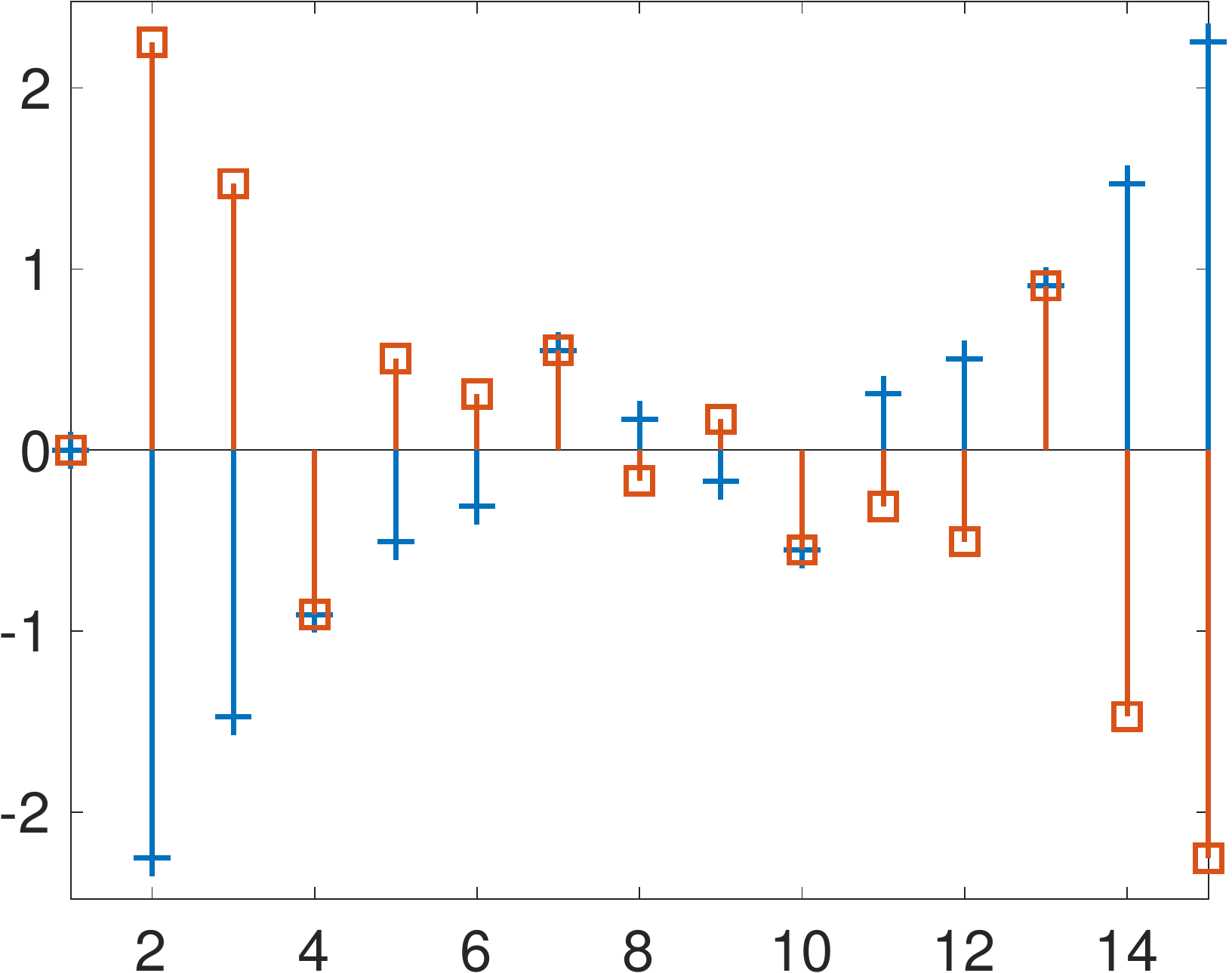}
        \caption{The imaginary parts of $\F x_1$ and $\F x_2$}    \label{fig:image_part}
    \end{subfigure}  
    \caption{\label{fig: example_construction}This example demonstrates the constriction of~\eqref{eq:construction_example} and Proposition~\ref{prop:counterexample}. The figures present two different real signals of length $15$ and a 5-periodic distribution. The Fourier transforms of the signals obey~\eqref{eq:construction_example}. 
        The two signals have the same first two moments under the periodic distribution. } 
\end{figure*}

\begin{proposition} \label{prop:counterexample}
    Let $\ell<L/2$ be a divisor of $L>1$. Suppose that $\rho$ is periodic, with period $\ell$, and let $x_1$ be a given signal with non-vanishing $\DFT$. Then the signal $x_2$ defined by \eqref{eq:construction_example} is not a translation of $x_1$, and has the same first and second moments as $x_1$. Therefore, if the distribution is periodic, then any signal with non-vanishing $\DFT$ is not uniquely determined from its first two moments.
\end{proposition}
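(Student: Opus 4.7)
The plan is to work entirely in the Fourier domain. The first key observation is that the $\ell$-periodicity of $\rho$ localizes its DFT: writing $L = m\ell$ with $m = L/\ell$ and summing $(\F\rho)[k] = \sum_i \rho[i] e^{-2\pi\iota ki/L}$ in blocks of length $\ell$ yields an inner geometric sum $\sum_{r=0}^{m-1} e^{-2\pi\iota kr/m}$, which vanishes unless $m \mid k$. Thus $\mathrm{supp}(\F\rho)\subseteq\{0,L/\ell,\ldots,(\ell-1)L/\ell\}$, i.e.\ $\F\rho$ is supported precisely on the index set where $\F x_1$ and $\F x_2$ coincide, while on the complement $\F x_2$ differs from $\F x_1$ by a sign.

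Equality of first moments follows immediately from \eqref{eqn:first_moment_fourier}: $\F M^1 = \F x \odot \F\rho$, so only Fourier coefficients at multiples of $L/\ell$ contribute, and these agree for $x_1$ and $x_2$. For the second moment I would use \eqref{eq:powerspectrum}, whose $(k_1,k_2)$-entry is $\tfrac{1}{L}(\F x)[k_1]\,(\F\rho)[k_1-k_2]\,\overline{(\F x)[k_2]}$. The factor $(\F\rho)[k_1-k_2]$ forces $k_1-k_2$ to be a multiple of $L/\ell$; since multiples of $L/\ell$ form a subgroup, $k_1$ lies in $\mathrm{supp}(\F\rho)$ if and only if $k_2$ does. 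Hence either both $(\F x_2)[k_i]$ equal $(\F x_1)[k_i]$ or both flip sign, and the product $(\F x)[k_1]\overline{(\F x)[k_2]}$ is unchanged. So the second moments coincide.

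It remains to show $x_2$ is not a cyclic translate of $x_1$. If $x_2=R_s x_1$ then, using the non-vanishing hypothesis on $\F x_1$, the ratio $(\F x_2)[k]/(\F x_1)[k] = e^{-2\pi\iota ks/L}$ must equal $+1$ on $\mathrm{supp}(\F\rho)$ and $-1$ off it. Evaluating at $k=L/\ell$ forces $s$ to be a multiple of $\ell$; writing $s=m'\ell$, the character becomes $\chi(k)=e^{-2\pi\iota km'/(L/\ell)}$. Because $\ell<L/2$ we have $L/\ell\ge 3$, so neither $k=1$ nor $k=2$ is a multiple of $L/\ell$; but then $\chi(1)=-1$ would force $\chi(2)=\chi(1)^2=1$, contradicting $\chi(2)=-1$.

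The main subtlety I anticipate is this last step, which is also the only place where the hypothesis $\ell<L/2$ is genuinely used: it guarantees two distinct off-support frequencies ($k=1$ and $k=2$) that together impose incompatible constraints on $s$. Everything else is a direct Fourier-side computation driven by the block structure \eqref{eq:construction_example} together with the localization of $\F\rho$ that $\ell$-periodicity provides.
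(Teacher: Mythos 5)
Your proof is correct and follows the same Fourier-domain strategy as the paper: the $\ell$-periodicity of $\rho$ localizes $\F\rho$ to the multiples of $L/\ell$, equality of the first moments is then immediate from $\F M^1=\F x\odot\F\rho$, and equality of the second moments follows because the factor $(\F\rho)[k_1-k_2]$ forces $k_1-k_2$ into the subgroup of multiples of $L/\ell$, so the two sign flips in $(\F x_2)[k_1]\overline{(\F x_2)[k_2]}$ either both occur or both do not. The one place you go beyond the paper is the final step: the paper's appendix only observes that $x_1\neq x_2$ (since some Fourier coefficient flips sign), whereas the proposition asserts the stronger claim that $x_2$ is not a \emph{cyclic translate} of $x_1$. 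Your character argument — that $x_2=R_sx_1$ would force $e^{-2\pi\iota ks/L}$ to equal $+1$ on the multiples of $L/\ell$ and $-1$ elsewhere, which is incompatible at $k=1$ and $k=2$ once $L/\ell\ge 3$ — supplies exactly this missing piece and correctly identifies $\ell<L/2$ as the hypothesis that makes it work. So your write-up is not only correct but slightly more complete than the paper's own proof on this point.
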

    \begin{proof}
        See Appendix~\ref{sec:proof_prop_counterexample}.
    \end{proof}
In Section~\ref{sec:information_limit} we established this result from an information-theoretic perspective by showing that the sample complexity for periodic distribution grows like $\omega(1/\SNR^3)$, and extending~\cite{bandeira2017optimal} that considered only the uniform distribution. Indeed, 
the uniform distribution is merely a special case of periodic distributions with minimal period $\ell = 1$. When $\ell>1$, one can interpret the periodicity as having a uniform distribution over the different cosets of $\mathbb{Z}_L$ with respect to the subgroup generated by a translation in $\ell$ coordinates. These cosets are exactly the analogue of the sparsity pattern of ${\F{\rho}}$ attained by jumps of $L/\ell$. This also explains why uniformity is the only pathological case for a prime $L$. Therefore, if one can choose how to sample the signal, a prime number of samples should be considered. 

As it turns out, there is one special case where the first two moments are enough to determine $x$ uniquely, up to cyclic translation, even when $\rho$ is periodic. This special case occurs when $L$ is even and $\rho$ is $L/2$-periodic. Note that in this case the information theoretic lower bound presented in section \ref{sec:information_limit} is also $\omega(1/\SNR^2)$. This result is formulated in the following claim:
\begin{claim} \label{claim:L_2_period}
    Suppose that $x$ has non-vanishing $\DFT$, $L$ is even and $\rho$ is $L/2$-periodic. Then, $x$ is uniquely determined from its first two moments, up to global translation.
    \begin{proof}
        See Appendix~\ref{sec:proof_L_2_period}.
    \end{proof} 
\end{claim}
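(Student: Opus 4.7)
The plan is to work entirely in the Fourier domain, reduce to a functional equation on the phases, and solve it using the subgroup structure of $\mathrm{supp}(\F\rho_1)$. I interpret the hypothesis ``$\rho$ is $L/2$-periodic'' as saying $L/2$ is the \emph{minimal} period of $\rho$ (otherwise Proposition~\ref{prop:counterexample}, applied with a smaller period $\ell\mid L/2$, already produces a non-translate with identical first two moments). Let $S := \mathrm{supp}(\F\rho_1)$. Since $(\F\rho_1)[k]=0$ for odd $k$, we have $S \subseteq 2\Z_L$, and the minimality of the period is equivalent to $S$ generating the full subgroup $2\Z_L$.

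Suppose $(x_2,\rho_2)$ gives the same first two moments as $(x_1,\rho_1)$. Matching the diagonal of~\eqref{eq:powerspectrum} gives $|(\F x_2)[k]| = |(\F x_1)[k]|$, so since $x_1$ has non-vanishing DFT I may write $(\F x_2)[k] = e^{i\phi_k}(\F x_1)[k]$ for phases $\phi_k \in \reals/2\pi\Z$. Equation~\eqref{eqn:first_moment_fourier} then forces $(\F\rho_2)[k] = e^{-i\phi_k}(\F\rho_1)[k]$ for $k \in S$ and $(\F\rho_2)[k]=0$ otherwise. Matching the off-diagonal entries of~\eqref{eq:powerspectrum} and cancelling the non-vanishing factors $(\F x_1)[k]$, $(\F x_1)[l]$, $(\F\rho_1)[k-l]$ yields, for every $k,l$ with $k-l \in S$, the identity $\phi_{k-l} \equiv \phi_k - \phi_l \pmod{2\pi}$; setting $k=l$ gives $\phi_0 = 0$.

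The heart of the argument is solving this functional equation. Using that $S$ is negation-symmetric (because $\rho_1$ is real) and generates $2\Z_L$, for any $m \in 2\Z_L$ write $m = r_1 + \cdots + r_n$ with $r_i \in S$. The telescoping chain $l, l+r_1, l+r_1+r_2, \dots, l+m$ and iterated application of the functional equation give $\phi_{l+m} - \phi_l \equiv \sum_i \phi_{r_i} \pmod{2\pi}$ for every $l \in \Z_L$, and well-definedness across different decompositions follows from the same relation. In particular, $\phi$ restricted to $2\Z_L$ is a group homomorphism $2\Z_L \to \reals/2\pi\Z$, hence of the form $\phi_{2m} = 2\beta m$ with $\beta L \equiv 0 \pmod{2\pi}$, i.e., $\beta = 2\pi s/L$ for some integer $s$. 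Applying the chaining with base point $l=1$ extends this to $\phi_k = \alpha_{k \bmod 2} + \beta k$, with $\alpha_0 = 0$ and $\alpha_1 := \phi_1 - \beta$. Reality of $x_2$ is equivalent to $\phi_{-k} \equiv -\phi_k \pmod{2\pi}$; since $L$ is even, $-k$ has the same parity as $k$, so this reduces to $2\alpha_1 \equiv 0 \pmod{2\pi}$, forcing $\alpha_1 \in \{0,\pi\}$.

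I conclude by splitting into cases. If $\alpha_1 = 0$, then $\phi_k = 2\pi sk/L$ for all $k$, so $(\F x_2)[k] = (\F R_{-s}x_1)[k]$ and $x_2 = R_{-s}x_1$. If $\alpha_1 = \pi$, using $\pi k \equiv \pi(k \bmod 2) \pmod{2\pi}$ (valid since $L$ is even) I rewrite $\phi_k \equiv 2\pi(s+L/2)k/L$, giving $x_2 = R_{-(s+L/2)}x_1$. In either case $x_2$ is a cyclic translate of $x_1$, as required. The main obstacle is the telescoping/homomorphism step: one must verify that the extension of $\phi$ from $S$ to all of $2\Z_L$ is consistent across different representations $m = \sum_i r_i$, which relies crucially both on $S$ generating $2\Z_L$ and on $S$ being symmetric under negation.
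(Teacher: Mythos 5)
Your proof is correct, but it takes a genuinely different route from the paper's. The paper's argument is spectral: after whitening the power spectrum so that $|(\F x)[k]|=1$ and $C_x$ is orthogonal, it reads $M^2=C_xD_\rho C_x^T$ as an eigendecomposition in which each eigenvalue $\rho[\ell]$ has a two-dimensional eigenspace spanned by $R_\ell x$ and $R_{\ell+L/2}x$, and then characterizes $x$ (within such an eigenspace $u=\alpha x+\beta R_{L/2}x$) as the unique vector orthogonal to its own $L/2$-translate, since $\langle u,R_{L/2}u\rangle=2\alpha\beta\|x\|^2$ forces $\alpha\beta=0$. Your argument instead mirrors the paper's proof of Theorem~\ref{thm:uniqueness}: you reduce to the phase equation $\phi_{k-l}\equiv\phi_k-\phi_l$ on $\mathrm{supp}(\F\rho_1)$, solve it on the index-two subgroup $2\Z_L$ by the telescoping/homomorphism step, and kill the residual degree of freedom $\alpha_1$ on odd frequencies using realness of $x_2$ (your two cases $\alpha_1\in\{0,\pi\}$ correspond exactly to the paper's two eigenvectors $x$ and $R_{L/2}x$, since $\F(R_{L/2}x)[k]=(-1)^k(\F x)[k]$). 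Your version is longer but more self-contained in two respects: it makes explicit that $L/2$ must be the \emph{minimal} period (equivalently that $\mathrm{supp}(\F\rho_1)$ generates $2\Z_L$), a hypothesis the paper only enforces implicitly through its ``no repeated values within a period'' reshuffling assumption, and it shows explicitly how the first moment and the realness of $x_2$ are used, whereas the paper's proof leaves the sign/scale resolution and the role of $M^1$ tacit. What the paper's route buys is brevity and an algorithmic recipe (select the eigenvector orthogonal to its half-period shift) that plugs directly into Algorithm~\ref{alg:spec00}. One very minor quibble: negation-symmetry of $\mathrm{supp}(\F\rho_1)$ is convenient but not strictly ``crucial'' for the extension step, since in the finite group $2\Z_L$ the sub-semigroup generated by any generating set is already the whole subgroup.
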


\section{Connection with the spiked covariance model } \label{sec:spike_model}

In this section, we point out a connection between the spectral algorithm presented in Section~\ref{sec:spectral_algorithm}, and the spiked covariance model well-known in statistics~\cite{johnstone2001distribution, paul2007asymptotics, gavish-donoho-2017, benaych2012singular,dobriban2017optimal}. Though somewhat informal, this analysis will provide insight into how the complexity of recovering $x$ depends on the dimension $L$ when the distribution $\rho$ has a fixed support size.

In the spiked model, we observe a matrix
\begin{align}
    \Y = \X + \mathbf G \in \mathbb{R}^{L \times N},
\end{align}
where $\X$ is a rank $r$ matrix and 
\[  \mathbf G=(G_{ij}) , \quad   G_{ij} \overset{iid}{\sim} \mathcal{N}(0,\sigma^2). \] 
This model is typically studied in the \textit{high-dimensional regime}, in which $L$ grows proportionally to $N$; that is, $L = L(N)$ and $L / N \to \gamma > 0$ as $N \to \infty$. In this setting, there is a precise understanding of the limiting behavior of the data matrix $\Y$ and the low-rank matrix $\X = [X_1,\dots,X_N]$.

In \cite{benaych2012singular} (see also \cite{paul2007asymptotics}), it is shown that when the low-rank matrix $\X$ is random (for instance, its columns may be drawn from a suitable low-rank, mean-zero distribution), then the limiting cosine $c$ of the angles between the top eigenvector of $\X \X^T$ and the top eigenvector of $\Y \Y^T$ is given by the formula:
\begin{align}
\label{cos_limit}
c^2 = 
\begin{cases}
\frac{1 - \sigma^4 \gamma/\lambda^2}{1 + \sigma^2 \gamma/ \lambda}
& \text{ if } \lambda > \sigma^2 \sqrt{\gamma}, \\
0 
& \text{ otherwise,}
\end{cases}
\end{align}
where $\lambda$ is the top eigenvalue of $\X\X^T / N$.

The key phenomenon is the phase transition at 
\begin{equation} \label{eq:phase_transition}
    \lambda_{critical} = \sigma^2 \sqrt{\gamma}.
\end{equation}
It is only when $\lambda$ is greater than this critical value that we are guaranteed a non-trivial correlation between the top eigenvector of the observed matrix $\Y\Y^T/N$ and the top eigenvector of $\X\X^T / N$.

We can view the observation model in the one-dimensional MRA model~\eqref{eq:mra} as a special instance of the spiked model, by taking the $i$th column of $\X$ to be $X_i = R_{S_i} x$. As $N\to\infty$, we can write 
\begin{align} \label{eq:spike_model_second_moment}
\frac{1}{N} \X \X^T = C_x D_{{\rho}} C_x^T.
\end{align}
Consequently, under the assumption that the $\DFT$ of $x$ does not vanish, the rank of $\X$ is the size of the support of ${\rho}$. When the support size of $\rho$ is fixed at $r$, the MRA problem is an instance of the spiked model.

Let us assume that the $\vert (\F x)[k]\vert =1$ for all $k$. This can be done by estimating the power spectrum first and then normalizing all Fourier coefficients. In this case, $C_x$ is an orthogonal matrix. In other words, 
 $x \perp R_\ell x$ for every $\ell \ne 0$; consequently, the $R_\ell x$ are precisely the top $r$ eigenvectors of $\X \X^T / N$, with corresponding eigenvalues $\|x\|^2 \rho[\ell]$. Then,~\eqref{cos_limit} tells us exactly how well we expect the spectral algorithm to perform in recovering $x$; indeed, the theory predicts a non-zero angle between $x$ and the top eigenvector of $\Y \Y^T / N$ whenever:
\begin{align} \label{eq:sample_complexity_spike_model}
N \ge \frac{L \sigma^4}{\|x\|^4 (\max\rho)^2} = \frac{L }{ (\max\rho)^2}\frac{1}{\SNR^2}.
\end{align}
Below this threshold, the output will be essentially random. We see that if the distribution is well-localized, then $\max\rho = \Omega(1)$ (with respect to the growing value of $L$) and then the sample complexity grows like $\frac{L}{\SNR^2}$. On the other hand, if the distribution is almost uniform, then $\max\rho= \mathcal{O}(1/L)$ as $L \to \infty$, and thus the sample complexity will be proportional to $L^3/\SNR^2$. 

To illustrate the relationship between the spiked model and MRA, we ran the following experiment. We generated a signal $x\in\mathbb{R}^{400}$ with i.i.d.\ normal entries and normalized it so that $\|x\|_2=10$. For noise levels $\sigma$ between 0.1 and 10, we drew $N$ samples of $x$ with noise at level $\sigma$, where $N$ is chosen at 100 plus the critical threshold given by~\eqref{eq:sample_complexity_spike_model} for $\sigma = \lambda^{1/2}\gamma^{-1/4}= 5.5313$ according to~\eqref{eq:phase_transition}. For $\sigma$ large enough, $N$ will not be large enough for the spectral method to produce an estimate better than random. The distribution of translations $\rho$ was taken to be $\rho[i] \propto i^2 $, for $i=1,\dots,5$, and zero elsewhere. 
Each experiment was repeated 200 times. The plots in Figure~\ref{fig-compare-spiked} display the average values over these 200 runs.

For each draw, we compute the top eigenvalue of the clean data matrix~\eqref{eq:spike_model_second_moment}, denoted by $\lambda$, and the associated eigenvector, which is a translated copy of $x$. We also compute the top eigenvector of the data matrix $\Y \Y^T / N$. The angle between the two eigenvectors is predicted by~\eqref{cos_limit}. In Figure~\ref{fig:spike_cosine}, we plot the predicted cosine against the true cosine. Clearly, we never attain the predicted value of zero in finite samples, but we see a precipitous decline when the noise level $\sigma$ exceeds its threshold value (the vertical dashed line).

We also measure the relative mean squared error defined by equation~\eqref{eq:mse_def}, where $\widehat{X}$ is the top eigenvector multiplied by $\|x\|$. In Figure~\ref{fig:spike_mse}, we plot this error as a function of $\sigma$. For reference, we also plot the ordinary error predicted by the spiked model (as derived from the predicted cosine between the vectors), without minimizing over shifts. Of course, minimizing over shifts will decrease the error; however, we still see the same qualitative behavior predicted from the spiked model, namely an increase in error as $\sigma$ grows, until the critical threshold of $\sigma$ is reached, after which the error plateaus.

%
%
\begin{figure*}
\centering 
    \begin{subfigure}{0.43\textwidth} 
            \begin{center}
             \includegraphics[scale=0.5]{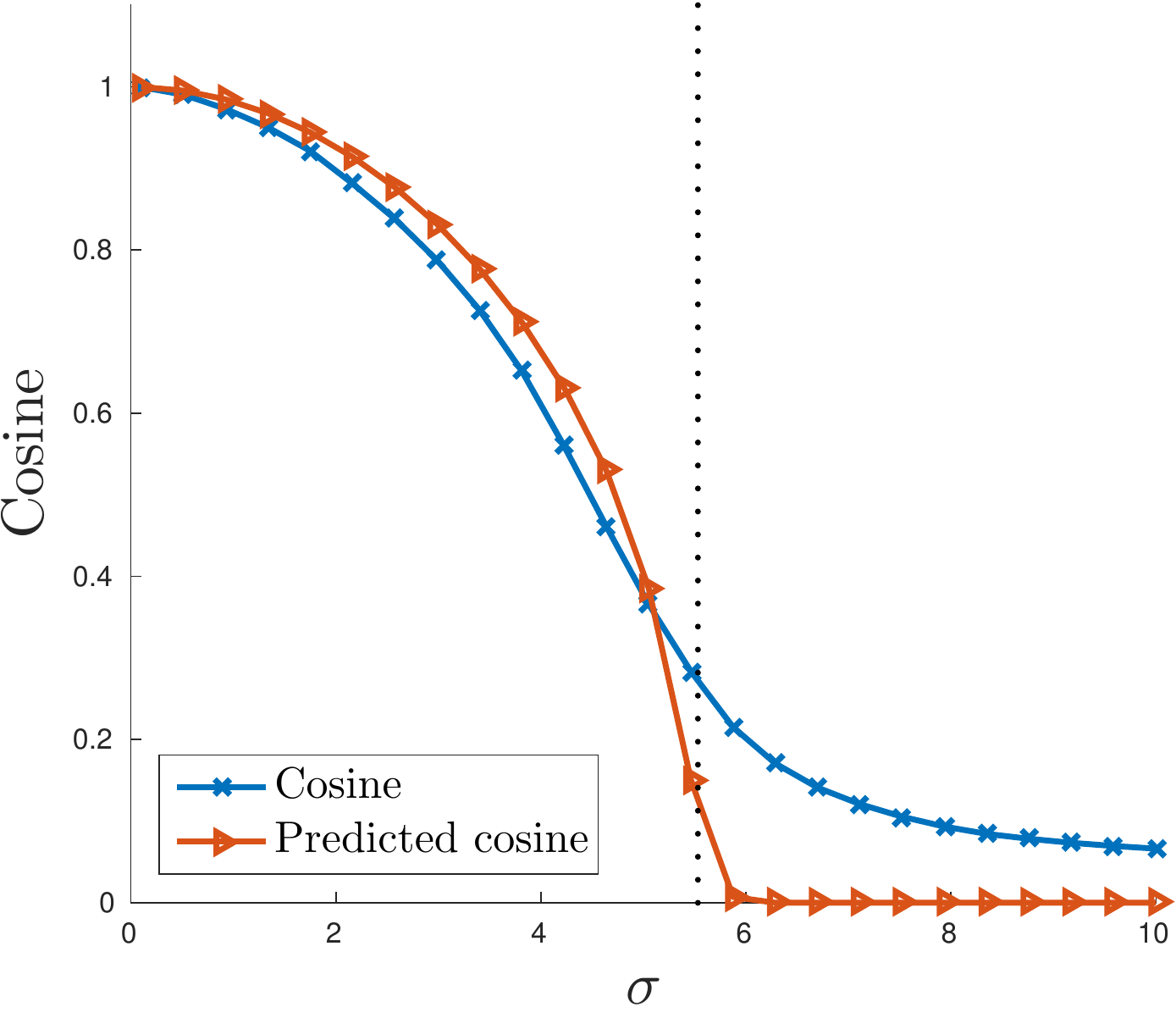}
\caption{ \label{fig:spike_cosine} Empirical cosines between the top eigenvectors of the matrices $\frac{1}{N} \X \X^T$ and $\frac{1}{N} \Y \Y^T$ as a function of the noise level compared to asymptotic cosines predicted by spiked model; see~\eqref{cos_limit}.}            
\end{center}
    \end{subfigure} \quad
        \begin{subfigure}{0.43\textwidth} 
        \begin{center}
            \includegraphics[scale=0.5]{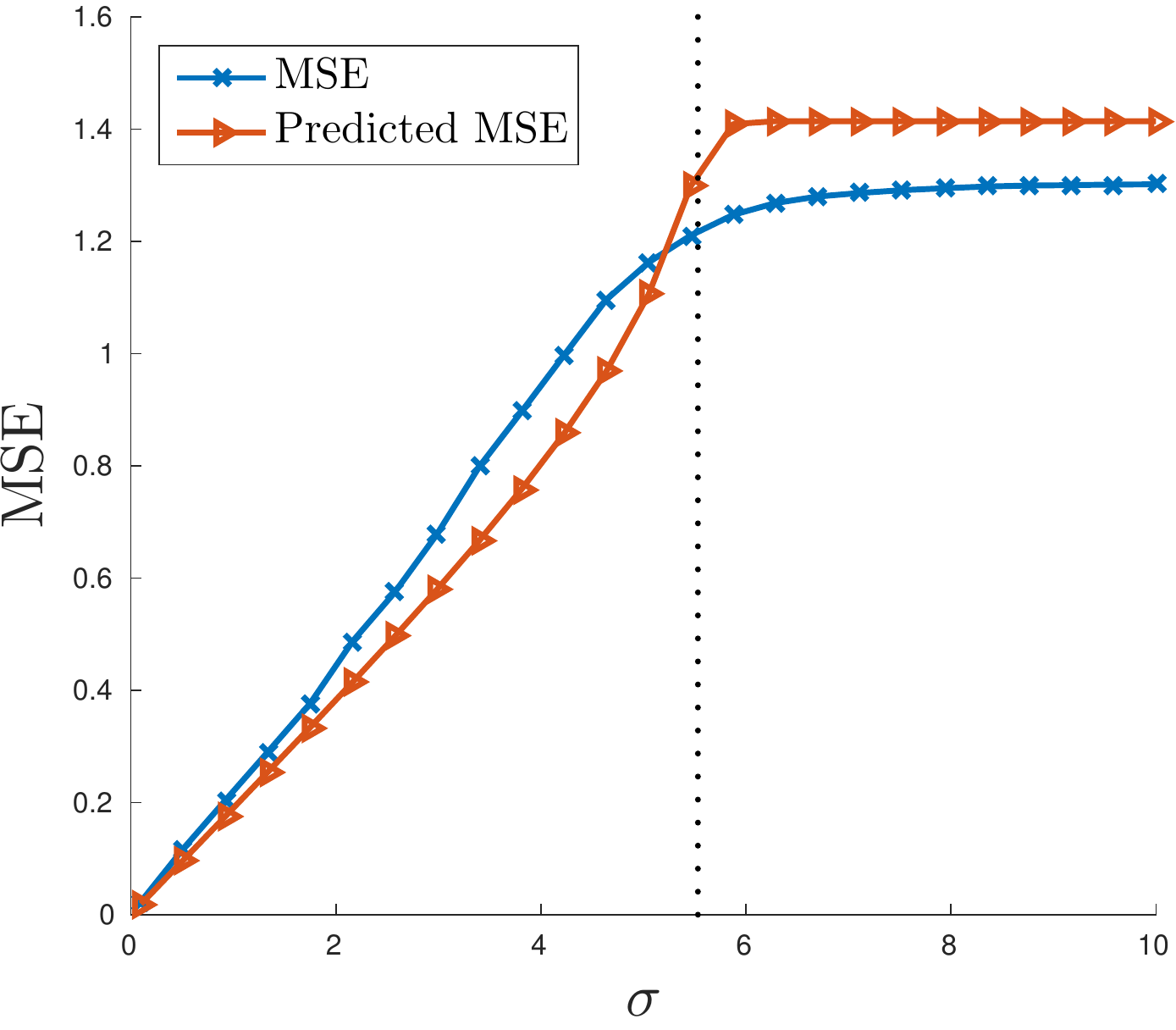}
            \caption{\label{fig:spike_mse} Empirical MSE compared to the asymptotic MSE predicted by spiked model. The MSE is defined in~\eqref{eq:mse_def}.} 
        \end{center}
    \end{subfigure} 
    \caption{    \label{fig-compare-spiked} Experiments related to the connection between the spike model and the MRA problem as discussed in Section~\ref{sec:spike_model}. The dashed line is the predicted threshold value of $\sigma = \lambda^{1/2}\gamma^{-1/4}= 5.5313$.}
\end{figure*}


\section{Additional Algorithms} \label{sec:algorithms}

While the spectral algorithm (Algorithm \ref{alg:SpectralAlg}) is asymptotically optimal as $\sigma$, $N \to \infty$ and for signals with non-vanishing $\DFT$, it may not perform well in small sample size or low $\DFT$ values. Therefore, in this section, we present two additional algorithms based on non-convex LS minimization and a modification of the EM algorithm presented in Section~\ref{sec:related_work} that takes the distribution into account. In Appendix~\ref{subsec:SDPmethod}, we also describe and analyze a convex relaxation approach based on semidefinite programming.

\subsection{Non-convex least-squares minimization} \label{subsec:LS_method}
 
The following method aims to find a signal in $\RL$ and a distribution in $\Delta^L$ that fit the observed data as well as possible in the LS sense. We formulate the problem
 as a smooth, non-convex, optimization problem with the constraint that the distribution lies on a simplex. Given estimators $\hatM^1$ and $\hatM^2$ of the first two moments $M^1$ and $M^2$, the problem reads
\begin{equation} \label{eq:LS}
    \min_{\tilde x\in\RL,\tilde \rho\in\Delta^L} 
        \| \hatM^2 - C_{\tilde x}D_{\tilde\rho} C_{\tilde x}^T\|_{\textrm{F}}^2 
        + \lambda\| \hatM^1 - C_{\tilde x}\tilde \rho\|_2^2,
\end{equation}
where $\lambda > 0$ is a predefined parameter. It can be verified that, by omitting signal-dependent terms, the variance of the elements of the first moment estimator is proportional to $\sigma^2$. It can be also shown that the variance of the elements of the second moment is proportional to $3L\sigma^4$ and $L\sigma^2$ in the low and the high $\SNR$ regimes, respectively (again, by omitting signal-dependent terms)~\cite{boumal2018heterogeneous}. Therefore, we set $\lambda =\frac{1}{L(1+3\sigma^2)}$ in our implementation.

 \subsection{An expectation-maximization algorithm for estimating $x$ and $\rho$ simultaneously} \label{sec:NU_EM}

In Section~\ref{sec:related_work}, we reviewed the EM algorithm for MRA from \cite{bendory2017bispectrum}, which is invariant to the distribution of translations. In this section, we modify the algorithm to take the distribution into account. A similar approach was introduced for the application of cryo-EM in~\cite{dvornek2015subspaceem}.
 
If we denote $s:=\{s_j\}_{1\le j\le N}$, the forward model of the MRA model~\eqref{eq:mra} reads:
\begin{align*}
    f_{x,\rho}(y,s) &= f_{x,\rho}(y|s) \prod_{j=1}^{N} \rho[s_j]\\
    &= \prod_{j=1}^{N} \rho[s_j]\frac{1}{(2 \pi \sigma^2)^{L/2}} 
    e^{- \frac{1}{2\sigma^2}\left\|R_{s_j}x - y_j\right\|^2 }
    .
\end{align*}
The log-likelihood function is then given, up to a constant, by
\begin{align*}
    \log \LL(y, s | x,\rho) 
    = \sum_{j=1}^N \left\{ \log \rho[s_j] 
                      - \frac{1}{2\sigma^2}\left\|R_{s_j}x - y_j\right\|^2 \right\} .
\end{align*}

The goal of the EM algorithm is to compute the maximum in $(x,\rho)$ of the marginal likelihood $\LL(y | x,\rho) = \sum_{s} \LL(y,s | x,\rho)$. The algorithm proceeds as follows. Start with some initial guesses $x_0$ and $\rho_0$ for the signal and distribution. Given $x_k$ and $\rho_k$, the next guess is given as follows:
\begin{align*}
(x_{k+1},\rho_{k+1}) = \operatorname*{arg\max}_{x,\rho\ } Q(x,\rho | x_k,\rho_{k}),
\end{align*}
where
\begin{align}
Q(x,\rho | x_k,\rho_k) 
&:= \E\left[\log \LL\left(y, S^{k} | x,\rho\right)\right].    \label{eq:Qdef} 
%
%
\end{align}
Here the distribution $S^{k}$ depends on $x_k$ and $\rho_{k}$ through
\begin{align*}
w_{k}^{\ell,j} := \mathbb{P}[S^{k}_j = \ell] 
=C_k^{j} e^{-\frac{1}{2\sigma^2}\|R_\ell x_k - y_j\|^2}  \rho_k[\ell], 
\end{align*}
where $C_k^{j}$ is a normalization term so that $\sum_{\ell} w_k^{\ell,j} = 1$. We can explicitly write \eqref{eq:Qdef} (omitting a constant term) as
\begin{align*}
Q(x,\rho | x_k,\rho_k) \hspace{-30pt} &\\
&= \sum_{j=1}^N \E\left[ 
\log \rho[S^k_j] - \frac{1}{2\sigma^2}\|R_{S^k_j}x - y_j\|^2\right] 
\\ 
&=\sum_{j=1}^N \sum_{\ell=0}^{L-1} w_{k}^{\ell,j} \left\{
\log \rho[\ell] - \frac{1}{2\sigma^2}\|R_{\ell}x - y_j\|^2  \right\},
\end{align*}

Maximizing $Q$ over $x$ and $\rho$ is simple, since the first term depends only on $\rho$ and the second term depends only on $x$. Specifically, it is easy to see that the maximum over $x$ is given by a weighted average of the translated observations:
\begin{align} \label{eq:EM_signal_update}
x_{k+1} = \frac{1}{N} \sum_{j=1}^N \sum_{\ell=0}^{L-1}  w_{k}^{\ell,j} R^{-1}_{\ell} y_j.
\end{align}
This step is almost identical (up to the values of the weights) to the standard EM update step~\eqref{eq:standard_EM}.

The maximimizing value of $\rho$ also has a closed formula. First, observe that we can write:
\begin{align*}
\rho_{k+1} = \operatorname*{arg\max}_{\rho \in \Delta^L} 
\sum_{\ell=0}^{L-1} W_k[\ell] \log (\rho[\ell]), 
\end{align*}
where $W_k[\ell] = \sum_{j=1}^N w_{k}^{\ell,j}$. To maximize a positive weighted combination of logarithms over the simplex, we use the following lemma:

\begin{lem} \label{lem:lemma_EM}
    If $w[\ell] > 0$ are positive weights, then the maximizer of $\sum_{\ell} w[\ell] \log(q[\ell])$ over all $q \in \Delta^L$ is 
    \begin{equation*}
q^*[\ell] = w[\ell] / \sum_{\ell^\prime} w[\ell^\prime].    
    \end{equation*}
    \end{lem}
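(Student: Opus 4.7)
The plan is to recognize the problem as a constrained concave maximization on the simplex, for which a Lagrange multiplier argument will pin down the unique maximizer in closed form. First I would set aside the inequality constraints $q[\ell] \ge 0$ entirely: since $w[\ell] > 0$ and $\log(q[\ell]) \to -\infty$ as $q[\ell] \to 0^+$, no point on the boundary of $\Delta^L$ can be a maximizer, so I may optimize on the relative interior and enforce only the affine constraint $\sum_\ell q[\ell] = 1$.

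Next I would form the Lagrangian
\begin{equation*}
\mathcal{L}(q,\mu) = \sum_{\ell=0}^{L-1} w[\ell]\log q[\ell] - \mu\!\left(\sum_{\ell=0}^{L-1} q[\ell] - 1\right)
\end{equation*}
and set $\partial \mathcal{L}/\partial q[\ell] = 0$, which gives $w[\ell]/q[\ell] = \mu$, hence $q[\ell] = w[\ell]/\mu$. Plugging into the constraint yields $\mu = \sum_{\ell'} w[\ell']$, producing exactly the candidate $q^*[\ell] = w[\ell]/\sum_{\ell'} w[\ell']$ claimed in the lemma.

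Finally I would verify that this critical point is in fact the global maximizer. The objective is a positive combination of the strictly concave functions $q[\ell] \mapsto \log q[\ell]$, so it is strictly concave on the open simplex; combined with the convexity of $\Delta^L$, this guarantees the unique interior critical point is the unique global maximum. Alternatively, a one-line proof uses Gibbs' inequality: setting $W = \sum_\ell w[\ell]$ and $p[\ell] = w[\ell]/W$, one rewrites $\sum_\ell w[\ell]\log q[\ell] = -W\,D_{\mathrm{KL}}(p\|q) - W H(p)$, and non-negativity of KL divergence (with equality iff $q=p$) immediately yields the claim.

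There is essentially no substantive obstacle here; the lemma is classical. The only mildly delicate point is confirming that the boundary of the simplex does not contribute a supremum, and this is handled by the blow-up of $\log$ at $0$ combined with $w[\ell]>0$.
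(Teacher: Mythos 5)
Your proof is correct and follows essentially the same route as the paper's: both dismiss the boundary of the simplex using the blow-up of $\log$ at $0$ (given $w[\ell]>0$), form the Lagrangian for the single affine constraint, and solve the stationarity condition to get $q^*[\ell]=w[\ell]/\sum_{\ell'}w[\ell']$. Your added remarks on strict concavity and the Gibbs'/KL alternative are fine but not needed beyond what the paper does.
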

\begin{proof}
    See Appendix~\ref{sec:proof_lemma_EM}.
\end{proof}

From this lemma, the maximizing $\rho$ is given by the formula:
\begin{align} \label{eq:EM_dist_update}
\rho_{k+1}[\ell]= \frac{W_k[\ell]} {\sum_{\ell^\prime=0}^{L-1} W_k[\ell^\prime]}.
\end{align}

To conclude, the modified EM updates the signal and the distribution estimations by~\eqref{eq:EM_signal_update} and~\eqref{eq:EM_dist_update}, respectively. However, compared to the methods which are based on moments estimation like Algorithm~\ref{alg:SpectralAlg} or the LS, it passes through the data at each iteration. Therefore, for large sample size, its 
computational cost may be substantially heavier.  

\section{Numerical experiments} \label{sec:numerics}

In this section, we present numerical results for the algorithms described in Section~\ref{sec:algorithms} and Algorithm~\ref{alg:SpectralAlg}. To measure the accuracy of an estimator $\widehat{X}$, we define the recovery relative error as 
\begin{equation}\label{eq:relative_error}
\textrm{relative error} = \min_{s\in\mathbb{Z}_L}\frac{\|R_s\widehat X - x\|_2}{\|x\|_2}.
\end{equation}
%
 The code of this section, including Matlab implementations and examples, is publicly available online~\footnote{\url{https://github.com/nirsharon/aperiodicMRA}}.

\subsection{Influence of the number of samples}

In the first example, we use a Haar-like signal of length $L=20$, depicted in Figure~\ref{fig:clean_signal}. Next, we generate its noisy, translated copies according to the MRA model~\eqref{eq:mra}, with noise variance of $\sigma = .25$. One example of a data sample corrupted with such noise is illustrated in Figure~\ref{fig:corrupted_sig_1}.

We use the EM algorithm of Section~\ref{sec:NU_EM} to estimate the signal. This process is repeated three times for different number of samples, $N=10^3$, $N=10^5$, and $N=10^7$. The estimates are presented in Figure~\ref{fig:estimation_N_1000}--\ref{fig:estimation_N_10000000}. As expected, the quality of the estimation improves significantly as $N$ grows.

%
%
\begin{figure*}
    \begin{center}
        \begin{subfigure}[t]{.19\textwidth} 
        \begin{center}
                    \includegraphics[width=\textwidth]{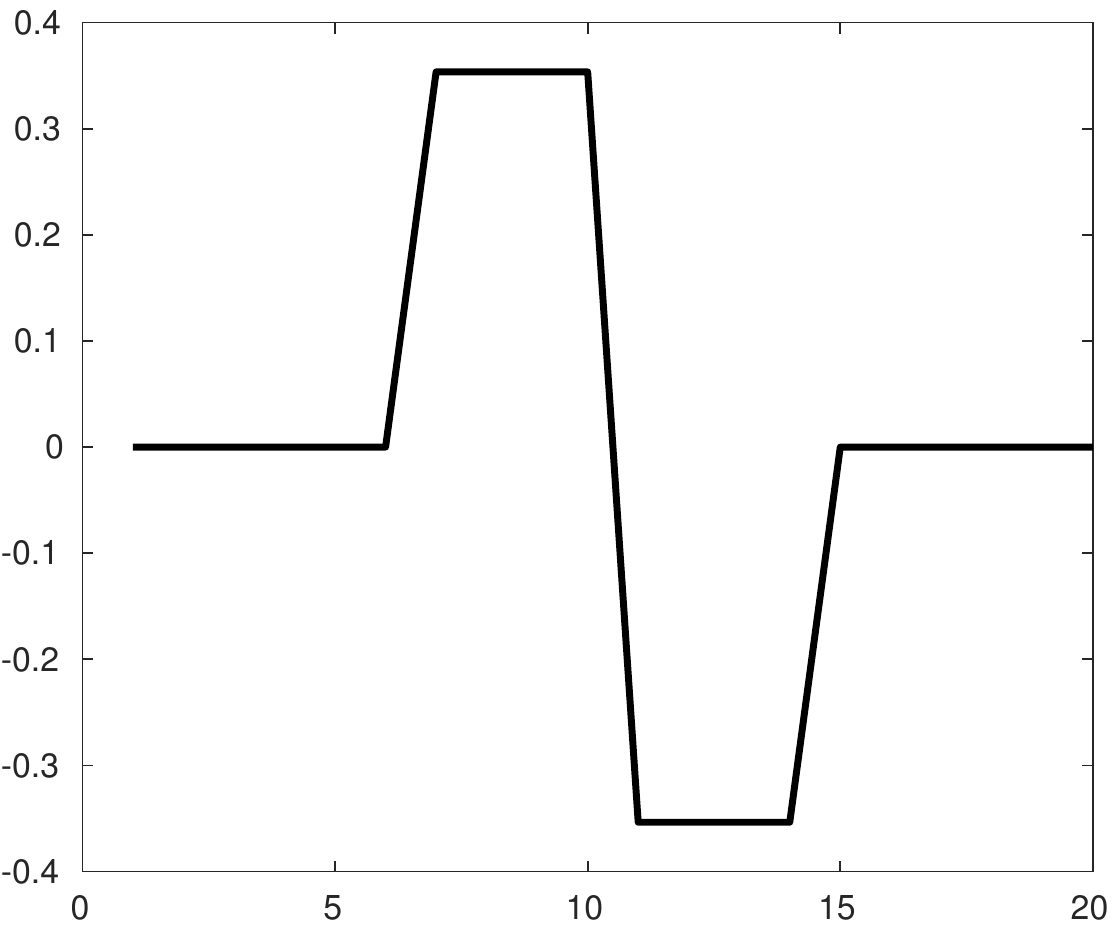}
                            \end{center}
            \caption{The clean signal}
            \label{fig:clean_signal}
        \end{subfigure}     \hfill
        \begin{subfigure}[t]{.19\textwidth}
        \begin{center}
                    \includegraphics[width=.98\textwidth]{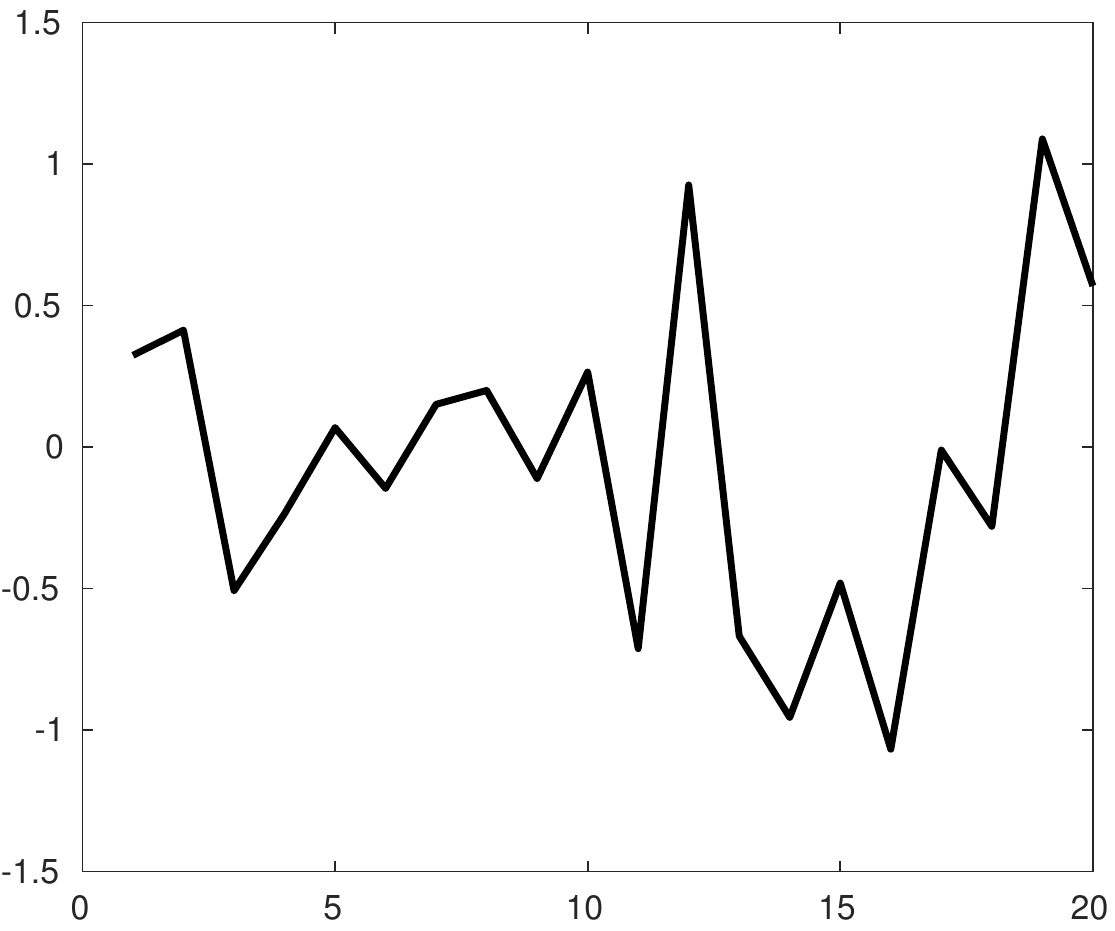}
        \end{center}
            \caption{An example of a data sample (one measurement)}
            \label{fig:corrupted_sig_1}
        \end{subfigure}   \hfill
        \begin{subfigure}[t]{.19\textwidth}
        \begin{center}
                    \includegraphics[width=\textwidth]{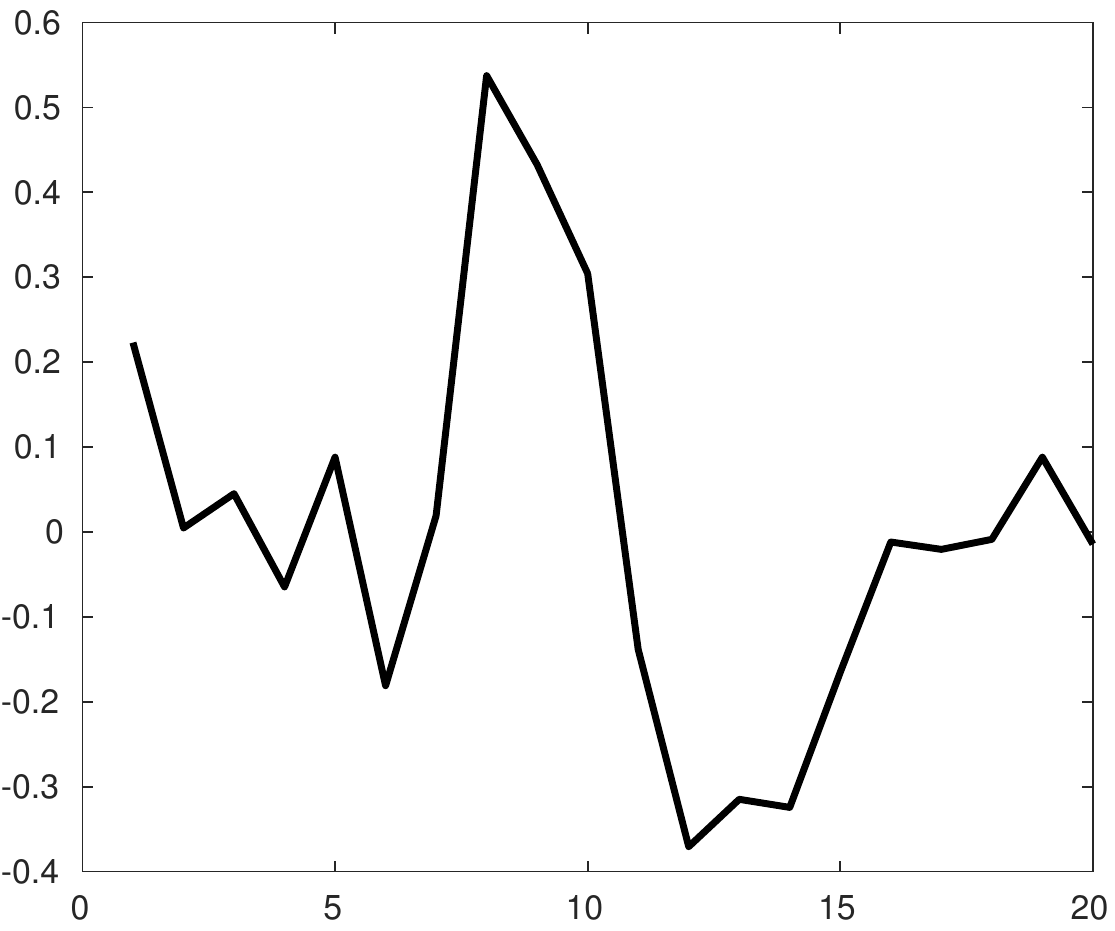}
        \end{center}
            \caption{An estimate with \\ $N=1,000$}
            \label{fig:estimation_N_1000}
        \end{subfigure}  \hfill
        \begin{subfigure}[t]{.19\textwidth}
        \begin{center}
            \includegraphics[width=\textwidth]{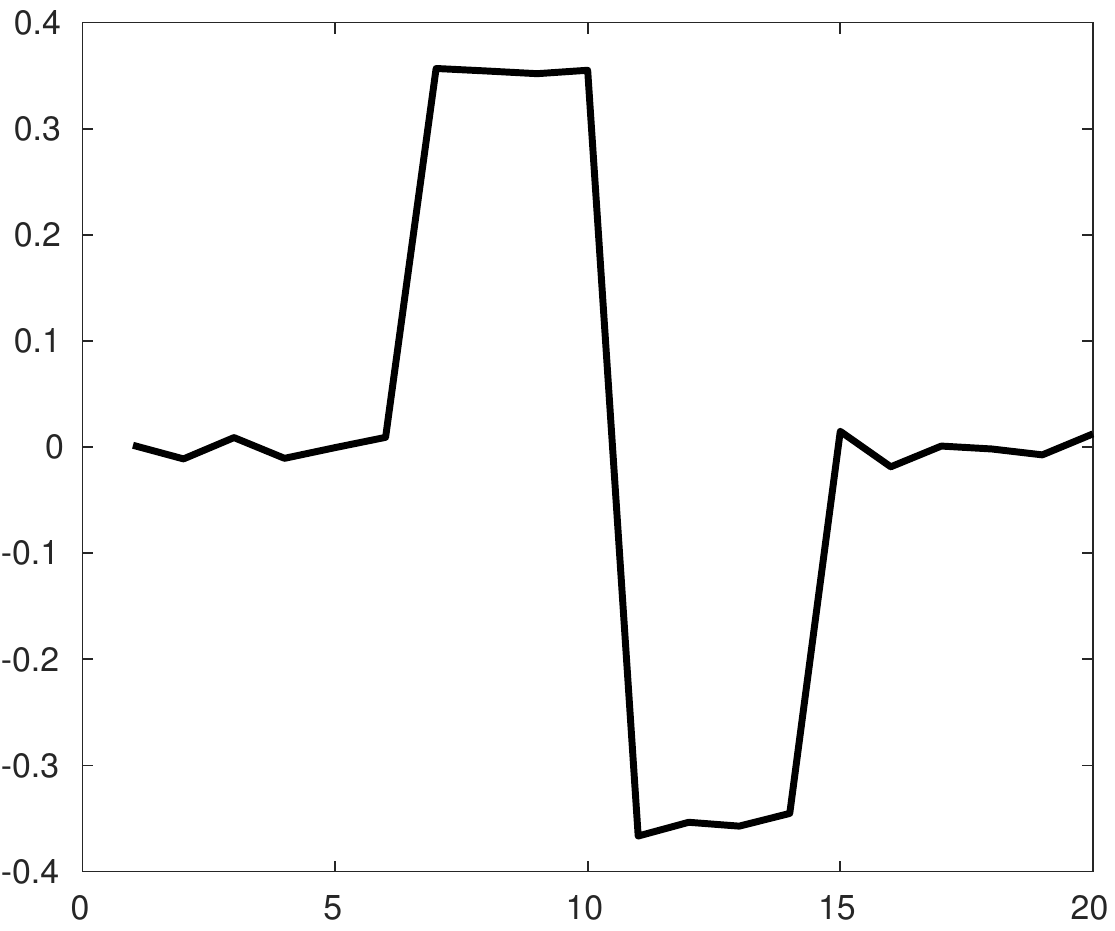}
        \end{center}
            \caption{An estimate with $N=100,000$}
            \label{fig:estimation_N_100000}
        \end{subfigure}   \hfill
        \begin{subfigure}[t]{.19\textwidth}
        \begin{center}
            \includegraphics[width=\textwidth]{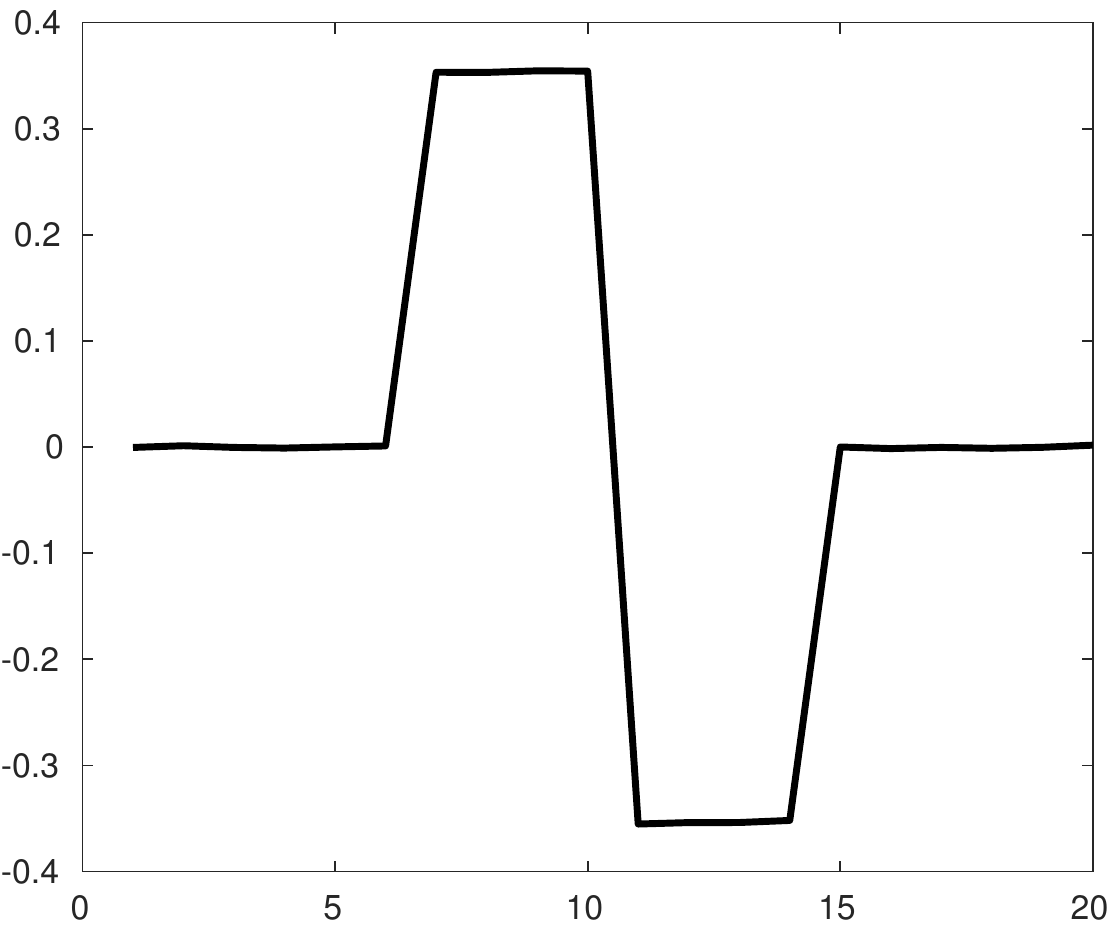}
        \end{center}
            \caption{An estimate with \\ $N=10,000,000$}
            \label{fig:estimation_N_10000000}
        \end{subfigure}
        \caption{An example of the estimation quality of a Haar-like signal with different number of samples ($N$), using the LS method. In these tests, $\sigma=1$.}
        \label{fig:visualEstimating}
    \end{center}
\end{figure*}

\subsection{Comparison of EM algorithms}

In~\cite{bendory2017bispectrum}, it is shown that in most cases, an EM method as described in Section~\ref{subsec:MRA_low_SNR}, achieves the smallest estimation error compared to the competitor algorithms.
The EM algorithm described in that paper is invariant to the distribution $\rho$. In particular, it treats the data as if it were drawn from the uniform distribution, which requires sample complexity that grows like $\omega(1/\SNR^3)$ rather than $\omega(1/\SNR^2)$. By contrast, the EM algorithm we propose in Section~\ref{sec:NU_EM} also estimates the distribution $\rho$ at each iteration. The updated estimation of the distribution is then used to update the signal's estimation. 


To demonstrate the importance of including the distribution into the model of the estimator, we consider a family of distributions 
\begin{equation} \label{eq:distribution}
\rho[t] \propto \exp(-t^2 / s^2)
\end{equation}
where the parameter $s > 0$ controls the concentration of $\rho$, or alternatively its uniformity: the larger $s$ is, the more uniform $\rho$ is. In general, we expect our algorithms to provide better estimations when $s$ is smaller, i.e., when $\rho$ is more concentrated; see Section~\ref{sec:spike_model}. 

We compared the standard EM with the EM algorithm described in Section~\ref{sec:NU_EM}. The experiments were conducted as follows. We fixed a random signal of length $L=25$ with i.i.d.\ normal entries and unit norm, and a series of distributions of the form~\eqref{eq:distribution} with the parameter $s$ varying between $3$ and $9$. Then, for each distribution we generated $N=2,000$ samples drawn with a fixed level of noise $\sigma = 1 $. We repeated the experiment independently $20$ times and averaged the errors. In Figure~\ref{fig:EM_comparison}, we plot the relative errors of the methods as a function of the uniformity parameter $s$. As expected, the standard EM is invariant to $s$. On the other hand, the adapted version of the EM exploits the varying distribution and performs better under more concentrated distributions. As the distribution becomes more uniform, the two methods exhibit similar error rates.

%
%
\begin{figure}[ht]
    \centerline{
        \includegraphics[width=.95\columnwidth]{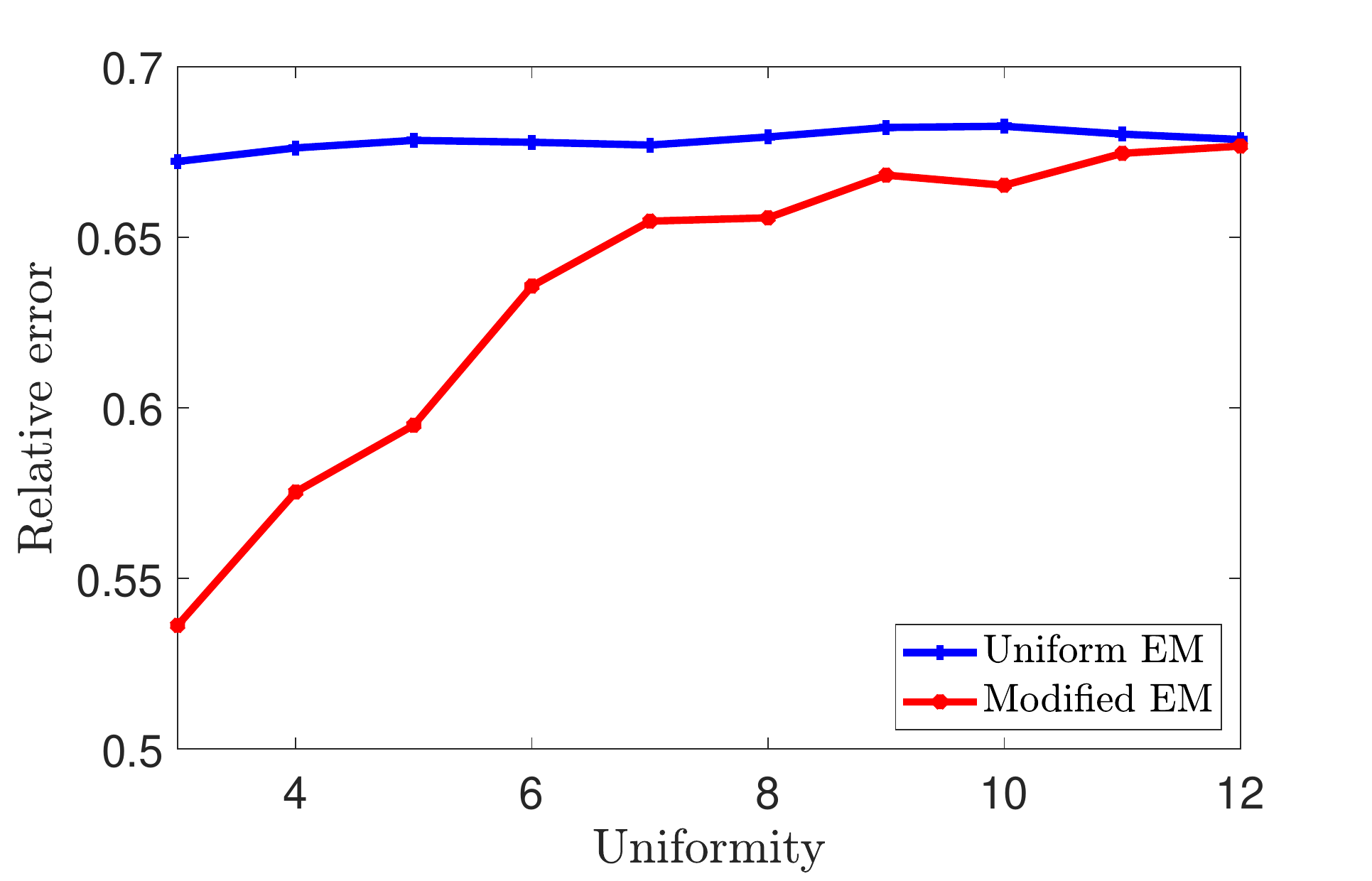}
    }
    \caption{EM comparison: the standard EM described in Section~\ref{sec:related_work} (uniform EM) versus the EM that includes distribution estimation (modified EM) described in Section~\ref{sec:NU_EM}. The algorithms were compared with different distributions of the form~\ref{eq:distribution} as a function of the parameter $s$. }
    \label{fig:EM_comparison}
\end{figure}

\subsection{Comparison of the different methods}

This paper presents three alternative techniques for solving the MRA: the spectral method described in Algorithm~\ref{alg:SpectralAlg}, the LS optimization of Section~\ref{subsec:LS_method}, and the EM of Section~\ref{sec:NU_EM}. In this comparison, we examined the estimation error of these three methods with different noise levels. We use a random signal of length $L=15$ with i.i.d. normal entries and rescaled to unit norm. The distribution $\rho$ is obtained by normalizing a vector with i.i.d. entries, distributed uniformly in $[0,1]$. We fix the number of samples to be $N=100,000$. Then, we sample the level of noise $\sigma$ at $20$ points ranging from $0.01$ to $10$. In Figure~\ref{fig:overall_comparison} we plot the average error for each of the sampled points over 40 different values of $x$ and $\rho$. As can be seen, the LS and EM methods are more robust to noise than the spectral method. In addition, the gap between these two methods becomes small as the SNR decreases.

%
%
\begin{figure}[ht]
    \centerline{
        \includegraphics[width=.95\columnwidth]{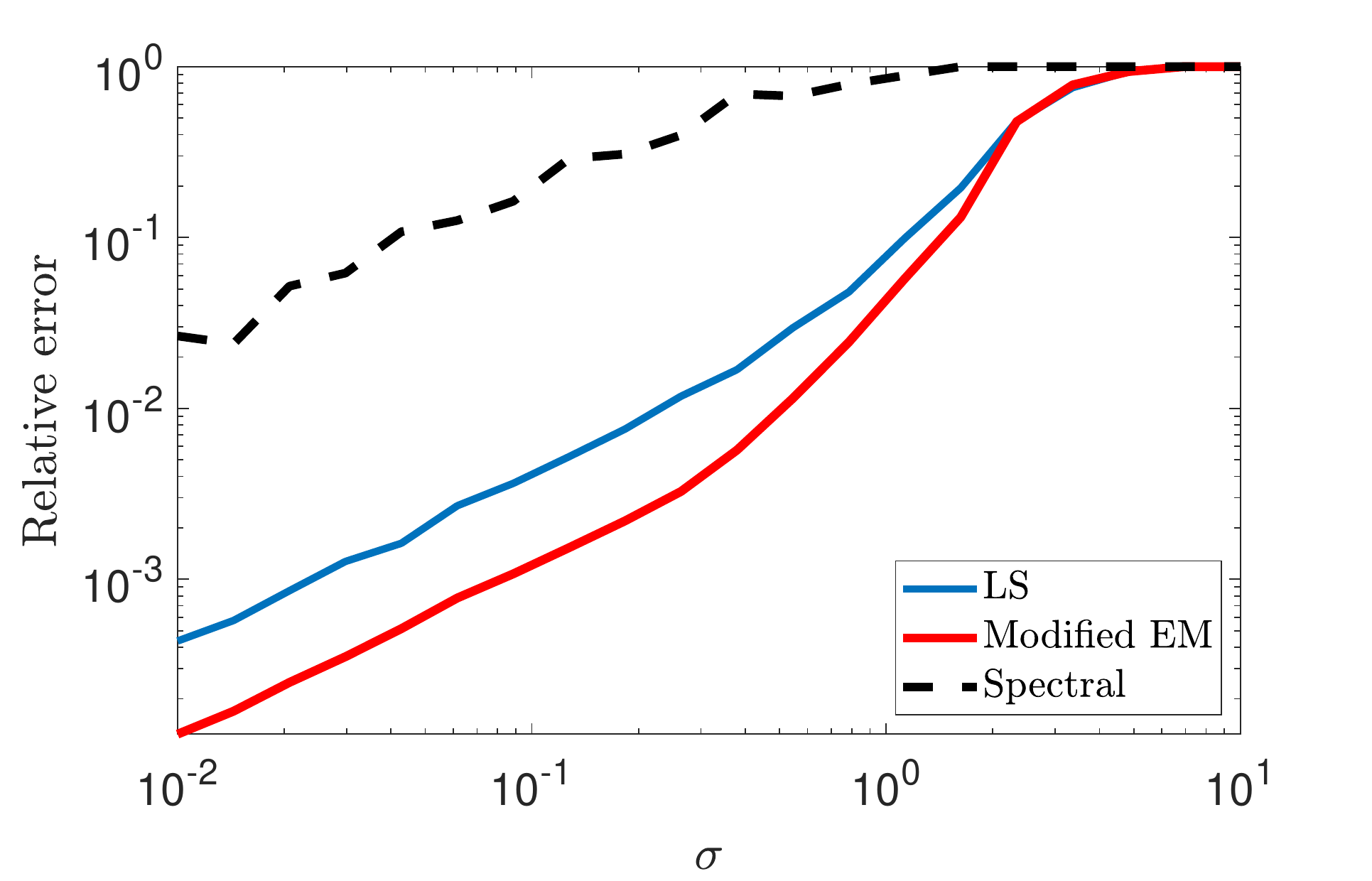}
    }
    \caption{A comparison of three methods: least squares (LS), expectation maximization (EM), and the spectral method, under varying level of noise.}
    \label{fig:overall_comparison}
\end{figure}

\subsection{Numerical error rates for the EM algorithm}

When the distribution $\rho$ is aperiodic, the optimal MSE for recovering $x$ in the low SNR, and large $N$ regime is of size $O(\sigma^4 / N)$. Since the relative error scales as $\sqrt \MSE$, by \eqref{eq:relative_error}, if the $\log$-error is viewed as a function of $\log(\sigma)$, the slope is expected to be no smaller than $2$ when $\sigma$ is large.

In Figure~\ref{fig-err} we plot the average log-error of the EM algorithm over $300$ trials as a function of $\log(\sigma)$. In each trial, we used the EM algorithm to estimate a randomly generated signal, with translations drawn from a randomly generated probability distribution. When $\sigma$ is large, the curve is indeed a line with slope close to $2$, which is the expected rate. However, when $\sigma$ is small, the curve is a line with slope close to $1$; namely, the error behaves approximately like $O(\sigma/\sqrt{N})$, rather than $O(\sigma^2/\sqrt{N})$. The moderate slope for high SNR suggests that in this regime the recovery problem is easier; for example, we know that alignment is possible in high SNR, as described in Section~\ref{sec:alignment}.

In Figure~\ref{fig-err2} we plot the average log-error (again over $300$ experiments) as a function of $\log(\sigma)$, but in this case each experiment used the uniform distribution of translations. In this regime, we know from~\cite{bandeira2017optimal} that the optimal slope is 3, not 2; and indeed, when $\sigma$ is large the curve has slope close to $3$. As in the other plot, when $\sigma$ is small the curve has slope close to $1$. Taken together, these two experiments suggest that the EM algorithm exhibits near-optimal behavior for both periodic and aperiodic distributions.

%
%
\begin{figure}[ht]
    \centerline{
        \includegraphics[scale = 0.48]{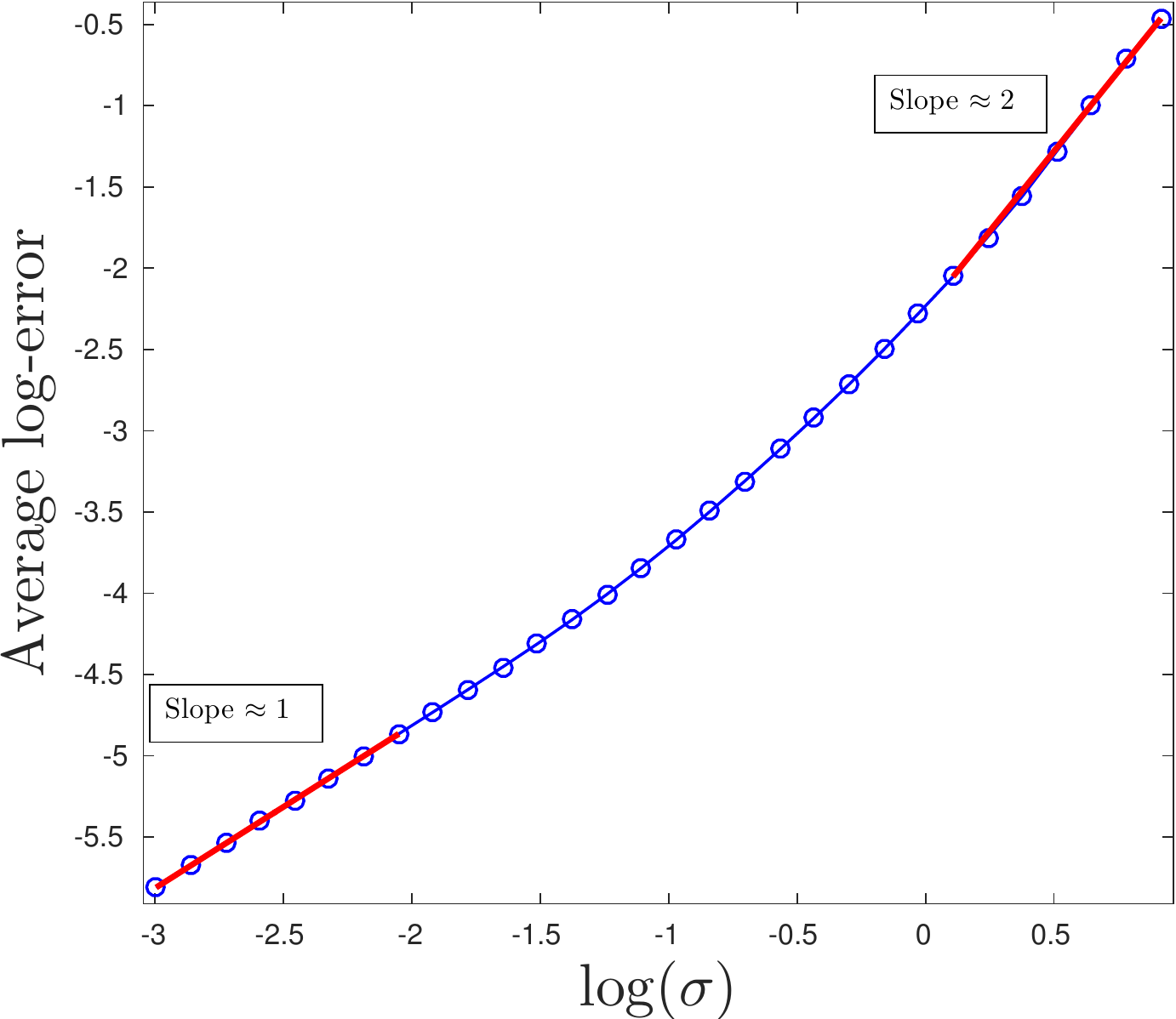}
    }
    \caption{Log-log plot of the error of the EM method versus $\sigma$, with random distributions.}
    \label{fig-err}
\end{figure}

%
%
\begin{figure}[ht]
    \centerline{
        \includegraphics[scale = 0.48]{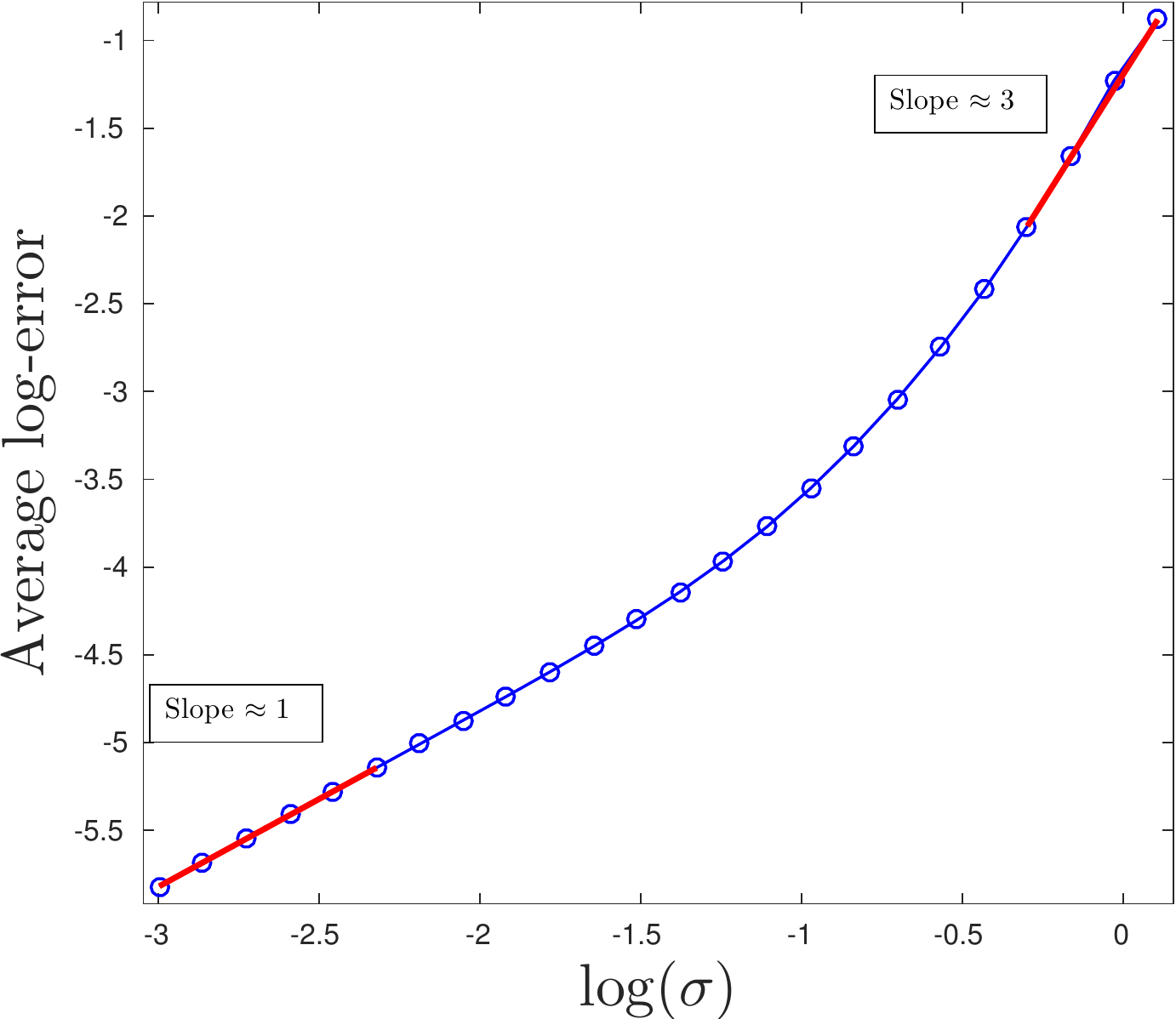}
    }
    \caption{Log-log plot of the error of the EM method versus $\sigma$, with uniform distribution}
    \label{fig-err2}
\end{figure}

\section{Discussion} \label{sec:conlusion}

In this paper, we have shown that the sample complexity for MRA with an aperiodic distribution of translations grows like $\omega(1/\SNR^2)$. This sample complexity can be achieved by a simple spectral algorithm. We also examined empirically the LS and EM algorithms.
Additionally, we extended previous works by showing that the sample complexity for any periodic distribution scales as $\omega(1/\SNR^3)$. 

We drew connections between the MRA problem and the spiked covariance model. This connection implies that the sample complexity is inversely proportional to the square of the maximal value of the distribution. Therefore, the more uniform the distribution is, the higher the sample complexity of the problem. 

One of the motivations for considering the MRA model arises from the imaging technique called single particle cryo--electron microscopy (cryo--EM), allowing to visualize molecules at near-atomic resolution~\cite{bartesaghi20152,sirohi20163}. In cryo--EM, noisy two-dimensional tomographic projections of the three-dimensional underlying molecule, taken at unknown viewing direction, are collected. The distribution of viewing directions in cryo--EM is typically non-uniform, as many molecules exhibit some preferred orientation \cite{frank1987tilt}.

The MRA model~\eqref{eq:mra} can be thought of as a simplified model for the cryo--EM problem, where cyclic translations replace actions of elements of the group $SO(3)$ \cite{singer2018mathematics}. The tomographic projection does not appear in~\eqref{eq:mra}. Our technique for MRA, based on the low-order moments of the data, is similar to the framework proposed by Zvi Kam in~\cite{kam1980reconstruction,levin20183d} for cryo--EM. In particular, Kam suggested a method to estimate a molecule directly from the statistics of the projections, rather than estimating the viewing directions. Our work is one step towards understanding the sample complexity of Kam's method in particular, and the cryo--EM problem in general.

\section*{Acknowledgments}
We would like to thank Afonso Bandeira, Nicolas Boumal, Joseph Kileel, Roy Lederman and Zhizhen Zhao for many insightful discussions.

\bibliographystyle{ieeetr}
\bibliography{refs_mra}

\begin{thebibliography}{10}

\bibitem{diamond1992multiple}
R.~Diamond, ``On the multiple simultaneous superposition of molecular
  structures by rigid body transformations,'' {\em Protein Science}, vol.~1,
  no.~10, pp.~1279--1287, 1992.

\bibitem{theobald2012optimal}
D.~L. Theobald and P.~A. Steindel, ``Optimal simultaneous superpositioning of
  multiple structures with missing data,'' {\em Bioinformatics}, vol.~28,
  no.~15, pp.~1972--1979, 2012.

\bibitem{park2011stochastic}
W.~Park, C.~R. Midgett, D.~R. Madden, and G.~S. Chirikjian, ``A stochastic
  kinematic model of class averaging in single-particle electron microscopy,''
  {\em The International journal of robotics research}, vol.~30, no.~6,
  pp.~730--754, 2011.

\bibitem{park2014assembly}
W.~Park and G.~S. Chirikjian, ``An assembly automation approach to alignment of
  noncircular projections in electron microscopy,'' {\em IEEE Transactions on
  Automation Science and Engineering}, vol.~11, no.~3, pp.~668--679, 2014.

\bibitem{scheres2005maximum}
S.~H. Scheres, M.~Valle, R.~Nu{\~n}ez, C.~O. Sorzano, R.~Marabini, G.~T.
  Herman, and J.-M. Carazo, ``Maximum-likelihood multi-reference refinement for
  electron microscopy images,'' {\em Journal of molecular biology}, vol.~348,
  no.~1, pp.~139--149, 2005.

\bibitem{zwart2003fast}
J.~P. Zwart, R.~van~der Heiden, S.~Gelsema, and F.~Groen, ``Fast translation
  invariant classification of {HRR} range profiles in a zero phase
  representation,'' {\em IEE Proceedings-Radar, Sonar and Navigation},
  vol.~150, no.~6, pp.~411--418, 2003.

\bibitem{gil2005using}
R.~Gil-Pita, M.~Rosa-Zurera, P.~Jarabo-Amores, and F.~L{\'o}pez-Ferreras,
  ``Using multilayer perceptrons to align high range resolution radar
  signals,'' in {\em International Conference on Artificial Neural Networks},
  pp.~911--916, Springer, 2005.

\bibitem{rosen2016certifiably}
D.~M. Rosen, L.~Carlone, A.~S. Bandeira, and J.~J. Leonard, ``A certifiably
  correct algorithm for synchronization over the special euclidean group,''
  {\em arXiv preprint arXiv:1611.00128}, 2016.

\bibitem{dryden1998statistical}
I.~L. Dryden and K.~V. Mardia, {\em Statistical shape analysis}, vol.~4.
\newblock J. Wiley Chichester, 1998.

\bibitem{foroosh2002extension}
H.~Foroosh, J.~B. Zerubia, and M.~Berthod, ``Extension of phase correlation to
  subpixel registration,'' {\em IEEE transactions on image processing},
  vol.~11, no.~3, pp.~188--200, 2002.

\bibitem{robinson2009optimal}
D.~Robinson, S.~Farsiu, and P.~Milanfar, ``Optimal registration of aliased
  images using variable projection with applications to super-resolution,''
  {\em The Computer Journal}, vol.~52, no.~1, pp.~31--42, 2009.

\bibitem{bartesaghi20152}
A.~Bartesaghi, A.~Merk, S.~Banerjee, D.~Matthies, X.~Wu, J.~L. Milne, and
  S.~Subramaniam, ``2.2 {{\AA}} resolution cryo-{EM} structure of
  $\beta$-galactosidase in complex with a cell-permeant inhibitor,'' {\em
  Science}, vol.~348, no.~6239, pp.~1147--1151, 2015.

\bibitem{sirohi20163}
D.~Sirohi, Z.~Chen, L.~Sun, T.~Klose, T.~C. Pierson, M.~G. Rossmann, and R.~J.
  Kuhn, ``The 3.8 {{\AA}} resolution cryo-{EM} structure of {Z}ika virus,''
  {\em Science}, vol.~352, no.~6284, pp.~467--470, 2016.

\bibitem{singer2018mathematics}
A.~Singer, ``Mathematics for cryo-electron microscopy,'' {\em to appear in the
  Proceedings of the International Congress of Mathematicians 2018}, 2018.

\bibitem{aguerrebere2016fundamental}
C.~Aguerrebere, M.~Delbracio, A.~Bartesaghi, and G.~Sapiro, ``Fundamental
  limits in multi-image alignment,'' {\em IEEE Transactions on Signal
  Processing}, vol.~64, no.~21, pp.~5707--5722, 2016.

\bibitem{bendory2018toward}
T.~Bendory, N.~Boumal, W.~Leeb, E.~Levin, and A.~Singer, ``Toward single
  particle reconstruction without particle picking: Breaking the detection
  limit,'' {\em arXiv preprint arXiv:1810.00226}, 2018.

\bibitem{bendory2017bispectrum}
T.~Bendory, N.~Boumal, C.~Ma, Z.~Zhao, and A.~Singer, ``Bispectrum inversion
  with application to multireference alignment,'' {\em IEEE Transactions on
  Signal Processing}, vol.~66, pp.~1037--1050, Feb 2018.

\bibitem{bandeira2017optimal}
A.~Bandeira, P.~Rigollet, and J.~Weed, ``Optimal rates of estimation for
  multi-reference alignment,'' {\em arXiv preprint arXiv:1702.08546}, 2017.

\bibitem{johnstone2001distribution}
I.~M. Johnstone, ``On the distribution of the largest eigenvalue in principal
  components analysis,'' {\em Annals of Statistics}, vol.~29, no.~2,
  pp.~295--327, 2001.

\bibitem{paul2007asymptotics}
D.~Paul, ``Asymptotics of sample eigenstructure for a large dimensional spiked
  covariance model,'' {\em Statistica Sinica}, vol.~17, no.~4, pp.~1617--1642,
  2007.

\bibitem{gavish-donoho-2017}
M.~Gavish and D.~L. Donoho, ``Optimal shrinkage of singular values,'' {\em IEEE
  Transactions on Information Theory}, vol.~63, no.~4, pp.~2137--2152, 2017.

\bibitem{benaych2012singular}
F.~Benaych-Georges and R.~R. Nadakuditi, ``The singular values and vectors of
  low rank perturbations of large rectangular random matrices,'' {\em Journal
  of Multivariate Analysis}, vol.~111, pp.~120--135, 2012.

\bibitem{dobriban2017optimal}
E.~Dobriban, W.~Leeb, and A.~Singer, ``Optimal prediction in the linearly
  transformed spiked model,'' {\em arXiv preprint arXiv:1709.03393}, 2017.

\bibitem{james1961estimation}
W.~James and C.~Stein, ``Estimation with quadratic loss,'' in {\em Proceedings
  of the Fourth Berkeley Symposium on Mathematical Statistics and Probability},
  pp.~361--379, 1961.

\bibitem{efron1973stein}
B.~Efron and C.~Morris, ``Stein's estimation rule and its competitors-an
  empirical {B}ayes approach,'' {\em Journal of the American Statistical
  Association}, vol.~68, no.~341, pp.~117--130, 1973.

\bibitem{efron1975data}
B.~Efron and C.~Morris, ``Data analysis using {S}tein's estimator and its
  generalizations,'' {\em Journal of the American Statistical Association},
  vol.~70, no.~350, pp.~311--319, 1975.

\bibitem{singer2011angular}
A.~Singer, ``Angular synchronization by eigenvectors and semidefinite
  programming,'' {\em Applied and computational harmonic analysis}, vol.~30,
  no.~1, pp.~20--36, 2011.

\bibitem{boumal2016nonconvex}
N.~Boumal, ``Nonconvex phase synchronization,'' {\em SIAM Journal on
  Optimization}, vol.~26, no.~4, pp.~2355--2377, 2016.

\bibitem{perry2018message}
A.~Perry, A.~S. Wein, A.~S. Bandeira, and A.~Moitra, ``Message-passing
  algorithms for synchronization problems over compact groups,'' {\em
  Communications on Pure and Applied Mathematics}, vol.~71, no.~11,
  pp.~2275--2322, 2018.

\bibitem{chen2018projected}
Y.~Chen and E.~J. Cand{\`e}s, ``The projected power method: An efficient
  algorithm for joint alignment from pairwise differences,'' {\em
  Communications on Pure and Applied Mathematics}, vol.~71, no.~8,
  pp.~1648--1714, 2018.

\bibitem{bandeira2014tightness}
A.~S. Bandeira, N.~Boumal, and A.~Singer, ``Tightness of the maximum likelihood
  semidefinite relaxation for angular synchronization,'' {\em Mathematical
  Programming}, vol.~163, no.~1, pp.~145--167, 2017.

\bibitem{zhong2018near}
Y.~Zhong and N.~Boumal, ``Near-optimal bounds for phase synchronization,'' {\em
  SIAM Journal on Optimization}, vol.~28, no.~2, pp.~989--1016, 2018.

\bibitem{bandeira2015non}
A.~S. Bandeira, Y.~Chen, and A.~Singer, ``Non-unique games over compact groups
  and orientation estimation in cryo-{EM},'' {\em arXiv preprint
  arXiv:1505.03840}, 2015.

\bibitem{bandeira2014multireference}
A.~S. Bandeira, M.~Charikar, A.~Singer, and A.~Zhu, ``Multireference alignment
  using semidefinite programming,'' in {\em Proceedings of the 5th conference
  on Innovations in theoretical computer science}, pp.~459--470, ACM, 2014.

\bibitem{chen2014near}
Y.~Chen, L.~Guibas, and Q.~Huang, ``Near-optimal joint object matching via
  convex relaxation,'' in {\em Proceedings of the 31st International Conference
  on Machine Learning (ICML-14)}, pp.~100--108, 2014.

\bibitem{bandeira2016low}
A.~S. Bandeira, N.~Boumal, and V.~Voroninski, ``On the low-rank approach for
  semidefinite programs arising in synchronization and community detection,''
  in {\em Conference on Learning Theory}, pp.~361--382, 2016.

\bibitem{perry2017sample}
A.~Perry, J.~Weed, A.~Bandeira, P.~Rigollet, and A.~Singer, ``The sample
  complexity of multi-reference alignment,'' {\em arXiv preprint at
  arXiv:1707.00943}, 2017.

\bibitem{boumal2018heterogeneous}
N.~Boumal, T.~Bendory, R.~R. Lederman, and A.~Singer, ``Heterogeneous
  multireference alignment: A single pass approach,'' in {\em Information
  Sciences and Systems (CISS), 2018 52nd Annual Conference on}, pp.~1--6, IEEE,
  2018.

\bibitem{dempster1977maximum}
A.~P. Dempster, N.~M. Laird, and D.~B. Rubin, ``Maximum likelihood from
  incomplete data via the {EM} algorithm,'' {\em Journal of the royal
  statistical society. Series B (methodological)}, pp.~1--38, 1977.

\bibitem{ChapRobb}
D.~G. Chapman and H.~Robbins, ``Minimum variance estimation without regularity
  assumptions,'' {\em Ann. Math. Statist.}, vol.~22, pp.~581--586, 12 1951.

\bibitem{CramerRao}
H.~Cram{\'e}r, {\em Mathematical Methods of Statistics (PMS-9)}, vol.~9.
\newblock Princeton university press, 2016.

\bibitem{abbe2018estimation}
E.~Abbe, J.~M. Pereira, and A.~Singer, ``Estimation in the group action
  channel,'' in {\em 2018 IEEE International Symposium on Information Theory
  (ISIT)}, pp.~561--565, June 2018.

\bibitem{dvornek2015subspaceem}
N.~C. Dvornek, F.~J. Sigworth, and H.~D. Tagare, ``Subspace{EM}: A fast
  maximum-a-posteriori algorithm for cryo-{EM} single particle
  reconstruction,'' {\em Journal of structural biology}, vol.~190, no.~2,
  pp.~200--214, 2015.

\bibitem{frank1987tilt}
M.~Radermacher, T.~Wagenknecht, A.~Verschoor, and J.~Frank, ``Three-dimensional
  reconstruction from a single-exposure, random conical tilt series applied to
  the 50{S} ribosomal subunit of {E}scherichia coli,'' {\em Journal of
  Microscopy}, vol.~146, no.~2, pp.~113--136, 1987.

\bibitem{kam1980reconstruction}
Z.~Kam, ``The reconstruction of structure from electron micrographs of randomly
  oriented particles,'' {\em Journal of Theoretical Biology}, vol.~82, no.~1,
  pp.~15--39, 1980.

\bibitem{levin20183d}
E.~Levin, T.~Bendory, N.~Boumal, J.~Kileel, and A.~Singer, ``3{D} ab initio
  modeling in cryo-{EM} by autocorrelation analysis,'' in {\em Biomedical
  Imaging (ISBI 2018), 2018 IEEE 15th International Symposium on},
  pp.~1569--1573, IEEE, 2018.

\bibitem{yu2015useful}
Y.~Yu and R.~J. Samworth, ``A useful variant of the {D}avis-{K}ahan theorem for
  statisticians,'' {\em Biometrika}, vol.~102, no.~2, pp.~315--323, 2015.

\bibitem{vershynin2010intro}
R.~Vershynin, ``Introduction to the non-asymptotic analysis of random
  matrices,'' {\em arXiv preprint arXiv:1011.3027}, 2010.

\bibitem{goemans1995improved}
M.~X. Goemans and D.~P. Williamson, ``Improved approximation algorithms for
  maximum cut and satisfiability problems using semidefinite programming,''
  {\em Journal of the ACM (JACM)}, vol.~42, no.~6, pp.~1115--1145, 1995.

\bibitem{grant2008cvx}
M.~Grant, S.~Boyd, and Y.~Ye, ``{CVX}: {M}atlab software for disciplined convex
  programming,'' 2008.

\end{thebibliography}

\appendix

\numberwithin{equation}{subsection}
\numberwithin{thm}{subsection} 

\subsection{Proof of Lemma \ref{lem:nchi2}}
\label{proof:eqnchi2}

We have 
\begin{align*}
\chi^2(f^N_{\tilde x,\tilde \rho}||f^N_{x,\rho}) \hspace{-40pt}\\
&=\int_{\reals^{L\times N}} \left(\frac{f^N_{\tilde x,\tilde \rho}(z^N)}{f^N_{x,\rho}(z^N)}-1\right)^2 f^N_{x,\rho}(z^N)\, dz^N,\\
&=\int_{\reals^{L\times N}} \frac{f^N_{\tilde x,\tilde \rho}(z^N)^2}{f^N_{x,\rho}(z^N)}-2 f^N_{\tilde x,\tilde \rho}(z^N)+f^N_{x,\rho}(z^N)\,dz^N,\\
&=\int_{\reals^{L\times N}} \frac{f^N_{\tilde x,\tilde \rho}(z^N)^2}{f^N_{x,\rho}(z^N)}\,dz^N -1,\\
&=\left(\int_{\reals^{L}} \frac{f_{\tilde x,\tilde \rho}(z)^2}{f_{x,\rho}(z)}\,dz\right)^N -1,\\
&=(\chi^2(f_{\tilde x,\tilde \rho}||f_{x,\rho})+1)^N-1,\\
\end{align*}
where the third line follows from $f^N_{\tilde x,\tilde \rho}$ and $f^N_{x,\rho}$ being probability distributions, and the fourth line follows from \eqref{eq:parindepence}.

\subsection{Proof of Theorem \ref{thm:ChapmanRobbins}}
\label{proof:ChapmanRobbins}
The proof mimics the one of the classical Chapman and Robbins bound. Recalling equation~\eqref{eq:Cov} and the definition of positive semidefinite matrices, the statement is equivalent to 
\begin{multline}\label{eq:chapmanrobbins}
\E_{x,\rho}\left[\left(w^T(\phi_x(\widehat X)-\E_{x,\rho}[\phi_{x}(\widehat X)])\right)^2\right]\\\ge \frac{\left[w^T\left(\E_{\tilde x,\tilde \rho}[\phi_{x}(\widehat X)]-\E_{x,\rho}[\phi_{x}(\widehat X)]\right)\right]^2}{\chi^2(f^N_{\tilde x,\tilde \rho}||f^N_{x,\rho})},
\end{multline}
for all $w,\tilde x\in\RL$ and $\tilde \rho\in \Delta^L$. Define $$Z=\frac{f_{\tilde x,\tilde \rho}(Y)}{f_{x,\rho}(Y)}.$$
and note that
\begin{itemize}
	\item $\E_{x,\rho}[g(Y)Z]=\E_{\tilde x,\tilde \rho}[g(Y)],$ 
	\item $\E_{x,\rho}[Z-1]=0,$
	\item $\E_{x,\rho}[(Z-1)^2]=\chi^2(f^N_{\tilde x,\tilde \rho}||f^N_{x,\rho}).$
\end{itemize}
We have
\begin{align*}
w^T\left(\E_{\tilde x,\tilde \rho}[\phi_{x}(\widehat X)]-\E_{x,\rho}[\phi_{x}(\widehat X)]\right)\hspace{-98pt}\\
&=
\E_{\tilde x,\tilde \rho}[w^T \phi_x(\widehat X)]-\E_{x,\rho}[w^T \phi_x(\widehat X)]\\
&=\E_{x,\rho}[w^T \phi_x(\widehat X)(Z-1)]\\
&=\E_{x,\rho}\left[w^T\left(\phi_x(\widehat X)-\E_{x,\rho}[\phi_{x}(\widehat X)]\right)(Z-1)\right],
\end{align*}
and by Cauchy-Schwarz
\begin{multline*}
\left[w^T\left(\E_{\tilde x,\tilde \rho}[\phi_{x}(\widehat X)]-\E_{x,\rho}[\phi_{x}(\widehat X)]\right)\right]^2\\
\le \E_{x,\rho}[(w^T(\phi_x(\widehat X)-\E_{x,\rho}[\phi_{x}(\widehat X)]))^2]\chi^2(f^N_{\tilde x,\tilde \rho}||f^N_{x,\rho}).
\end{multline*}

\subsection{Proof of Lemma \ref{lem:falldominoes}}
\label{proof:falldominoes}
Equation \eqref{eq:falldominoes} follows from some algebraic manipulations:
\begin{align*}
\chi^2(f_{\tilde x,\tilde \rho}||f_{x,\rho})\hspace{-50pt} &\\
&=\int_{\reals^L} \left(\frac{f_{\tilde x,\tilde \rho}(y;\gamma)}{f_{x,\rho}(y;\gamma)}-1\right)^2 f_{x,\rho}(y;\gamma)\,dy\\
&=\int_{\reals^L} \frac{\left(\sum\limits_{i=0}^\infty (\alpha^i_{\tilde x,\tilde \rho}(y)-\alpha^i_{x,\rho}(y))\frac{\gamma^i}{i!}\right)^2}
{\sum\limits_{i=0}^\infty \alpha^i_{x,\rho}(y)\frac{\gamma^i}{i!}} f_G(y)\,dy\\
&=\int_{\reals^L} \frac{\left(\sum\limits_{i=d}^\infty (\alpha^i_{\tilde x,\tilde \rho}(y)-\alpha^i_{x,\rho}(y))\frac{\gamma^{i}}{i!}\right)^2}{1+\sum\limits_{i=1}^\infty \alpha^i_{x,\rho}(y)\frac{\gamma^i}{i!}} f_G(y)\,dy\\
&=\frac{\gamma^{2d}}{(d!)^2}\int_{\reals^L} \left(\alpha^d_{\tilde x,\tilde \rho}(y)-\alpha^d_{x,\rho}(y)\right)^2 f_G(y)\,dy+\OO(\gamma^{2d+1})\\
&=\frac{\gamma^{2d}}{(d!)^2} \E\left[\left(\alpha_{\tilde x,\tilde \rho}^d(G)-\alpha_{x,\rho}^d(G)\right)^2\right]+\OO(\gamma^{2d+1}),
\end{align*}
where the third equation follows from the definition of $d$, i.e. $\alpha_{\tilde x,\tilde \rho}^n(z)=\alpha_{x,\rho}^n(z)$ almost surely for all $n< d$. Equation~\eqref{eq:falldominoes} now follows from $\gamma=1/\sigma$.

We now prove \eqref{eq:chi2autocorr}. It is enough to show that 
$$\E\left[\alpha_{\tilde x,\tilde \rho}^d(\error)\alpha_{x,\rho}^d(\error)\right]=d!\left<M^d_{\tilde x,\tilde \rho},M^d_{x,\rho}\right>,$$
Let $S$ and $\tilde S$ be two independent random variables such that $S\sim \rho$ and $\tilde S\sim \tilde \rho$. We have
\begin{align}
\nonumber \left<M^d_{\tilde x,\tilde \rho},M^d_{x,\rho}\right>
&=\left<\E[(R_{\tilde S} \tilde x)^{\otimes d}],\E[(R_S x)^{\otimes d}]\right>\\
\nonumber &=\E\left[\left<(R_{\tilde S} \tilde x)^{\otimes d},(R_S x)^{\otimes d}\right>\right]\\
&=\E\left[\left<R_{\tilde S} \tilde x,R_S x\right>^d\right]. \label{eq:StildeS}
\end{align}
On the other hand, we can write $f_{x,\rho}$ explicitly by
\begin{align*}
f_{x,\rho}(y;\gamma)
&=\frac1{\sqrt{2\pi}^L}\sum_{\ell=0}^{L-1}
\rho[\ell] \exp\left(-\frac{\|y- \gamma R_\ell x\|^2}{2}\right)\\
&=\E[f_\error(y- \gamma R_S x)],
\end{align*}
where $S\sim \rho$, thus by equation (\ref{eq:alpha})
\begin{align*}
&\hspace{-10pt}\E\left[\alpha_{\tilde x,\tilde \rho}^d(\error)\alpha_{x,\rho}^d(\error)\right]\\
&=\E\left[\frac{\partial^d}{\partial \tilde \gamma^d}\left(\frac{f_{\tilde x,\tilde \rho}(\error;\tilde \gamma)}{f_\error(\error)}\right|_{\tilde \gamma=0}\frac{\partial^d}{\partial \gamma^d}\left(\frac{f_{x,\rho}(\error;\gamma)}{f_\error(\error)}\right|_{\gamma=0}\right]\\
&=\frac{\partial^{2d}}{\partial \tilde \gamma^d\partial \gamma^d}
\E\left[\frac{f_{\tilde x,\tilde \rho}(\error;\tilde \gamma)}{f_\error(\error)}\frac{f_{x,\rho}(\error;\gamma)}{f_\error(\error)}\right]_{\tilde \gamma,\gamma=0}\\
&=\frac{\partial^{2d}}{\partial \tilde \gamma^d\partial \gamma^d}
\E\left[\frac{f_\error(\error- \tilde \gamma R_{\tilde S} \tilde x)}{f_\error(\error)}\frac{f_\error(\error- \gamma R_S x)}{f_\error(\error)}\right]_{\tilde \gamma,\gamma=0},
\end{align*}
where $S$ and $\tilde S$ are defined as in \eqref{eq:StildeS}. We have
\begin{align*}
&\hspace{-2pt}\E\left[\frac{f_\error(\error- \tilde \gamma R_{\tilde S} \tilde x)}{f_\error(\error)}\frac{f_\error(\error- \gamma R_S x)}{f_\error(\error)}\bigg|\tilde S,S\right]\\
&\text{\footnotesize $\displaystyle= \frac1{\sqrt{2\pi}^L}\int_{\RL} \exp\left(-\frac{\|z- \tilde \gamma R_{\tilde S} \tilde x\|^2+\|z- \gamma R_{S} x\|^2-\|z\|^2}{2}\right) \,dz$}\\
&\text{\footnotesize $\displaystyle=\frac1{\sqrt{2\pi}^L}  \int_{\RL} \exp\left(-\frac{\|z- \tilde \gamma R_{\tilde S} \tilde x- \gamma R_{S} x\|^2}{2}+\gamma\tilde \gamma \left<R_{\tilde S} \tilde x,R_S x\right> \right) \,dz$}\\
&\text{\footnotesize $=$} \exp\left(\gamma\tilde \gamma \left<R_{\tilde S} \tilde x,R_S x\right> \right).
\end{align*}
The proof of \eqref{eq:chi2autocorr} finally follows from equation \eqref{eq:StildeS} and
\begin{align*}
\E\left[\alpha_{\tilde x,\tilde \rho}^d(\error)\alpha_{x,\rho}^d(\error)\right]\hspace{-30pt}&\\
&=\E\left[\frac{\partial^{2d}}{\partial \tilde \gamma^d\partial \gamma^d}\exp\left(\gamma\tilde \gamma \left<R_{\tilde S} \tilde x,R_S x\right> \right)\right]_{\tilde \gamma,\gamma=0}\\
&=d!\,\E\left[\left<R_{\tilde S} \tilde x,R_S x\right>^d\right].
\end{align*}

\subsection{Analog results for derivatives}
\label{sec:derivatives}

This section provides analog results to the ones presented in section \ref{sec:information_limit}, but involving the limit $(\tilde x,\tilde \rho) \rightarrow (x,\rho)$. More specifically, we will take $(\tilde x,\tilde \rho)=(x+hz,\rho+h\theta)$, and study the limit $h\rightarrow 0$. For the rest of the section, identify $v=\nobreak(z,\theta)\in \reals^{2L}$. Since $\rho+h\theta$ has to be a probability distribution, we require that $\ones^T \theta=0$ and $\theta[i]\ge 0$ whenever $\rho[i]= 0$.

In comparison with section \ref{sec:information_limit}, where we used the $\chi^2$ divergence and the moment tensors, in this section we use the Fisher information matrix and directional derivatives of the moment tensors, respectively. We define the Fisher information matrix as the $2L\times 2L$ matrix such that
\begin{equation*}
\Gamma^N_{x,\rho}:=\Cov[\nabla \log f^N_{x,\rho}].
\end{equation*}
Here $\nabla \log f^N_{x,\rho}\in \reals^{2L}$, since there is a component that depends on $x$ and one that depends on $\rho$.
The Fisher information matrix is also the Hessian of the $\chi^2$ divergence, i.e.,
\begin{equation}\label{eq:chifisherhessian}
\lim_{h\rightarrow 0}\frac{\chi^2(f^N_{x+hz,\rho+h\theta}||f^N_{x,\rho})}{h^2}= v^T \Gamma^N_{x,\rho} v.
\end{equation}
The Fisher information matrix of $N$ observations is related to the one observation version by
\begin{equation}\label{eq:NFisher}
\Gamma^N_{x,\rho}= N \Gamma_{x,\rho}.
\end{equation}
We define the Jacobian $J_{x,\rho}$ as the $L\times 2L$ matrix such that
\begin{equation}\label{eq:jacobiandef}
J_{x,\rho}v=\lim_{h\rightarrow 0}\frac{\E_{x+hz,\rho+h\theta}[\phi_{x}(\widehat X)]-\E_{x,\rho}[\phi_{x}(\widehat X)]}{h}.
\end{equation}
We also define the directional derivative of $M^d_{x,\rho}$ along $v=(z,\theta)$ as the $d$-dimensional tensor
\begin{equation*}\label{eq:dirautocorr}
\nabla_v M^d_{x,\rho}:=\lim_{h\rightarrow 0}\frac{M^d_{x+hz,\rho+h\theta}-M^d_{x,\rho}}{h}.
\end{equation*}
This derivative always exists, an explicit formula for $\nabla_v M^d_{x,\rho}$ is given in Lemma \ref{lem:fourierautocorr}. The next corollary is an analog of the Cram\'er-Rao bound for estimation of an orbit in MRA.
\begin{corollary}\label{cor:CramerRao} 
	For any $v=(z,\theta)\in \reals^{2L}$, such that $\ones^T \theta=0$ and $\theta[i]\ge 0$, whenever $\rho[i]= 0$, we have
	\begin{equation*}
	\Cov[\phi_x(\widehat X)] \succeq \frac{J_{x,\rho}vv^TJ_{x,\rho}^T}{N v^T\Gamma_{x,\rho}v}.
	\end{equation*}
\end{corollary}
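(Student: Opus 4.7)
The plan is to obtain the Cramér--Rao-type bound as a limiting case of the orbit Chapman--Robbins bound (Theorem~\ref{thm:ChapmanRobbins}) by taking $(\tilde x,\tilde\rho)=(x+hz,\rho+h\theta)$ and sending $h\to 0$. The constraints $\ones^T\theta=0$ and $\theta[i]\ge 0$ whenever $\rho[i]=0$ are exactly what is needed so that $\rho+h\theta\in\Delta^L$ for all sufficiently small $h>0$, making the Chapman--Robbins bound applicable along this perturbation.

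Concretely, the first step is to apply Theorem~\ref{thm:ChapmanRobbins} with $\tilde x=x+hz$ and $\tilde\rho=\rho+h\theta$, which gives
\begin{equation*}
\Cov[\phi_x(\widehat X)]\succeq \frac{w_h w_h^T}{\chi^2(f^N_{x+hz,\rho+h\theta}\|f^N_{x,\rho})},
\end{equation*}
where $w_h=\E_{x+hz,\rho+h\theta}[\phi_x(\widehat X)]-\E_{x,\rho}[\phi_x(\widehat X)]$. Next I would divide numerator and denominator by $h^2$ and pass to the limit $h\to 0^+$. By the definition of the Jacobian~\eqref{eq:jacobiandef}, $w_h/h\to J_{x,\rho}v$, so $w_hw_h^T/h^2\to J_{x,\rho}vv^TJ_{x,\rho}^T$. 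By~\eqref{eq:chifisherhessian}, the denominator divided by $h^2$ converges to $v^T\Gamma^N_{x,\rho}v$, which by~\eqref{eq:NFisher} equals $Nv^T\Gamma_{x,\rho}v$. Putting the two limits together yields exactly the desired inequality, since the cone of PSD matrices is closed under limits.

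The main technical obstacle is justifying that the matrix inequality survives the limit $h\to 0$, which in turn requires the limits of the numerator and denominator to exist and be finite, and the denominator to be strictly positive (otherwise the bound is vacuous or ill-defined). Existence of $J_{x,\rho}v$ relies on differentiability of $\tilde x\mapsto \E_{\tilde x,\tilde\rho}[\phi_x(\widehat X)]$, which follows from smoothness of the Gaussian likelihood in $(x,\rho)$; existence and positivity of $v^T\Gamma_{x,\rho}v$ follow from the standard smoothness of $\log f_{x,\rho}$ in the natural parameters and the identification of $\Gamma_{x,\rho}$ as the Hessian of the $\chi^2$ divergence at $h=0$, which is the second term in a Taylor expansion of an everywhere non-negative function with a zero at $h=0$. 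Once these analytic facts are in place, the PSD ordering passes through the limit componentwise and the corollary follows. In the degenerate case $v^T\Gamma_{x,\rho}v=0$, the statement is understood as $J_{x,\rho}v=0$ (otherwise the Chapman--Robbins bound along this perturbation would force infinite covariance), so both sides agree.
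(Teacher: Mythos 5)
Your proposal is correct and follows essentially the same route as the paper: apply the orbit Chapman--Robbins bound (Theorem~\ref{thm:ChapmanRobbins}) along the perturbation $(\tilde x,\tilde\rho)=(x,\rho)+hv$, divide by $h^2$, and pass to the limit $h\to 0$ using \eqref{eq:jacobiandef}, \eqref{eq:chifisherhessian}, and \eqref{eq:NFisher}. The only cosmetic difference is that the paper phrases the limit through quadratic forms $w^T\Cov[\phi_x(\widehat X)]w$ for arbitrary $w$ rather than through closedness of the PSD cone, and it does not spell out the degenerate case $v^T\Gamma_{x,\rho}v=0$ that you handle explicitly.
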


\begin{proof}
	\label{proof:CramerRao}
	
	If $\theta$ is under the hypothesis of the theorem, then there exists $h_0>0$ such that for all $0\le h\le h_0$, $\rho+h\theta\in \Delta^L$. Letting $(\tilde x,\tilde \rho)=hv+(x,\rho)$ in Theorem~\ref{thm:ChapmanRobbins} we obtain for any $w\in \RL$
	\begin{align*}
	w^T\Cov[\phi_x(\widehat X)]w\hspace{-30pt}\\
	&\ge \lim_{h\rightarrow 0}\frac{(w^T(\E_{x+hz,\rho+h\theta}[\phi_{x}(\widehat X)]-\E_{x,\rho}[\phi_{x}(\widehat X)]))^2}
	{\chi^2_N(f_{x+hz,\rho+h\theta}||f_{x,\rho})}\\
	&= \frac{(w^TJ_{x,\rho} v)^2}{N v^T \Gamma_{x,\rho} v},
	\end{align*}
	by equations \eqref{eq:chifisherhessian}, \eqref{eq:NFisher} and \eqref{eq:jacobiandef}, and the corollary follows.
\end{proof}

We now use (\ref{eq:probdenstaylor}) to give an expression of the Fisher information in terms of the directional derivative of the tensor moments.

\begin{lem}\label{lem:falldominoesfisher}
	For any $v=(z,\theta)\in \reals^{2L}$,
	\begin{align}\label{eq:Fisheralpha}
	v^T \Gamma_{x,\rho} v&=\frac{\sigma^{-2d}}{(d!)^2}  \E\left[\left(v^T \nabla \alpha_{x,\rho}^d(\error)\right)^2\right]+\OO(\sigma^{-2d-1}),\\
	&=\frac{\sigma^{-2d}}{d!}  \|\nabla_v M^d_{x,\rho} \|^2+\OO(\sigma^{-2d-1}), \label{eq:fisherautocorr}
	\end{align}
	where $d=\inf\left\{n: \|\nabla_v M^n_{x,\rho} \|^2>0\right\}$.
\end{lem}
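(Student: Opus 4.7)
The plan is to mirror the structure of the proof of Lemma \ref{lem:falldominoes}, replacing the finite differences $\alpha^j_{\tilde x,\tilde \rho}-\alpha^j_{x,\rho}$ and $M^j_{\tilde x,\tilde \rho}-M^j_{x,\rho}$ by directional derivatives in $v$. Starting from the standard integral representation of the Fisher information,
\[
v^T \Gamma_{x,\rho}\, v = \int_{\reals^L} \frac{\bigl(v^T \nabla f_{x,\rho}(y;\gamma)\bigr)^2}{f_{x,\rho}(y;\gamma)}\,dy,
\]
I would differentiate the Taylor expansion \eqref{eq:probdenstaylor} in the $v$ direction to get
\[
v^T \nabla f_{x,\rho}(y;\gamma) \;=\; f_\error(y) \sum_{j=0}^\infty v^T \nabla \alpha^j_{x,\rho}(y)\,\frac{\gamma^j}{j!}.
\]
The $j=0$ term vanishes since $\alpha^0 \equiv 1$, and the $j<d$ terms vanish by the defining property of $d$ (justified a posteriori by the polarization step below).

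Substituting this into the Fisher integral, using that $f_{x,\rho}(y;\gamma)/f_\error(y) = 1 + O(\gamma)$, and integrating term-by-term, the leading-order contribution in $\gamma = 1/\sigma$ is
\[
\frac{\gamma^{2d}}{(d!)^2}\,\E\!\left[\bigl(v^T \nabla \alpha^d_{x,\rho}(\error)\bigr)^2\right] + O(\gamma^{2d+1}),
\]
which is exactly \eqref{eq:Fisheralpha}. The algebraic manipulation is identical to the one already done in the proof of Lemma \ref{lem:falldominoes}; the only new point is to verify that the $v$-derivative can be interchanged with the defining integral of the Fisher information, which follows from infinite differentiability of $f_{x,\rho}(y;\gamma)$ together with a standard dominated-convergence argument.

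For \eqref{eq:fisherautocorr}, I would reuse the closed-form identity that emerged inside the proof of Lemma \ref{lem:falldominoes}, namely
\[
\E\!\left[\alpha^d_{x_1,\rho_1}(\error)\,\alpha^d_{x_2,\rho_2}(\error)\right] \;=\; d!\,\left\langle M^d_{x_1,\rho_1},\,M^d_{x_2,\rho_2}\right\rangle,
\]
which holds for any two pairs of parameters. Differentiating both sides in the direction $v$ in the first argument, then again in $v$ in the second, and evaluating at $(x_1,\rho_1)=(x_2,\rho_2)=(x,\rho)$, yields
\[
\E\!\left[\bigl(v^T \nabla \alpha^d_{x,\rho}(\error)\bigr)^2\right] \;=\; d!\,\|\nabla_v M^d_{x,\rho}\|^2.
\]
This polarization identity also reconciles the two candidate definitions of $d$: the first index $n$ with $v^T \nabla \alpha^n_{x,\rho} \not\equiv 0$ almost surely coincides with the first $n$ satisfying $\|\nabla_v M^n_{x,\rho}\|^2 > 0$, since the expectation of a nonnegative square vanishes iff the square vanishes a.s. Combining the two displays gives \eqref{eq:fisherautocorr}.

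The main technical obstacle I anticipate is bookkeeping the remainder $O(\sigma^{-2d-1})$: the tail terms $j \ge d+1$ in $v^T \nabla f_{x,\rho}$, combined with the $O(\gamma)$ expansion of $1/f_{x,\rho}(\cdot;\gamma)$, must be shown to contribute integrably and uniformly in $\gamma$ on a neighborhood of $0$. This is the same subtlety that already arises in the proof of Lemma \ref{lem:falldominoes} and should be handled by the same Gaussian moment bounds, now applied to $v^T \nabla \alpha^j_{x,\rho}(\error)$ in place of $\alpha^j_{x,\rho}(\error)$.
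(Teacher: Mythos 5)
Your proposal is correct and follows essentially the same route as the paper: expand $v^T\nabla f_{x,\rho}$ via the Taylor series of \eqref{eq:probdenstaylor}, kill the terms with $j<d$, and extract the leading $\gamma^{2d}$ coefficient of the Fisher integral, then convert $\E[(v^T\nabla\alpha^d_{x,\rho}(\error))^2]$ into $d!\,\|\nabla_v M^d_{x,\rho}\|^2$ using the moment identity already established inside the proof of Lemma~\ref{lem:falldominoes}. The only cosmetic difference is in that last step: you polarize the bilinear identity and differentiate once in each slot, whereas the paper takes the limit $h\to 0$ of the difference quotient of the quadratic identity $\E[(\alpha^d_{x+hz,\rho+h\theta}-\alpha^d_{x,\rho})^2]=d!\,\|M^d_{x+hz,\rho+h\theta}-M^d_{x,\rho}\|^2$ -- the two are equivalent.
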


\begin{proof}
	In this case we cannot just take the limit $h\rightarrow 0$ in \eqref{eq:falldominoes}, since the term contained in $\OO(\sigma^{-2d-1})$ might blow up. Instead we proceed by doing similar algebraic manipulations. Recall that $\nabla f_{x,\rho}(y;\gamma)$ and $\nabla \alpha^i_{x,\rho}(y)$ are in $\reals^{2L}$, with $v^T\nabla f_{x,\rho}(y;\gamma)$ being the directional derivative of $f_{x,\rho}(y;\gamma)$ in the direction $v=(z,\theta)$. We have
	\begin{align*}
	v^T \Gamma_{x,\rho} v &=v^T \Cov[\nabla \log f_{x,\rho}(Y;\gamma)] v\\
	&=\E_{x,\rho}\left[\left(\frac{v^T \nabla f_{x,\rho}(Y;\gamma)}{f_{x,\rho}(Y;\gamma)}\right)^2\right] \\
	&=\int_{\RL} \frac{\left(\sum\limits_{i=0}^\infty v^T \nabla \alpha^i_{x,\rho}(y)\frac{\gamma^i}{i!}\right)^2}
	{\sum\limits_{i=0}^\infty \alpha^i_{x,\rho}(y)\frac{\gamma^i}{i!}} f_\error(y)\,dy
	\end{align*}
	where the second line follows from
	\[ \E_{x,\rho}\left[\frac{\nabla f_{x,\rho}(Y;\gamma)}{f_{x,\rho}(Y;\gamma)}\right]=0 \]
	By the definition of $d$ and \eqref{eq:fisherautocorr}, we have $v^T \nabla \alpha_{x,\rho}^n(z)=0$ almost surely for $n< d$, thus
	\begin{align*}
	v^T \Gamma_{x,\rho} v &=\int_{\RL}\frac{\left(\sum\limits_{i=d}^\infty v^T \nabla \alpha^i_{x,\rho}(y)\frac{\gamma^i}{i!}\right)^2}
	{1+\sum\limits_{i=i}^\infty \alpha^i_{x,\rho}(y)\frac{\gamma^i}{i!}} f_\error(y)\,dy\\
	&= \frac{\gamma^{2d}}{(d!)^2} \int_{\RL} \left(v^T \nabla \alpha_{x,\rho}^d(y)\right)^2 f_\error(y)\,dy+\OO(\gamma^{2d+1})\\
	&=\frac{\gamma^{2d}}{(d!)^2}  \E\left[\left(v^T \nabla \alpha_{x,\rho}^d(\error)\right)^2\right]+\OO(\gamma^{2d+1}),
	\end{align*}
	Equation \eqref{eq:Fisheralpha} now follows since $\gamma=1/\sigma$.

	We now prove \eqref{eq:fisherautocorr} We let $(\tilde x,\tilde \rho)=(x,\rho)+hv$ in \eqref{eq:chi2autocorr} and take the limit $h\rightarrow 0$ to get
	\begin{align*}
	\E\left[\left(v^T \nabla \alpha_{x,\rho}^d(\error)\right)^2\right]\hspace{-1cm}\\
	&=\lim_{h\rightarrow 0}
	\frac{\E\left[\left(\alpha_{x+hz,\rho+h\theta}^d(\error)-\alpha_{x,\rho}^d(\error)\right)^2\right]}{h^2}\\
	&=d!\lim_{h\rightarrow 0}
	\frac{ \|M^d_{x+hz,\rho+h\theta}-M^d_{x,\rho}\|^2}{h^2}\\
	&=d! \|\nabla_v M^d_{x,\rho} \|^2.
	\end{align*}
\end{proof}

Finally, from Corollary \ref{cor:CramerRao} and Lemma~\ref{lem:falldominoesfisher}, we obtain a result analog to Theorem \ref{thm:metalowerbound}.

\begin{corollary}\label{lem:lbmetatheorem}
	For any $v=(z,\theta)\in \reals^{2L}$, such that $\ones^T \theta=0$ and $\theta[i]\ge 0$ whenever $\rho[i]= 0$,	let $Q^n_{v}= \frac1{n!}\|\nabla_v M^d_{x,\rho}\|^2$, $q_{v}=\inf\left\{n:Q^n_{v}>0\right\}$ and $\bar q=\max q_{v}$. 
	Then
	\begin{equation}\label{eq:momentscramerrao}
	\MSE\ge \sup_{v: q_v=\bar q}\left\{\frac{\|z\|^2}{\lambda^{\bar q}_N Q^{\bar q}_{v}
		+\OO\left(\lambda^{\bar q}_N \sigma^{-1}\right)}\right\}.
	\end{equation}
\end{corollary}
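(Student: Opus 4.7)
The plan is to mirror the proof of Theorem~\ref{thm:metalowerbound}, replacing the Chapman--Robbins ingredient by its Cram\'er--Rao analog (Corollary~\ref{cor:CramerRao}) and the $\chi^2$ expansion by its Fisher-information analog (Lemma~\ref{lem:falldominoesfisher}). Concretely, I would first start from Corollary~\ref{cor:CramerRao}, which gives the PSD lower bound
\[
\Cov[\phi_x(\widehat X)]\succeq \frac{J_{x,\rho}vv^TJ_{x,\rho}^T}{N\,v^T\Gamma_{x,\rho}v}.
\]
Taking traces on both sides and using the bias--variance decomposition \eqref{eq:BiasVar}, together with the fact that the bias-squared term is nonnegative, I would obtain
\[
\MSE\;\ge\;\frac{\tr(\Cov[\phi_x(\widehat X)])}{\|x\|^2}\;\ge\;\frac{\|J_{x,\rho}v\|^2}{\|x\|^2\,N\,v^T\Gamma_{x,\rho}v}.
\]

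Next, I would identify the numerator in the limit of a consistent estimator. Since $\widehat X$ is consistent, $\E_{x,\rho}[\phi_x(\widehat X)]\to x$ and, because $x+hz$ lies in a neighborhood of $x$ on which the identity shift achieves the minimum in the definition of $\phi_x$ for all sufficiently small $h$, we also have $\E_{x+hz,\rho+h\theta}[\phi_x(\widehat X)]\to x+hz$. Plugging this into the definition \eqref{eq:jacobiandef} of $J_{x,\rho}$ yields $J_{x,\rho}v=z$ in the asymptotic regime, hence $\|J_{x,\rho}v\|^2\to\|z\|^2$.

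For the denominator, I would apply Lemma~\ref{lem:falldominoesfisher} at the direction $v$ with $q_v=\bar q$. By definition of $q_v$, all directional derivatives of $M^n_{x,\rho}$ vanish for $n<\bar q$, so the leading term of the Fisher information is of order $\sigma^{-2\bar q}$:
\[
v^T\Gamma_{x,\rho}v\;=\;\frac{\sigma^{-2\bar q}}{\bar q!}\|\nabla_v M^{\bar q}_{x,\rho}\|^2+\OO(\sigma^{-2\bar q-1})\;=\;\sigma^{-2\bar q}Q^{\bar q}_v+\OO(\sigma^{-2\bar q-1}).
\]
Multiplying by $N$ and recalling $\lambda^{\bar q}_N=N/\sigma^{2\bar q}$, the denominator becomes $\lambda^{\bar q}_N Q^{\bar q}_v+\OO(\lambda^{\bar q}_N\sigma^{-1})$. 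Combining this with the numerator bound and taking the supremum over admissible directions $v$ with $q_v=\bar q$ then yields \eqref{eq:momentscramerrao}.

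The main obstacle I expect is the careful justification of $J_{x,\rho}v=z$ in the consistency limit: since $\phi_x$ involves a minimum over discrete shifts, exchanging the limits $h\to 0$ and $N\to\infty$ is not entirely automatic, and one must verify that the nonidentity shifts $R_s x$ (for $s\neq 0$) are uniformly bounded away from $x+hz$ for small $h$, so that the relevant minimizer is stable. A secondary, more technical issue is controlling the $\OO(\sigma^{-2\bar q-1})$ error term in Lemma~\ref{lem:falldominoesfisher} uniformly enough along the perturbation direction $v$ to conclude at the level of the supremum; but once one works with a fixed admissible $v$ and then supremizes, this follows from the pointwise expansion already established in the lemma.
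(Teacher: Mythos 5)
Your proposal is correct and follows exactly the route the paper intends: the paper gives no explicit proof of Corollary~\ref{lem:lbmetatheorem}, stating only that it follows from Corollary~\ref{cor:CramerRao} and Lemma~\ref{lem:falldominoesfisher} in analogy with Theorem~\ref{thm:metalowerbound}, and your trace-plus-bias--variance argument with $J_{x,\rho}v\to z$ under consistency and the Fisher-information expansion $v^T\Gamma_{x,\rho}v=\sigma^{-2\bar q}Q^{\bar q}_v+\OO(\sigma^{-2\bar q-1})$ is precisely that derivation (and is indeed how the paper uses the corollary in Appendix~\ref{proof:corsigma}). The only discrepancy is the normalization factor $\|x\|^2$, which your derivation correctly produces in the denominator but which the stated bound \eqref{eq:momentscramerrao} omits; this appears to be a typo in the paper rather than a gap in your argument, since the paper reinstates the factor $\|z\|^2/\|x\|^2$ when invoking the corollary.
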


\subsection{Proof of Theorem \ref{thm:corsigma}}
\label{proof:corsigma}

Before proving Theorem~\ref{thm:corsigma}, we need the following lemma.

\begin{lem}\label{lem:fourierautocorr}
	
	The entries with index $\kk=(k_1,k_2,\dots,k_d)\in \Z^d_L$ of $M^d_{x,\rho}$ and $\nabla_v M^d_{x,\rho}$ can be explicitly written as 
	\begin{equation}\label{eq:autocorrelation2}
	M^d_{x,\rho}[\kk]:=\sum _{\ell=0}^L \rho[\ell]\prod_{i=1}^d x[k_i\modminus \ell],
	\end{equation}
	and
	\begin{equation}\label{eq:dirautocorr2}
	(\nabla_v M^d_{x,\rho})[\kk]=\sum _{\ell=0}^{L-1}\left(\rho[\ell]\sum_{i=1}^d \frac{z[k_i\modminus \ell]}{x[k_i\modminus \ell]}+\theta[\ell]\right)\prod_{i=1}^d x[k_i\modminus \ell],
	\end{equation}
	where we use the convention $x[k_i\modminus \ell]/x[k_i\modminus \ell]=1$ when $\\x[k_i\modminus \ell]=0$. Moreover, denote the $d$-dimensional Fourier Transform by $\F_d$. For any $\aa=(a_1,a_2,\dots,a_d)\in \Z^d_L$ we have
	\begin{equation}\label{eq:FourierTransformA}
	\F_d M^d_{x,\rho}[\aa]=\F \rho\left[\sum_{j=1}^d a_j\right]\prod_{j=1}^d \F x[a_j],
	\end{equation}
	and
	\begin{align}\label{eq:FourierTransformvA}
	F_d (\nabla_v M^d_{x,\rho})[\aa]= \hspace{-30pt} &\\
	& \left(\sum_{j=1}^d \frac{Fz[a_j]}{Fx[a_j]}+\frac{F\theta\left[\sum_{j=1}^d a_j\right]}{F\rho\left[\sum_{j=1}^d a_j\right]}\right)\F_d M^d_{x,\rho}[\aa],
	\end{align}
	again using the convention $Fx[a_j]/Fx[a_j]=1$ when $Fx[a_j]=0$.
	The denote the $d$-dimensional Fourier Transform preserves the $L_2$ norm of the tensors, i.e.
	\begin{equation}\label{eq:fourierautounit}
	\|M^d_{\tilde x,\tilde \rho}-M^d_{x,\rho}\|^2=\frac{1}{L^d}\|\F_d M^d_{\tilde x,\tilde \rho}-\F_d M^d_{x,\rho}\|^2,
	\end{equation}
	and
	\begin{equation}\label{eq:fourierautounitdir}
	\|\nabla_v M^d_{x,\rho} \|^2=\frac{1}{L^d}\|\F_d\nabla_v M^d_{x,\rho} \|^2.
	\end{equation}
	Also, 
	
\end{lem}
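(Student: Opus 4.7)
The plan is to prove the five identities in sequence, using the fact that each statement about the directional derivative and about the Fourier transform follows mechanically once the explicit formula \eqref{eq:autocorrelation2} is established.

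First, I would unfold Definition~\ref{defn:autocorr} directly: since $M^d_{x,\rho}=\E[(R_S x)^{\otimes d}]$ with $S\sim\rho$, the entry indexed by $\kk=(k_1,\ldots,k_d)$ equals $\E\bigl[\prod_{i=1}^d (R_S x)[k_i]\bigr]=\sum_{\ell=0}^{L-1}\rho[\ell]\prod_{i=1}^d x[k_i\modminus\ell]$, which is \eqref{eq:autocorrelation2}. Formula \eqref{eq:dirautocorr2} then follows by writing $M^d_{x+hz,\rho+h\theta}[\kk]=\sum_\ell (\rho[\ell]+h\theta[\ell])\prod_{i=1}^d(x[k_i\modminus\ell]+hz[k_i\modminus\ell])$, differentiating in $h$ at $h=0$ via the product rule, and factoring out $\prod_i x[k_i\modminus\ell]$; the only subtle point is how to interpret the ratio $z[k_i\modminus\ell]/x[k_i\modminus\ell]$ at indices where $x[k_i\modminus\ell]=0$, which the statement handles by convention.

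For \eqref{eq:FourierTransformA} I would plug \eqref{eq:autocorrelation2} into the definition of the $d$-dimensional DFT, swap the order of summation, and substitute $m_i=k_i\modminus\ell$ inside each of the $d$ univariate sums. The shifts factor out as $e^{-2\pi\iota\ell a_i/L}$ per coordinate and combine into $e^{-2\pi\iota\ell(\sum_j a_j)/L}$, which when summed against $\rho[\ell]$ yields $\F\rho[\sum_j a_j]$; the remaining factor is $\prod_j\F x[a_j]$. Identity \eqref{eq:FourierTransformvA} is then obtained either by differentiating \eqref{eq:FourierTransformA} along $v=(z,\theta)$ and using the product rule, or equivalently by applying $\F_d$ directly to \eqref{eq:dirautocorr2} (the two agree because $\F_d$ is linear and commutes with the directional derivative).

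Finally, \eqref{eq:fourierautounit} and \eqref{eq:fourierautounitdir} are both immediate from Parseval's theorem in dimension $d$: the unnormalized $L$-point DFT satisfies $\sum_k |\F u[k]|^2 = L\sum_i |u[i]|^2$, and tensorizing this $d$ times gives a factor of $L^d$, exactly the prefactor in the claimed identities. I do not expect any genuine obstacle here; the main care is bookkeeping of indices and the convention $x[k_i\modminus\ell]/x[k_i\modminus\ell]=1$ at the zeros of $x$, which must be reconciled with the factor $Fx[a_j]/Fx[a_j]$ in the Fourier-side statement. That reconciliation is straightforward because in \eqref{eq:dirautocorr2} the zero factor $x[k_i\modminus\ell]$ in the product kills any ill-defined ratio, and on the Fourier side the zero factor $\F x[a_j]$ in $\F_d M^d_{x,\rho}[\aa]$ plays the same role.
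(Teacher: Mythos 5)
Your proposal follows essentially the same route as the paper's proof: unfolding Definition~\ref{defn:autocorr} to get \eqref{eq:autocorrelation2}, applying the product rule for \eqref{eq:dirautocorr2}, computing the $d$-dimensional DFT with the index substitution $k_j\mapsto k_j\modplus\ell$ to factor out $\F\rho\bigl[\sum_j a_j\bigr]\prod_j\F x[a_j]$, treating \eqref{eq:FourierTransformvA} as the analogous/differentiated computation, and invoking Parseval for the norm identities. The only quibble is your closing remark that the zero factor ``kills'' the ill-defined ratio: the convention is meant to encode the cancellation $\frac{z[k_i\modminus\ell]}{x[k_i\modminus\ell]}\cdot x[k_i\modminus\ell]=z[k_i\modminus\ell]$ (so the product-rule term survives with $\prod_{j\neq i}x[k_j\modminus\ell]$), not to make the term vanish — but this does not affect the correctness of your main argument, and the paper's own proof is equally terse on this point.
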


\begin{proof}
	We first prove equation (\ref{eq:autocorrelation2}). By equation (\ref{eq:autocorrelation}), we have 
	\begin{align*}
	M^d_{x,\rho}[\kk]&=\E\left[\prod_{i=1}^d (R_Sx)[k_i]\right]\\
	&=\E\left[\prod_{i=1}^d x[k_i\modminus S]\right]\\
	&=\sum _{\ell=0}^L \rho[\ell]\prod_{i=1}^d x[k_i\modminus \ell].
	\end{align*}
	Equation (\ref{eq:dirautocorr2}) follows from the formula of the derivative of the product:
	\begin{multline*}
	\left(\rho[\ell]\prod_{i=1}^d x[k_i\modminus \ell]\right)'\\
	=\left(\rho'[\ell]+\rho[\ell]\sum_{i=1}^d \frac {x'[k_i\modminus \ell]}{x[k_i\modminus \ell]}\right)\prod_{i=1}^d x[k_i\modminus \ell].
	\end{multline*}
	We finally prove~\eqref{eq:FourierTransformA}; the proof of (\ref{eq:FourierTransformvA}) is analogous.
	\begin{align*}
	\F_dM^d_{x,\rho}[\aa]&=\sum_{\kk\in\Z_L^d}
	M^d_{x,\rho}[\kk]\exp\left(-\frac{2\pi \iota}L \left<\kk,\aa\right> \right)\\
	&= \sum_{\kk\in\Z_L^d}
	\sum _{\ell=0}^{L-1} \rho[\ell]\prod_{j=1}^d x[k_j\modminus \ell]\exp\left(-\frac{2\pi \iota}L k_ja_j \right)\\
	&=\sum_{\kk\in\Z_L^d}
	\sum _{\ell=0}^{L-1} \rho[\ell]\prod_{j=1}^d x[k_j]\exp\left(-\frac{2\pi \iota}L a_j(k_j\modplus \ell)\right)\\
	&= \hspace{-6pt} \sum_{\kk\in\Z_L^d}
	\sum _{\ell=0}^{L-1} \rho[\ell]\prod_{j=1}^d x[k_j]\exp\left(-\frac{2\pi \iota}L (k_ja_j+\ell a_j )\right)\\
	&=\sum _{\ell=0}^{L-1} \rho[\ell]\exp\left(-\frac{2\pi \iota}L \left(\ell\sum_{j=1}^d a_j\right)\right) \prod_{j=1}^d \F x[a_j]\\
	&=\F \rho\left[\sum_{j=1}^d a_j\right]\prod_{j=1}^d \F x[a_j].
	\end{align*}
\end{proof}

We are now ready to prove Theorem~\ref{thm:corsigma}, starting by~\eqref{eq:corsigma4}. Since $\widehat X$ is consistent, $\E_{x,\rho}[\phi_{x}(\widehat X)]\rightarrow x$ and $J_{x,\rho}\rightarrow [I_L ~0_{L\times L}]$ as $N\rightarrow \infty$. By~\eqref{eq:BiasVar} and Corollary~\ref{cor:CramerRao} we have
\begin{align}
\nonumber \lim_{N\rightarrow \infty}N\cdot \MSE \hspace{-20pt}\\
\label{eq:unbiased1}&\ge\lim_{N\rightarrow \infty}\frac{N\tr(\Cov[\phi_x(\widehat X)])}{\|x\|^2}\\
\nonumber &\ge \lim_{N\rightarrow \infty} \frac{\sigma^{2d}}{d!}\frac{\|J_{x,\rho}v\|^2}{\|x\|^2}\frac1{\|\nabla_v M^d_{x,\rho} \|^2} - \OO\left(\sigma^{2d-1}\right)\\
&= \frac{\sigma^{2d}}{d!}\frac{\|z\|^2}{\|x\|^2}\frac1{\|\nabla_v M^d_{x,\rho} \|^2} - \OO\left(\sigma^{2d-1}\right).
\end{align}
We will choose $z=x-\frac{\ones^Tx}L \ones$, and $\theta=\frac1L\ones-\rho$. This choice of $\theta$ is under the theorem assumptions, since $\ones^T \theta=0$ and $\theta[i]=\frac1L\ge 0$ whenever $\rho[i]= 0$. By the linearity of the Fourier transform, this definition is equivalent to $\F z=\F x-\F x[0]\delta_0$ and $\F \theta=\F \rho [0]\delta_0-\F \rho=\delta_0-\F \rho$. Since the $d$-dimensional Fourier Transform is unitary, we can write using Lemma~\ref{lem:fourierautocorr} 
\begin{equation}\label{eq:corfouriereq}
\|\nabla_v M^d_{x,\rho} \|^2=
\frac1{L^d}\sum_{\aa\in\Z_L^d}|\F_d\nabla_v M^d_{x,\rho}[\aa]|^2.
\end{equation}
For $d=1,2$ we have
\begin{equation*}
\F_1\nabla_v M^1_{x,\rho}[a]=\F\rho[a]\F z[a]+\F\theta[a] \F x[a],
\end{equation*}
and 
\begin{align}
\F_2\nabla_v M^2_{x,\rho}[a_1,a_2]\
\nonumber =&\F \rho\left[a_1+a_2\right]\F z[a_1]\F x[a_2]\\
\nonumber &+\F \rho\left[a_1+a_2\right]\F x[a_1]\F z[a_2]\\
\label{eq:cor4acorr2} &+\F \theta\left[a_1+a_2\right] \F x[a_1]\F x[a_2].
\end{align}
Now by our choice of $z$ and $\theta$ we have $\F\rho[a]\F z[a]=-\F\theta[a] \F x[a]$ for all $a\in \Z_L$, so $\|\nabla_v M^1_{x,\rho} \|=0$. On the other hand, by some algebra manipulation of~\eqref{eq:cor4acorr2} we obtain
\begin{align*}
\|\nabla_v M^2_{x,\rho} \|^2\hspace{-30pt}\\
&=\frac{1}{L^2}\left(3\|\F z\|_4^4+\sum_{\aa\in\Z_L^2}|\F\rho[a_1+a_2]\F z[a_1]\F z[a_2]|^2\right)\\
&\le \frac{4}{L^2} \|\F z\|^4\\
&\le 4\|z\|^2\|x\|^2,
\end{align*}
where we used $|\F\rho[a_1+a_2]|\le 1$ and $\|\F z\|_4\le \|\F z\|\le \|\F x\|$, and \eqref{eq:corsigma4} follows.

We now proceed to prove~\eqref{eq:corsigma6}. Suppose that $\rho$ is periodic with period $\ell<\frac L 2$, and let $b=\frac L \ell$, so that $b>2$. Then $\F \rho[k]= 0$ if $b$ does not divide $k$. For a positive integer $i\le \lceil\frac{b-2}2\rceil$, define $z_i\in \RL$ such that 
$$\F z_i[k]=\left\{\begin{array}{rl}
\medskip  \F x[k]\iota&\text{if }b|k-i,\\
\medskip -\F x[k]\iota&\text{if }b|k+i,\\
0&\text{otherwise,}\end{array}\right.$$
where $b|k$ means that $b$ divides $k$.
Assume $z_i\neq 0$, let $\theta_i=0_L$ and $v_i=(z_i,\theta_i)$. Since $\widehat X$ is consistent and $\{z_i\}_{1\le i\le \lceil\frac{b-2}2\rceil}$ is a set of orthogonal vectors, we have by~\eqref{eq:unbiased1} and Corollary~\ref{cor:CramerRao}:
\begin{align*}
\lim_{N\rightarrow \infty}N\cdot \MSE \hspace{-20pt}\\
&\ge\lim_{N\rightarrow \infty}\frac{N\tr(\Cov[\phi_x(\widehat X)])}{\|x\|^2}\\
&\ge \lim_{N\rightarrow \infty}\frac1{\|x\|^2} \sum_{i=1}^{\lceil\frac{b-2}2\rceil} \frac{N z_i^T \Cov[\phi_x(\widehat X)]z_i}{\|z_i\|^2} \\
&\ge \frac{1}{\|x\|^2}\sum_{i=1}^{\lceil\frac{b-2}2\rceil}  \frac{\sigma^{2d_i}}{d_i!}\frac{\|z_i\|^2}{\|\nabla_{v_i} M^{d_i}_{x,\rho} \|^2} - \OO\left(\sigma^{2d_i-1}\right),
\end{align*}
where $d_i=\inf\left\{n:\|\nabla_{v_i} M^n_{x,\rho} \|^2>0\right\}$. Recalling equation~\eqref{eq:corfouriereq} and since $\theta_i=0$, we have now for $d=1,2,3$,
\begin{equation}\label{eq:cor6acorr1}
\F_1\nabla_{v_i} M^1_{x,\rho}[a]=\F\rho[a]\F z_i[a],
\end{equation}
\begin{multline}\label{eq:cor6acorr2}
\F_2\nabla_{v_i} M^2_{x,\rho}[a_1,a_2]=\F \rho\left[a_1+a_2\right](\F z_i[a_1]\F x[a_2]\\
+\F x[a_1]\F z_i[a_2]),
\end{multline}
and
\begin{align}
\nonumber \F_3\nabla_{v_i} M^3_{x,\rho}[a_1,a_2,a_3]=&\F \rho\left[a_1+a_2+a_3\right]\\
\nonumber &\quad (\F z_i[a_1]\F x[a_2]\F x[a_3]\\
\nonumber &\quad +\F x[a_1]\F z_i[a_2]\F x[a_3]\\
&\quad +\F x[a_1]\F x[a_2]\F z_i[a_3]). \label{eq:cor6acorr3}
\end{align}
Since $\F\rho[a]\neq 0\Rightarrow b|a \Rightarrow \F z_i[a]=0$, $(\ref{eq:cor6acorr1})=0 ~\forall a\in\Z_L$ . Also $\F\rho[a_1+a_2]\neq 0$ implies $b|a_1+a_2$. Let $\tilde a_j=\mod(a_j,b)$ for $j=1$ and $2$. Since $b|a_1+a_2$, $\tilde a_1+\tilde a_2=b$, so assume with out loss of generality that $\tilde a_1\le \frac b2$. If $\tilde a_1\neq i$, then $\F z[a_1]=\F z[a_2]=0$. On the other hand, if $\tilde a_1= i$, then
\begin{align*}
\F z[a_1]\F x[a_2]+\F x[a_1]\F z[a_2]\hspace{-25pt}\\
&=\iota\F x[a_1]\F x[a_2]-\iota\F x[a_1]\F x[a_2]\\
&=0,
\end{align*}
so~\eqref{eq:cor6acorr2}$=0~\forall \aa\in\Z^2_L$. Finally since $|\F\rho[\cdot]|\le 1$ we have
\begin{align*}
\|\nabla_{v_i} M^3_{x,\rho} \|^2&\le \frac9{L^3}\sum_{\aa\in\Z_L^3}|\F z_i[a_1]\F x[a_2]\F x[a_3]|^2\\
&=9\|z_i\|^2\|x\|^4,
\end{align*}
and the result follows. Finally, if $z_i=0$, we can alternatively choose 
$$\F \tilde z_i[k]=\left\{\begin{array}{rl}
\medskip  \iota&\text{if }b|k-i,\\
\medskip -\iota&\text{if }b|k+i,\\
0&\text{otherwise.}\end{array}\right.$$ 
We still have $\eqref{eq:cor6acorr1}=0 ~\forall a\in\Z_L$ and $\eqref{eq:cor6acorr2}=0 $ for all $\aa\in\Z^2_L$ except if $\tilde a_1=i$. But $z_i=0$ implies $\F x[a]=0$ if $\mod(a,b)=\pm i$, so $\eqref{eq:cor6acorr2}=0 $ also if $\tilde a_1=i$.

\subsection{Proof of Theorem~\ref{thm:uniqueness}} \label{sec:proof_thm_uniqueness}

We show that if the first two moments of two pairs, signal and distribution, are equal then the pairs are identical up to a translation. Specifically, suppose that $x_1$ and $\rho_1$ have the same first two moments as $x_2$ and $\rho_2$. Equality of the first moments means that $x_1 \ast \rho_1=  x_2 \ast \rho_2$, and therefore:
\begin{align*}
    (\F x_1)[k] \cdot (\F\rho_1)[k] = (\F x_2)[k] \cdot (\F \rho_2)[k].
\end{align*}
Since $\F x_1$ is non-vanishing, we define the ratio 
\begin{align*}
    r[k] = \frac{(F x_2) [k]}{(F x_1) [k]}.
\end{align*}
Then,
\begin{equation} \label{eqn:first_moment_relation}
    (\F \rho_1) [k] = (\F \rho_2) [k] \cdot r[k].
\end{equation}

Furthermore, from the equality of second moments $C_{x_1} D_{\rho_1} C_{x_1}^T = C_{x_2} D_{\rho_2} C_{x_2}^T $, or equivalently (after taking Fourier transforms) $D_{\F x_1} C_{\F \rho_1} D_{\F x_1}^* = D_{\F x_2} C_{\F \rho_2} D_{\F x_2}^*$. Consequently, for $k$, $p =0,\ldots ,L-1$:
\begin{align*} 
    (\F x_1)[k] \cdot  (\F\rho_1) & [k-p] \cdot (\F x_1)[p]^\ast   \nonumber \\
                           &= (\F x_2)[k] \cdot (\F \rho_2)[k-p]\cdot  (\F x_2)[p]^\ast , 
\end{align*}
or equivalently,
\begin{equation} \label{eqn:from_second_moment}
(\F\rho_1)[k-p] = (\F \rho_2)[k-p] \cdot r[k] \cdot r[p]^\ast.
\end{equation}
Because $\rho_1$ and $\rho_2$ are probability distributions, $(\F\rho_1)[0] = (\F\rho_2)[0]=1$. Therefore, taking $k=p$ in \eqref{eqn:from_second_moment} implies $\abs{r[k]}=1$. By ~\eqref{eqn:first_moment_relation}, $r[0] = 1$, and $\F \rho_1$ and $\F \rho_2$ have the same support.

We will denote by $ \operatorname{GCD}\left(a_1,\dots,a_\ell \right)$ the greatest common divisor of the positive numbers $a_1,\dots,a_\ell$.

\begin{lem} \label{lemma:aperiodicCond}
If a distribution $\rho$ is aperiodic then
\begin{align*}
    \operatorname{GCD}\left( \{ k \mid 1\le k \le L , \ (\F \rho)[k]\neq 0 \} \right) = 1 .
\end{align*}
\end{lem}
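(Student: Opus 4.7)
The plan is to prove the contrapositive: if the GCD is some $d>1$, then $\rho$ admits a proper period, contradicting aperiodicity.

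First I would observe that the set $S := \{k : 1 \le k \le L,\ (F\rho)[k] \ne 0\}$ contains $k = L$, because $(F\rho)[L] = (F\rho)[0] = \sum_i \rho[i] = 1$. Hence, writing $d := \operatorname{GCD}(S)$, we automatically have $d \mid L$. Now suppose for contradiction that $d > 1$, and set $\ell := L/d$, so that $1 \le \ell < L$.

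The key step is to show that $R_\ell \rho = \rho$. This follows cleanly from the DFT shift identity
\begin{equation*}
(F(R_\ell \rho))[k] \;=\; e^{-2\pi \iota k \ell / L} (F\rho)[k] \;=\; e^{-2\pi \iota k / d} (F\rho)[k].
\end{equation*}
For every $k$ in the support $S$ of $F\rho$, we have $d \mid k$ by definition of $d$, so $e^{-2\pi \iota k/d} = 1$; for $k \notin S$, both sides vanish. Hence $F(R_\ell \rho) = F\rho$ coordinate-wise, and by the invertibility of the DFT, $R_\ell \rho = \rho$. But this means $\rho$ has period $\ell$ with $1 \le \ell < L$, contradicting the hypothesis that $\rho$ is aperiodic. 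Therefore $d = 1$, as claimed.

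I do not anticipate a real obstacle here: the argument is essentially a direct translation between the time-domain definition of periodicity and the sparsity pattern of the DFT. The one point to be careful about is making sure that $d \mid L$ so that $\ell = L/d$ is a genuine integer shift, which is exactly why I include $k = L$ (equivalently $k = 0$) in $S$ at the outset.
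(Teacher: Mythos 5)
Your proof is correct and follows essentially the same route as the paper: arguing the contrapositive that a GCD $d>1$ forces $\rho$ to be periodic with period $L/d$, using the fact that the Fourier support being contained in the multiples of $d$ characterizes this periodicity. The only difference is that you spell out the DFT shift-identity computation establishing $R_{L/d}\rho=\rho$, a step the paper asserts without detail, so your write-up is if anything slightly more complete.
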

\begin{proof}[Proof of Lemma~\ref{lemma:aperiodicCond}]
A necessary and sufficient condition for a distribution $\rho$ to have period $\ell$ is that $(\F \rho)[m] \neq 0$ only for $m$ of the form $ k (L/\ell)$, $k=0,1,\ldots,\ell-1$. Therefore, the aperiodicity of a distribution $\rho$ means that the shared greatest common divisor of all the indices of nonzero entries in $\F\rho$ (which includes $L$, since $\rho[0]= \rho[L]=1$) is $1$. In fact, if the GCD were equal to some $d>1$, then the distribution would be periodic with a period of $ L/d $ as all nonzero entries would be of the form $kd$, $k\in\{0,1,\ldots L/d\}$.
\end{proof}

Let $m_1,\dots,m_\ell$ be the indices of the support of $\F \rho_1$ (and $\F \rho_2$). Because the greatest common divisor $ \operatorname{GCD}$ is associative -- that is, $ \operatorname{GCD}(a,b,c) = \operatorname{GCD}(\operatorname{GCD}(a,b),c)$ -- by Lemma~\ref{lemma:aperiodicCond} there exist integers $a_1,\ldots,a_\ell$ such that
\begin{equation} \label{eqn:sum_of_nonzeros_indices}
    \sum_{j=1}^n a_j m_j = 1  \mod L .
\end{equation}

Taking $k-p = m_j$ in \eqref{eqn:from_second_moment}, we obtain:
\begin{align}  \label{eq123}
    r[p+m_j] = \tilde{\omega}_j \cdot r [p]
\end{align}
where 
\begin{align*}
    \tilde{\omega}_j = \frac{(\F \rho_1) [m_j]}{(\F \rho_2) [m_j]}.
\end{align*}
From \eqref{eqn:sum_of_nonzeros_indices}, repeated application of \eqref{eq123} yields:
\begin{align} \label{eq777}
    r[p+1] = \tilde{\omega}_1^{a_1} \cdots \tilde{\omega}_\ell^{a_\ell} \cdot r[p]
         = \omega \cdot r[p],
\end{align}
where $\omega = \tilde{\omega}_1^{a_1} \cdots \tilde{\omega}_\ell^{a_\ell}$. Repeatedly applying \eqref{eq777}, we obtain $r[m] = \omega^m r[0] = \omega^m$, or equivalently:
\begin{align} \label{shift_fourier}
    (\F x_2)[m] = \omega^m \cdot (\F x_1)[m].
\end{align}
Furthermore, when $m = L$, we see:
\begin{align*}
    1 = r[0] = r[L]= \omega^L \cdot r[0] = \omega^L,
\end{align*}
i.e., $\omega$ is an $L^{th}$ root of unity. Equation \eqref{shift_fourier} then implies $x_2$ is a translation of $x_1$. Finally, \eqref{eqn:first_moment_relation} then shows that $(\F \rho_1)[m] = \omega^m (\F \rho_2)[m]$, so that $\rho_1$ is also a translation of $\rho_2$. This completes the proof.

\subsection{Proof of Lemma~\ref{prop:reshuffling}} \label{sec:proof_prop_reshuffling}

For any $0 \le i \le L-1$, we can write 
\[ (\rho \ast \theta)[i] = e_i^T C_\rho \theta , \]
with $e_i$ the unit vector with one in its $i$th entry. Consequently, equality of two distinct entries $i$ and $j$ implies 
\begin{equation} \label{eqn:cond_same_entry_dist}
(e_i - e_j)^T C_\rho \theta = 0. 
\end{equation}
However, for a random choice of $\theta$, if~\eqref{eqn:cond_same_entry_dist} holds with non-zero probability, then 
\[ (e_i - e_j)^T C_\rho = 0 , \]
or, 
\[C_\rho^T e_i = C_\rho^T e_j.\]
The latter implies that $\rho$ shifted by $i$ equals $\rho$ shifted by $j$, i.e., $\rho[k-i] = \rho[k-j]$, or 
\[ \rho[k] = \rho[k + i-j], \quad \forall k. \] 
Therefore, $\rho$ is periodic.

\subsection{Proof of Corollary~\ref{cor:stability}} \label{sec:proof_cor_stability}

Throughout the proof, $C$ will always denote a constant depending on $x$ and $\rho$ that may change value from occurrence to occurrence. Let $\hat P_x = L \diag(F \hatM^2 F^{-1})$ denote the estimated power spectrum of $x$, and $\hat p = (\hat P_x)^{-1/2}$. Because $\|\hatM^2 - M^2\|_F \le \varepsilon$, for $\varepsilon$ sufficiently small we must have $\|\hat p - p\| \le C \varepsilon$. Setting $\widehat Q = F^{-1} D_{\hat p} F$, we also have $\|\widehat Q - Q\|_F \le C \varepsilon$. Consequently, the matrix $ \widetilde M_{est}^2 = \widehat Q \hatM^2 \widehat Q^*$ is within $C\varepsilon$ of $\widetilde M^2 = Q M^2 Q^*$, i.e.\ $\|\widetilde M^2_{est} - \widetilde M^2\|_F \le C \varepsilon$.

Let $v_{est}$ denote the top eigenvector of $\widetilde M_{est}^2$, and $v$ the top eigenvector of $\widetilde M^2$. The eigenvalues of $\widetilde M^2$ are the values of $\rho$, which are distinct; let $\Delta > 0$ denote the gap between the first and second eigenvalues. We may apply Theorem 2 of \cite{yu2015useful} to the matrices $\widetilde M^2$ and $\widetilde M_{est}^2$ to say that the sine of the angle $\theta(v,v_{est})$ between $v_{est}$ and $v$ satisfies the following bound:
\begin{align}
\sin(\theta(v,v_{est})) \le 2\frac{\|\widetilde M^2 - \widetilde M_{est}^2\|}{\Delta} 
\le C \varepsilon.
\end{align}
Defining $\tilde v_{est} = \F^{-1}\left(  (\hat P_x)^{1/2} \odot \F v_{est}  \right)$ and $\tilde v = \F^{-1}\left(  (P_x)^{1/2} \odot \F v  \right)$, because $\|\hat P_x - P_x\| \le C\varepsilon$ and $F$ is unitary we also have 
\begin{math}
\sin(\theta(\tilde v,\tilde v_{est})) \le  C \varepsilon.
\end{math}
%
%
We may therefore write $\tilde v_{est} = \eta \tilde v + \tilde u$, where $\eta = \pm 1$ and $\|\tilde u\| \le C \varepsilon$. Consequently, $|\operatorname{Sum}(\tilde v_{est}) - \eta\operatorname{Sum}(\tilde v)| \le C \varepsilon$.

Furthermore, because $\|\hatM^1 - M^1\| \le \varepsilon$, $|\operatorname{Sum}(\hatM^1) - \operatorname{Sum}(M^1)| \le C \varepsilon$ too; and consequently,
\begin{align}
|\eta \operatorname{Sum}(M^1) / \operatorname{Sum}(\tilde v) 
- \operatorname{Sum}(\hatM^1) / \operatorname{Sum}(\tilde v_{est})| 
\le C \varepsilon.
\end{align}
Since $x=\operatorname{Sum}(M^1) \tilde v/ \operatorname{Sum}(\tilde v)$, by defining $\widehat{X}_{\text{Spectral}} = \operatorname{Sum}(\hatM^1)\tilde v_{est} / \operatorname{Sum}(\tilde v_{est}) $, we therefore have $\| \widehat{X}_{\text{Spectral}} -\nobreak x\| \le\nobreak C \varepsilon$, as claimed.

\subsection{Proof of Theorem~\ref{thm:exp_decay}}
\label{sec:proof_spectral l2 convergence}

Since the residuals $\hatM^1 - M^1$, $\hatM^2 - M$ and $\widehat{P}_{x} - P_{x}$ are subexponential, we can apply the Bernstein-type inequality for subexponential random variables found in~\cite{vershynin2010intro}, together with Corollary~\ref{cor:stability}, to obtain
\begin{multline}
\mathbb{P} \left[  \min_{s\in\Z_L} \| R_s\widehat{X}_{\text{Spectral}} - x \|^2 \ge t \right]\\
\le C_1\exp\left(-\frac{N}{\sigma^4} \min\left\{\frac{t}{C_2},\frac{\sqrt t}{C_3}\right\} \right)  \label{eq:bernstein},	
\end{multline}
where $C_1$, $C_2$ and $C_3$ are finite, positive constants that depend on $x$ and $\rho$. We have
\begin{align}
\nonumber \MSE \cdot \|x\|^2 &=\E \left[  \min_{s\in\Z_L} \| R_s\widehat{X}_{\text{Spectral}} - x \|^2 \right]\\
\nonumber &=\int_{0}^{\infty} \hspace{-.2cm}\mathbb{P} \left[  \min_{s\in\Z_L} \| R_s\widehat{X}_{\text{Spectral}} - x \|^2 \ge t \right]dt\\
\nonumber &\le C_1\int_{0}^{\infty} \hspace{-.2cm}\exp\left(-\frac{N}{\sigma^4} \min\left\{\frac{t}{C_2},\frac{\sqrt t}{C_3}\right\} \right) dt\\
&=\text{\small $\displaystyle C_4\frac{ \sigma^4}{N}\left[C_5+\left(C_5+2\frac{\sigma^4}{N}\right)\exp\left(-C_5\frac{N}{\sigma^4}\right)\right],$} \label{eq:bernsteinerror}
\end{align}
with $C_4=C_1C_3^2$ and $C_5=C_2/C_3^2$, thus if $N=\omega(\sigma^4)$, \eqref{eq:bernsteinerror} converges to $0$ as $n$ diverges, and $\widehat{X}_{\text{Spectral}}$ converges to the true signal in $L^2$, up to a cyclic shift.

\subsection{Proof of Proposition~\ref{prop:counterexample}} \label{sec:proof_prop_counterexample}

It is clear that, as $L>1$, $x_1 \neq x_2$. In addition, since $x_1$ is real, the construction ensures that $x_2$ is real as well.

The $\ell$ periodicity of $\rho$ means a sparsity pattern for $\F{\rho}$. Particularly, $\F{\rho}$ is zero everywhere besides 
\begin{equation} \label{eqn:sparsityRhoFourier}  
(\F{\rho})\left[kL/\ell \right] \neq 0  \quad \Longleftrightarrow \quad kL/\ell  \text{  is integer} , 
\end{equation}
for $k=0,\ldots,\ell-1$.
It is easy to verify that 
\begin{equation*}  \label{eqn:ConditionOfPeriodicityFirstMoment}
(\F{x_1})[k] (\F{\rho})[k] = (\F{x_2})[k] (\F{\rho})[k]  , \quad k=0,\ldots,L-1.
\end{equation*}
Therefore, $x_1$ and $x_2$ share the same first moment.

For the second moments, we will show the equality 
\[   C_{x_1} D_\rho C_{x_1}^T =  C_{x_2} D_\rho C_{x_2}^T  . \]
Applying the Fourier matrix, due to the realness of $\rho$, the latter is equivalent to 
\[ D_{\F{x_1}} C_{\F{\rho}} D_{\F{x_1}} =  D_{\F{x_2}} C_{\F{\rho}} D_{\F{x_2}},\] 
Similar to~\eqref{eqn:from_second_moment} and by the sparsity pattern of~\eqref{eqn:sparsityRhoFourier}, this equality should hold only if
\begin{equation*} \label{eqn:ConditionOfPeriodicitySecondMoment}
(\F{x_1})[i] \, (\F{x_1})\left[i+tL/\ell \right]^\ast   = (\F{x_2})[i] \, (\F{x_2})\left[ i+tL/\ell \right]^\ast , 
\end{equation*}
for all $t=0,\ldots,\ell$ and $i=0,\ldots,L-1$. 
By the construction~\eqref{eq:construction_example}, this equation holds true. 

\subsection{Proof of Claim~\ref{claim:L_2_period}} \label{sec:proof_L_2_period} 

Throughout the proof, we assume that each period has no repeated values. This property is guaranteed by reshuffling the measurements with random $\theta \in \Delta_L$; see Lemma~\ref{prop:reshuffling}. Additionally, we can obtain the power spectrum of $x$ from the second moment \eqref{eq:psfrommoment}, which we can then factor out as in \eqref{eq:tilde_x}. Thus, we can assume, without loss of generality, that $\vert \F x\vert [k]=1$ for all $k$. 

Observe that both $x$ and $R_{L/2}x$ are eigenvectors of $\hatM^2= C_xD_\rho  C_x^T$ (we assume exact knowledge of the moments) with the same eigenvalue. Also, $x$ and $R_{L/2} x$ are orthogonal as columns in the orthogonal matrix $C_{x}$. Then, if $u$ is an eigenvector, we can write for some scalars $\alpha,\beta\in\RL$:
\[ u = \alpha x+ \beta R_{L/2} x  , \]
and therefore,
\[  R_{L/2}u = \alpha R_{L/2} x + \beta x , \]
as $R_{L/2} = R_{L/2}^{-1}$. Then, one can verify that the inner product of $u$ and $R_{L/2}u$ is $2\alpha \beta \|x\|^2$. Since the signals are orthogonal, their inner product is zero. This means that $\alpha$ or $\beta$ must be zero. This in turn implies that $u$ was either $x$ or $R_{L/2} x$ in the first place. Therefore, $x$ is the unique eigenvector of $\hatM^2$ that is orthogonal to its translation by $L/2$. This completes the proof.

\subsection{Convex relaxation with semidefinite program } \label{subsec:SDPmethod}

In this section, we propose an additional algorithm for non-uniform MRA based on a semidefinite program (SDP) relaxation.

Since the power spectrum of the signal can be estimated from the data at sample complexity scaling as $\omega(1/\SNR^2)$ according to~\eqref{eq:powerspectrum}, we assume in this section, without loss of generality, that $\vert \F x\vert[k] =1$ for all $k$. Note, that as in Algorithm~\ref{alg:SpectralAlg}, the normalization is done on the second moment matrix, not the individual observations, in order to retain the noise statistics.

The SDP relaxation is based on considering the second moment matrix in the Fourier domain, namely,
\begin{equation} \label{eq:My_prime} 
M^2_\ast = \F\left( M^2 \right) \F^{-1} = D_{\F x} C_{\F \rho }^T D_{\F x}^\ast. 
\end{equation}
The last expression can be also written as 
\begin{equation*} 
M^2_\ast =  C_{\F \rho }^T \odot (\F x \F x{^\ast}),  
\end{equation*}
or 
\begin{equation}  \label{eq:1}
M^2_\ast \odot \overline{X}=  C_{\F \grave\rho },
\end{equation}
where $X = (\F x) (\F x)^\ast$. and $\grave\rho :=\F ^{-1}(\overline{\F {\rho}})$. 

The formulation of~\eqref{eq:1} suggests to pose the recovery problem as,
\begin{equation} \label{eq:nonconvexSDP} 
\begin{aligned}
& \min_{\tilde{\rho},\tilde{X}}
& & \norm{ \hatM^2_\ast \odot\overline{\tilde{X}}   - C_{\F\tilde{\rho}} }_{\textrm{F}}^2 \\
& \text{subject to}
& & \diag(\tilde{X}) = 1, \quad \rank (\tilde{X}) = 1,  \\
&&& \tilde{X}[1,0] = 1 ,\quad  \tilde{X} \succeq 0,  \quad \tilde{\rho}[0] = 1 , \\
&&& \tilde{\rho}[k] = \overline{\tilde{\rho}[-k]}, \, \forall k .
\end{aligned} 
\end{equation}
The constraint $\tilde{X}[1,0] = 1$ follows the assumption that 
$(\F x)[0]=(\F x)[1]=1$. While we can easily estimate $(\F x)[0]$ and therefore fix it, the assumption of fixed $(\F x)[1]=1$ is more delicate. 
Recall that the solution for the MRA problem is always up to cyclic translation. In the Fourier domain, it means that the first entry of the Fourier transform of the signal is determined up to an arbitrary modulation by $e^{2\pi \iota \ell /L}$ for some $\ell\in\mathbb{Z}$. If $L\to\infty$, this allows us to fix this coefficient arbitrarily.

Similarly to the well-known SDP relaxation of the Max-Cut problem~\cite{goemans1995improved}, the non-convex problem~\eqref{eq:nonconvexSDP} can be relaxed to a convex program by omitting the rank constraint as follows,
\begin{equation} \label{eq:opt}
\begin{aligned}
& \min_{\tilde{\rho},\tilde{X}}
& & \norm{ \hatM^2_\ast  \odot\overline{\tilde{X}}   - C_{\F\tilde{\rho}} }_{\textrm{F}}^2 \\
& \text{subject to}
& & \diag(\tilde{X}) = 1, \quad \tilde{X}[1,0] = 1, \\
&&&  \tilde{X} \succeq 0,  \quad \tilde{\rho}[0] = 1 ,\quad   \tilde{\rho}[k] = \overline{\tilde{\rho}[-k]}, \, \forall k .
\end{aligned} 
\end{equation}
This relaxation is convex and can be solved in polynomial time using off--the--shelf software, such as CVX~\cite{grant2008cvx}.

The SDP relaxation~\eqref{eq:opt} recovers the Fourier phases of the signal and the distribution exactly for $N\to\infty $ and fixed noise level, since in this regime we can estimate the first two moments arbitrarily well.

\begin{thm} \label{thm:sdp}
    Assume that $\vert \F x\vert [k]=1$ for all $k$ and that $\F\rho$ is non-vanishing. In addition, assume that $(\F x)[0] = (\F x)[1]=1$.
    Then, if $N\to\infty$ and $\sigma$ is fixed, the solution of~\eqref{eq:opt} is given by $\tilde{X} = (\F x) (\F x)^* $ and $\tilde{\rho} =  \F \grave\rho$.  
    \begin{proof}

Since $\sigma$ is fixed and $N\to\infty$, one can estimate $M^2_\ast$ as in~\eqref{eq:My_prime}  exactly.
    Then, since~\eqref{eq:opt} admits at least one solution (the underlying signal and distribution), the objective is zero at the solution and we get the relation:
\begin{equation} \label{eq:eq} 
C_{\tilde{\rho}} = M^2_\ast \odot\overline{\tilde{X}}  =  C_{\F \grave\rho}\odot(\F x\F x^*) \odot\overline{\tilde{X}}, 
\end{equation}
where we use $\grave\rho :=\F ^{-1}(\overline{\F \tilde{\rho}}) $. 
Let $u = \tilde{\rho}/\F \grave\rho$. Since $\tilde{X}\succeq 0$ we conclude that 
$C_{u}\succeq 0$ and hence $\F u\geq 0$ (the Fourier transform of $u$ is non-negative). By the constraints of~\eqref{eq:opt}, we also have $u[0]=1$. By examining the $(1,0)$th entry of~\eqref{eq:eq}, we also conclude that
\begin{equation*}
(\F x)[1]\overline{(\F x)[0]}(\F \grave\rho)[1] \overline{\tilde{X}[1,0]} = \tilde{\rho}[1] \Rightarrow u[1] = \overline{\tilde{X}[1,0]} = 1, 
\end{equation*}
where the last equality holds because of the constraints of~\eqref{eq:opt}.

Until now, we have shown that the vector $u$ satisfies $u[0]=u[1]=1$, it is conjugate-symmetric and its Fourier transform is non-negative. Therefore, by Lemma IV.2 of~\cite{bendory2017bispectrum}, we conclude that $u [n]= 1$ for all $n$, or $\tilde{\rho} = \F \grave\rho$. 
Next, we substitute $\tilde{\rho} = \F \grave\rho$ in~\eqref{eq:eq} and get 
\begin{equation*} 
1   = (\F x\F x^*) \odot\overline{\tilde{X}},
\end{equation*}
where the equality holds entry-wise.
Since all entries of $\hat{x}$ are normalized, we conclude that $\tilde{X}= (\F x) (\F x)^*$. This concludes the proof.
\end{proof}
\end{thm}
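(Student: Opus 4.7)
\medskip

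\noindent\textbf{Proof proposal.} The plan is to exploit the observation that in the limit $N\to\infty$ with $\sigma$ fixed, the sample moment $\hatM^2_\ast$ converges to $M^2_\ast$, and the pair $(X,\tilde\rho)=((\F x)(\F x)^\ast,\F\grave\rho)$ is clearly feasible with zero objective. Since the objective is a norm, any optimizer $(\tilde X,\tilde \rho)$ must also achieve a zero objective, giving the identity
\begin{equation*}
C_{\tilde\rho}=M^2_\ast\odot\overline{\tilde X}=C_{\F\grave\rho}\odot(\F x\F x^\ast)\odot\overline{\tilde X},
\end{equation*}
where in the second equality we used the explicit form \eqref{eq:My_prime} of $M^2_\ast$ together with $|(\F x)[k]|=1$. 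The remainder of the argument is devoted to extracting $\tilde\rho$ and $\tilde X$ from this single algebraic identity.

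First I would divide entry-wise by $C_{\F\grave\rho}$, which is permissible because $\F\rho$ is non-vanishing. Defining $u[n]:=\tilde\rho[n]/(\F\grave\rho)[n]$, the quotient of the two sides is the circulant matrix $C_u$ on the left and $(\F x\F x^\ast)\odot\overline{\tilde X}$ on the right. The right-hand side is a Hadamard product of two positive semidefinite matrices, hence PSD by the Schur product theorem, so $C_u\succeq 0$; since $C_u$ is circulant, this is equivalent to $\F u\geq 0$ entry-wise. The conjugate symmetry constraint $\tilde\rho[k]=\overline{\tilde\rho[-k]}$ (together with the corresponding property of $\F\grave\rho$) transfers to $u$.

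Next I would evaluate $u$ at two specific indices. From $\tilde\rho[0]=1$ and $(\F\grave\rho)[0]=\overline{(\F\rho)[0]}=1$, we get $u[0]=1$. Looking at the $(1,0)$ entry of the identity $C_u=(\F x\F x^\ast)\odot\overline{\tilde X}$ and using $(\F x)[0]=(\F x)[1]=1$ together with the constraint $\tilde X[1,0]=1$, we obtain $u[1]=1$. At this point $u$ is conjugate-symmetric with $u[0]=u[1]=1$ and non-negative Fourier transform, so I would invoke Lemma~IV.2 of~\cite{bendory2017bispectrum} to conclude $u\equiv 1$, i.e.\ $\tilde\rho=\F\grave\rho$. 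Substituting back into the algebraic identity yields $(\F x\F x^\ast)\odot\overline{\tilde X}=\mathbf{1}\mathbf{1}^T$ entry-wise, and using $|(\F x)[k]|=1$ gives $\tilde X[i,j]=(\F x)[i]\overline{(\F x)[j]}$, which is the desired rank-one matrix.

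The main obstacle is the step where $u[1]$ is pinned down and the ``rigidity'' lemma from \cite{bendory2017bispectrum} is applied: without the normalization constraints $(\F x)[0]=(\F x)[1]=1$ and $\tilde X[1,0]=1$ there would be a one-parameter family of optimizers corresponding to the unavoidable cyclic-shift ambiguity of MRA, and the identification $u\equiv 1$ would fail. The reason these specific normalizations suffice is precisely the content of Lemma~IV.2: the combination ``PSD Fourier transform plus two consecutive entries equal to one'' eliminates every modulation except the trivial one, which is exactly how the shift ambiguity is broken in the Fourier phase.
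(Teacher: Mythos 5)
Your proposal is correct and follows essentially the same route as the paper's proof: exact moments in the limit force a zero objective, division by $C_{\F\grave\rho}$ yields the circulant $C_u$ with $\F u\ge 0$, the normalization constraints pin down $u[0]=u[1]=1$, Lemma~IV.2 of~\cite{bendory2017bispectrum} gives $u\equiv 1$, and back-substitution recovers $\tilde X=(\F x)(\F x)^\ast$. Your explicit appeal to the Schur product theorem to justify $C_u\succeq 0$ is a small but welcome clarification of a step the paper leaves implicit.
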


\subsection{Proof of Lemma~\ref{lem:lemma_EM}} \label{sec:proof_lemma_EM}

It is easy to check that the condition $q[\ell]> 0$ is automatically enforced whenever $w[\ell] > 0$ (otherwise the objective is $-\infty$). So the simplex constraint is equivalent to $\sum_{\ell=0}^{L-1}q[\ell] = 1$. The Lagrangian for this problem is the function:
\begin{align*}
\L(q,\nu) = \sum_{\ell=0}^{L-1} w[\ell] \log(q[\ell]) + \nu \left(1 - \sum_{\ell=0}^{L-1} q[\ell] \right),
\end{align*}
and the KKT conditions imply $q^*[\ell] = \frac{w[\ell]}{\nu^*}$.
Since $q$ is on the simplex, we conclude that $\nu^* = \sum_{\ell^\prime=0}^{L-1}w[\ell^\prime]$.

\end{document}